\definecolor{darkblue}{RGB}{0,29,119}
\definecolor{darkgreen}{rgb}{0.0,0.5,0.0}
\def\R {\mathds{R}}
\def\C {\mathds{C}}
\def\T {\mathds{T}}
\def\E {\mathds{E}}
\newcommand*{\distas}[1]{\mathbin{\overset{#1}{\kern\z@\sim}}}	
\newcommand*{\norm}[1]{\big\lVert#1\big\rVert}		
\newcommand*\abs[1]{\big|#1\big|}		
\DeclareMathOperator*{\trr}{tr}			
\newcommand*\tr[1]{\trr\big(#1\big)}
\DeclareMathOperator*{\vect}{vec}		
\newcommand*\vecc[1]{\vect\big(#1\big)}
\newcommand{\bigzero}{\mbox{\normalfont\Large O}}
\newtheoremstyle{custom}
                {\topsep}
                {\topsep}
                {\itshape}
                {}
                {\bfseries}
                {.}
                {\newline}
                {}
\theoremstyle{custom}
\newtheorem{proposition}{Proposition}[section]
\newtheorem{lemma}{Lemma}[section]
\newtheorem{definition}{Definition}[section]
\newtheorem{example}{Example}[section]
\newtheorem{remark}{Remark}[section]
\title{Bayesian Dynamic Tensor Regression\protect\thanks{
We are grateful to Federico Bassetti, Sylvia Fr{\"u}hwirth-Schnatter, Christian Gouriéroux, S{\o}ren Johansen, Siem Jan Koopman, Gary Koop, André Lucas, Alain Monfort, Peter Phillips, Christian Robert, Mike West, for their comments and suggestions. Also, we thank the seminar participants at: CREST, University of Southampton, Vrije University of Amsterdam, London School of Economics, Maastricht University, Polytechnic University of Milan.
Moreover, we thank the conference and workshop participants at: ``ICEEE 2019'' in Lecce, 2019, ``CFENetwork 2018'' in Pisa, 2018, ``29th EC2 conference'' in Rome, 2018, ``12th RCEA Annual meeting'' in Rimini, 2018, ``8th MAF'' in Madrid, 2018, ``CFENetwork 2017'' in London, 2017, ``ICEEE 2017'' in Messina, 2017, ``3rd  Vienna Workshop on High-dimensional Time Series in Macroeconomics and Finance'' in Wien, 2017, ``BISP10'' in Milan, 2017, ``ESOBE'' in Venice, 2016, ``CFENetwork'' in Seville, 2016, and the ``SIS Intermediate Meeting'' of the Italian Statistical Society in Florence, 2016.
This research used the SCSCF multiprocessor cluster system and is part of the project Venice Center for Risk Analytics (VERA) at Ca' Foscari University of Venice.
}
}
\author[1]{Monica Billio\thanks{e-mail: \href{mailto: billio@unive.it}{billio@unive.it}}}
\author[1]{Roberto Casarin\thanks{e-mail: \href{mailto: r.casarin@unive.it}{r.casarin@unive.it}}}
\author[1,2]{Matteo Iacopini\thanks{e-mail: \href{mailto: matteo.iacopini@unive.it}{matteo.iacopini@unive.it}}}
\author[3]{Sylvia Kaufmann\thanks{e-mail: \href{mailto: sylvia.kaufmann@szgerzensee.ch}{sylvia.kaufmann@szgerzensee.ch}}}
\affil[1]{Ca' Foscari University of Venice}
\affil[2]{Scuola Normale Superiore of Pisa}
\affil[3]{Study Center Gerzensee, Foundation of the Swiss National Bank}
\date{}
\begin{document}

\maketitle

\begin{abstract}
Tensor-valued data are becoming increasingly available in economics and this calls for suitable econometric tools.
We propose a new dynamic linear model for tensor-valued response variables and covariates that encompasses some well-known econometric models as special cases. Our contribution is manifold. First, we define a tensor autoregressive process (ART), study its properties and derive the associated impulse response function. Second, we exploit the PARAFAC low-rank decomposition for providing a parsimonious parametrization and to incorporate sparsity effects. We also contribute to inference methods for tensors by developing a Bayesian framework which allows for including extra-sample information and for introducing shrinking effects. We apply the ART model to time-varying multilayer networks of international trade and capital stock and study the propagation of shocks across countries, over time and between layers.
\end{abstract}

\textbf{Keywords:} Tensor calculus; multidimensional autoregression; Bayesian statistics; sparsity; dynamic networks; international trade

\section{Introduction}
The increasing availability of long time series of complex-structured data, such as multidimensional tables (\cite{Balazsi15MultidimensionalPanel}, \cite{CarvWest07DynMatNormGraph}), multidimensional panel data (\cite{Bayer16ECTA_Dynamic_Demand_House}, \cite{Baltagi15Hedonic_House_price}, \cite{Davis02Multi-way_error_Panel}, \cite{Shin19Multi-dim_Heterog_Panel}), multilayer networks (\cite{Aldasoro16MultiplexNetwork}, \cite{Poledna15MultiplexNetworkBanks_SystemicRisk}), EEG (\cite{LiZhang17}), neuroimaging (\cite{Zhouetal13}) has put forward some limitations of the existing multivariate econometric models. Tensors, i.e. multidimensional arrays, are the natural class where this kind of complex data belongs.


A na\"ive approach to model tensors relies on reshaping them into lower-dimensional objects (e.g., vectors and matrices) which can then be easily handled using standard multivariate statistical tools. However, mathematical representations of tensor-valued data in terms of vectors have non-negligible drawbacks, such as the difficulty of accounting for the intrinsic structure of the data (e.g., cells of a matrix representing a geographical map or pairwise relations, contiguous pixels in an image). Neglecting this information in the modelling might lead to inefficient estimation and misleading results.
Tensor-valued data entries are highly likely to depend on contiguous cells (within and between modes) and collapsing the data into a vector destroys this information. Thus, statistical approaches based on vectorization are unsuited for modelling tensor-valued data.

Tensors have been recently introduced in statistics and machine learning (e.g., \cite{Hackbusch12Tensor_book}, \cite{Kroonenberg08AppliedMultiwayDataAnalysis}) and provide a fundamental background for efficient algorithms in \textit{Big Data} handling (e.g., \cite{Cichocki14BigData_Tensor}).
However, a compelling statistical approach extending results for scalar random variables to multidimensional random objects beyond dimension 2 (i.e., matrix-valued random variables, see \cite{GuptaNagar99MatrixDistributions}) is lacking and constitutes a promising field of research.

The development of novel statistical methods able to deal directly with tensor-valued data (i.e., without relying on vectorization) is currently an open field of research in statistics and econometrics, where such kind of data is becoming increasingly available.
The main purpose of this article is to contribute to this growing literature by proposing an extension of standard multivariate econometric regression models to tensor-valued response and covariates.

Matrix-valued statistical models have been widely employed in time series econometrics over the past decades, especially for state space representations (\cite{HarrisonWest99BayesForecastDLM}), dynamic linear models (\cite{CarvWest07DynMatNormGraph}, \cite{Wang09Bayes_matrixNormalGraph}), Gaussian graphical models (\cite{Carvetal07HIW_Graph}), stochastic volatility (\cite{Uhlig97Bayes_VAR_SV}, \cite{Gourieroux09Wishart_AR}, \cite{Golosnoy12conditional_Wishart_AR}), classification of longitudinal datasets (\cite{Viroli11MatNorm}), models for network data (\cite{DuranteDunson14BNP_DynNet}, \cite{Zhu17Network_VAR}, \cite{Zhu19Network_Quantile_regression}) and factor models (\cite{Chen19Matrix_DynamicFactor}).

\cite{DingCook18MatrixReg} proposed a bilinear multiplicative matrix regression model, which in vector form becomes a VAR($1$) with restrictions on the covariance matrix. The main shortcoming in using bilinear models is the difficulty in introducing sparsity. Imposing zero restrictions on a subset of the reduced form coefficients implies zero restrictions on the structural coefficients.

Recent papers dealing with tensor-valued data include \cite{Zhouetal13} and \cite{XuZhangetal13}, who proposed a generalized linear model to predict a scalar real or binary outcome by exploiting the tensor-valued covariate. Instead, \cite{Zhao13Tensor_BNP}, \cite{Zhao14Tensor_BNP} and \cite{ImaizumiHayashi16Tensor_BNP} followed a Bayesian nonparametric approach for regressing a scalar on tensor-valued covariate.
Another stream of the literature considers regression models with tensor-valued response and covariates. In this framework, \cite{LiZhang17} proposed a model for cross-sectional data where response and covariates are tensors, and performed sparse estimation by means of the envelope method and iterative maximum likelihood. \cite{Hoff15} exploited a multidimensional analogue of the matrix SVD (the Tucker decomposition) to define a parsimonious tensor-on-tensor regression.

We propose a new dynamic linear regression model for tensor-valued response and covariates. We show that our framework admits as special cases Bayesian VAR models (\cite{Sims98BayesDynamicMultivariate}), Bayesian panel VAR models (\cite{CanovaCiccarelli04PanelVAR}) and Multivariate Autoregressive Index models (i.e. MAI, see \cite{Carrieroetal16Multivariate_AR_Index}), as well as univariate and matrix regression models. Furthermore, we exploit a suitable tensor decomposition for providing a parsimonious parametrization, thus making inference feasible in high-dimensional models.
One of the areas where these models can find application is network econometrics.

Most statistical models for network data are static (\cite{dePaula17Econometrics_Networks}), whereas dynamic models maybe more 	adequate for many applications (e.g., banking) where data on network evolution are becoming available.
Few attempts have been made to model time-varying networks (e.g., \cite{Holme12TemporalNetworks}, \cite{Kostakos09Temporal_Graphs}, \cite{Anacleto17DynamicChainGraph_NetworkTimeSeries}), and most of the contributions have focused on providing a representation and a description of temporally evolving graphs.
We provide an original study of time-varying economic and financial networks and show that our model can be successfully used to carry out impulse response analysis in this multidimensional setting.

The remainder of this paper is organized as follows. Section \ref{sec:tensor_calculus_model} provides an introduction to tensor algebra and presents the new modelling framework. Section  \ref{sec:bayesian_inference} discusses parametrization strategies and a Bayesian inference procedure. Section \ref{sec:applications} provides an empirical application and section  \ref{sec:conclusions} gives some concluding remarks. Further details and results are provided in the supplementary material.

\section{A Dynamic Tensor Model} \label{sec:tensor_calculus_model}
In this section, we present a dynamic tensor regression model and discuss some of its properties and special cases.
We review some notions of multilinear algebra which will be used in this paper, and refer the reader to Appendix \ref{sec:apdx_tensor_calculus} and the supplement for further details.

\subsection{Tensor Calculus and Decompositions}
The use of tensors is well established in physics and mechanics (e.g., see \cite{Aris12Tensor_Mechanics} and \cite{Abraham12Tensor_Physics}), but few contributions have been made beyond these disciplines. For a general introduction to the algebraic properties of tensor spaces, see \cite{Hackbusch12Tensor_book}.
Noteworthy introductions to operations on tensors and tensor decompositions are \cite{LeeChi16} and \cite{KoldaBader09}, respectively.

A $N$-order real-valued tensor is a $N$-dimensional array $\mathcal{X} = (\mathcal{X}_{i_1,\ldots,i_N}) \in \R^{I_1\times\ldots\times I_N}$ with entries $\mathcal{X}_{i_1,\ldots,i_N}$ with $i_n =1,\ldots,I_n$ and $n=1,\ldots,N$. The \textit{order} is the number of dimensions (also called modes). Vectors and matrices are examples of 1- and 2-order tensors, respectively.
In the rest of the paper we will use lower-case letters for scalars, lower-case bold letters for vectors, capital letters for matrices and calligraphic capital letters for tensors.
We use the symbol ``$:$'' to indicate selection of all elements of a given mode of a tensor.
The mode-$k$ \textit{fiber} is the vector obtained by fixing all but the $k$-th index of the tensor, i.e. the equivalent of rows and columns in a matrix. Tensor \textit{slices} and their generalizations, are obtained by keeping fixed all but two or more dimensions of the tensor.

It can be shown that the set of $N$-order tensors $\R^{I_1\times\ldots\times I_N}$ endowed with the standard addition $\mathcal{A} + \mathcal{B} = (\mathcal{A}_{i_1,\ldots,i_N} + \mathcal{B}_{i_1,\ldots,i_N})$ and scalar multiplication $\alpha \mathcal{A} = (\alpha \mathcal{A}_{i_1,\ldots,i_N})$, with $\alpha \in \R$, is a vector space.
We now introduce some operators on the set of real tensors, starting with the \textit{contracted product}, which generalizes the matrix product to tensors. The contracted product between $\mathcal{X}\in\R^{I_1 \times\ldots\times I_M}$ and $\mathcal{Y}\in\R^{J_1 \times\ldots\times J_N}$ with $I_M = J_1$, is denoted by $\mathcal{X} \times_M \mathcal{Y}$ and yields a $(M+N-2)$-order tensor $\mathcal{Z}\in\R^{I_1 \times\ldots\times I_{M-1} \times J_1 \times\ldots\times J_{N-1}}$, with entries
\begin{equation*}
\mathcal{Z}_{i_1,\ldots,i_{M-1},j_2,\ldots,j_{N}} =
(\mathcal{X}\times_M \mathcal{Y})_{i_1,\ldots,i_{M-1},j_2,\ldots,j_{N}} = 
\sum_{i_M=1}^{I_M} \mathcal{X}_{i_1,\ldots,i_{M-1},i_M} \mathcal{Y}_{i_M,j_2,\ldots,j_N}.
\label{eq_apdx_tensor_moden_tensor}
\end{equation*}
When $\mathcal{Y} = \mathbf{y}$ is a vector, the contracted product is also called \textit{mode-$M$ product}.
We define with $\mathcal{X} \bar{\times}_{N} \mathcal{Y}$ a sequence of contracted products between the $(K+N)$-order tensor $\mathcal{X} \in \R^{J_1\times \ldots \times J_K \times I_1 \times \ldots \times I_N}$ and the $(N+M)$-order tensor $\mathcal{Y} \in \R^{I_1 \times \ldots \times I_N \times H_1\times \ldots \times H_M}$. Entry-wise, it is defined as
\begin{equation*}
\big( \mathcal{X} \bar{\times}_{N} \mathcal{Y} \big)_{j_1,\ldots,j_K,h_1,\ldots,h_M} = \sum_{i_1=1}^{I_1} \ldots \sum_{i_N=1}^{I_N} \mathcal{X}_{j_1,\ldots,j_K,i_1,\ldots,i_N} \mathcal{Y}_{i_1,\ldots,i_N,h_1,\ldots,h_M}.
\end{equation*}
Note that the contracted product is not commutative.
The \textit{outer product} $\circ$ between a $M$-order tensor $\mathcal{X}\in\R^{I_1 \times\ldots\times I_M}$ and a $N$-order tensor $\mathcal{Y}\in\R^{J_1 \times\ldots\times J_N}$ is a $(M+N)$-order tensor $\mathcal{Z}\in\R^{I_1 \times\ldots\times I_M \times J_1 \times\ldots\times J_N}$ with entries $\mathcal{Z}_{i_1,\ldots,i_M,j_1,\ldots,j_N} = 
(\mathcal{X} \circ \mathcal{Y})_{i_1,\ldots,i_M,j_1,\ldots,j_N} = 
\mathcal{X}_{i_1,\ldots,i_M} \mathcal{Y}_{j_1,\ldots,j_N}$.

Tensor decompositions allow to represent a tensor as a function of lower dimensional variables, such as matrices of vectors, linked by suitable multidimensional operations.
In this paper, we use the low-rank parallel factor (PARAFAC) decomposition, which allows to represent a $N$-order tensor in terms of a collection of vectors (called marginals). A $N$-order tensor is of rank 1 when it is the outer product of $N$ vectors.
Let $R$ be the rank of the tensor $\mathcal{X}$, that is minimum number of rank-1 tensors whose linear combination yields $\mathcal{X}$. 
The PARAFAC($R$) decomposition is rank-$R$ decomposition which represents a $N$-order tensor $\mathcal{B}$ as a finite sum of $R$ rank-$1$ tensors $\mathcal{B}_r$ defined by the outer products of $N$ vectors (called marginals) $\boldsymbol{\beta}_j^{(r)} \in\R^{I_j}$
\begin{equation}
\mathcal{B} = \sum_{r=1}^R \mathcal{B}_r = \sum_{r=1}^R \boldsymbol{\beta}_1^{(r)} \circ \ldots \circ \boldsymbol{\beta}_N^{(r)}, \qquad \mathcal{B}_r = \boldsymbol{\beta}_1^{(r)} \circ \ldots \circ \boldsymbol{\beta}_N^{(r)}.
\label{eq:PARAFAC_demposition}
\end{equation}
The \textit{mode-$n$ matricization} (or unfolding), denoted by $\mathbf{X}_{(n)} = \operatorname{mat}_{n}(\mathcal{X})$, is the operation of transforming a $N$-dimensional array $\mathcal{X}$ into a matrix. It consists in re-arranging the mode-$n$ fibers of the tensor to be the columns of the matrix $\mathbf{X}_{(n)}$, which has size $I_n\times I_{(-n)}^*$ with $I_{(-n)}^*=\prod_{i\neq n} I_i$. The mode-$n$ matricization of $\mathcal{X}$ maps the $(i_1,\ldots,i_N)$ element of $\mathcal{X}$ to the $(i_n,j)$ element of $\mathbf{X}_{(n)}$, where $j = 1+ \sum_{m\neq n} (i_m-1) \prod_{p\neq n}^{m-1} I_p$.
For some numerical examples, see \cite{KoldaBader09} and Appendix \ref{sec:apdx_tensor_calculus}. The mode-$1$ unfolding is of interest for providing a visual representation of a tensor: for example, when $\mathcal{X}$ be a 3-order tensor, its mode-$1$ matricization $\mathbf{X}_{(1)}$ is a $I_1 \times I_2 I_3$ matrix obtained by horizontally stacking the mode-$(1,2)$ slices of the tensor.
The \textit{vectorization} operator stacks all the elements in direct lexicographic order, forming a vector of length $I^*=\prod_{i} I_i$. Other orderings are possible, as long as it is consistent across the calculations.
The mode-$n$ matricization can also be used to vectorize a tensor $\mathcal{X}$, by exploiting the relationship $\vecc{\mathcal{X}} = \vecc{\mathbf{X}_{(1)}}$, where $\vecc{\mathbf{X}_{(1)}}$ stacks vertically into a vector the columns of the matrix $\mathbf{X}_{(1)}$.
Many product operations have been defined for tensors (e.g., see \cite{LeeChi16}), but here we constrain ourselves to the operators used in this work.
For the ease of notation, we will use the multiple-index summation for indicating the sum over all the corresponding indices.

\begin{remark}
Consider a $N$-order tensor $\mathcal{B} \in \R^{I_1\times\ldots\times I_N}$ with a PARAFAC(R) decomposition (with marginals $\boldsymbol{\beta}_j^{(r)}$), a $(N-1)$-order tensor $\mathcal{Y}\in\R^{I_1\times\ldots\times I_{N-1}}$ and a vector $\mathbf{x}\in\R^{I_N}$. Then
\begin{equation*}
\mathcal{Y} = \mathcal{B} \times_N \mathbf{x} \iff \vecc{\mathcal{Y}} = \mathbf{B}_{(N)}' \mathbf{x} \iff \vecc{\mathcal{Y}}' = \mathbf{x}' \mathbf{B}_{(N)}
\end{equation*}
where $\mathbf{B}_{(N)} = \sum_{r=1}^R \boldsymbol{\beta}_N^{(r)} \vecc{\boldsymbol{\beta}_1^{(r)} \circ \ldots \circ \boldsymbol{\beta}_{N-1}^{(r)}}'$.
\end{remark}

\subsection{A General Dynamic Tensor Model} \label{sec:general_model}
Let $\mathcal{Y}_t$ be a $(I_1\times\ldots\times I_N)$-dimensional tensor of endogenous variables, $\mathcal{X}_t$ a $(J_1\times\ldots\times J_M)$-dimensional tensor of covariates, and $S_y = \bigtimes_{j=1}^N \lbrace 1,\ldots, I_j \rbrace \subset \mathbb{N}^N$ and $S_x = \bigtimes_{j=1}^M \lbrace 1,\ldots, J_j \rbrace \subset \mathbb{N}^M$ sets of $n$-tuples of integers. We define the autoregressive tensor model of order $p$, ART($p$), as the system of equations
\begin{equation}
\mathcal{Y}_{\mathbf{i},t} = \mathcal{A}_{\mathbf{i},0} + \sum_{j=1}^p \sum_{\mathbf{k} \in S_y} \mathcal{A}_{\mathbf{i},\mathbf{k},j} \mathcal{Y}_{\mathbf{k},t-j} + \sum_{\mathbf{m} \in S_x} \mathcal{B}_{\mathbf{i},\mathbf{m}} \mathcal{X}_{\mathbf{m},t} + \mathcal{E}_{\mathbf{i},t}, \quad \mathcal{E}_{\mathbf{i},t} \distas{iid} \mathcal{N}(0,\sigma_\mathbf{i}^2),
\label{eq:model_general_entry}
\end{equation}
$t=1,2,\ldots$, with given initial conditions $\mathcal{Y}_{-p+1},\ldots,\mathcal{Y}_0 \in \R^{I_1\times\ldots\times I_N}$, where $\mathbf{i} = (i_1,\ldots,i_N) \in S_y$ and $\mathcal{Y}_{\mathbf{i},t}$ is the $\mathbf{i}$-th entry of $\mathcal{Y}_t$.
The general model in eq. \eqref{eq:model_general_entry} allows for measuring the effect of all the cells of $\mathcal{X}_t$ and of the lagged values of $\mathcal{Y}_t$ on each endogenous variable.

We give two equivalent compact representations of the multilinear system \eqref{eq:model_general_entry}. The first one is used for studying the stability property of the process and is obtained through the contracted product that provides a natural setting for multilinear forms, decompositions and inversions. From \eqref{eq:model_general_entry} one gets the tensor equation
\begin{equation}
\mathcal{Y}_{t} = \mathcal{A}_{0} + \sum_{j=1}^p \widetilde{\mathcal{A}}_{j} \bar{\times}_{N} \mathcal{Y}_{t-j} + \widetilde{\mathcal{B}} \bar{\times}_{M} \mathcal{X}_{t} + \mathcal{E}_{t}, \quad \mathcal{E}_{t} \distas{iid} \mathcal{N}_{I_1,\ldots,I_N}(\mathcal{O},\Sigma_1,\ldots,\Sigma_N),
\label{eq:model_general_contracted_product}
\end{equation}
where $\bar{\times}_{a,b}$ is a shorthand notation for the contracted product $\times_{a+1\ldots a+b}^{1\ldots a}$, $\widetilde{\mathcal{A}}_0$ is a $N$-order tensor of the same size as $\mathcal{Y}_t$, $\widetilde{\mathcal{A}}_j$, $j=1,\ldots,p$, are $2N$-order tensors of size $(I_1\times \ldots \times I_N \times I_1\times \ldots \times I_N)$ and $\mathcal{B}$ is a $(N+M)$-order tensor of size $(I_1\times \ldots \times I_N \times J_1\times \ldots \times J_M)$.
The error term $\mathcal{E}_t$ follows a $N$-order tensor normal distribution (\cite{Ohlson13TensorNormal}) with probability density function
\begin{equation}
f_\mathcal{E}(\mathcal{E}) = \frac{\exp\Big( -\frac{1}{2} (\mathcal{E}-\mathcal{M}) \bar{\times}_{N,0} \big( \circ_{j=1}^N \Sigma_j^{-1} \big) \bar{\times}_{N,0} (\mathcal{E}-\mathcal{M}) \Big)}{(2\pi)^{I^*/2} \prod_{j=1}^N \abs{\Sigma_j}^{I_{-j}^*/2}},
\label{eq:tensor_normal_dsitrbution}
\end{equation}
where $I^* = \prod_i I_i$ and $I_{-i}^* = \prod_{j\neq i} I_j$, $\mathcal{E}$ and $\mathcal{M}$ are $N$-order tensors of size $I_1\times\ldots\times I_N$. Each covariance matrix $\Sigma_j\in\R^{I_j\times I_j}$, $j=1,\ldots,N$, accounts for the dependence along the corresponding mode of $\mathcal{E}$.

The second representation of the ART($p$) in eq. \eqref{eq:model_general_entry} is used for developing inference.
Let $\mathcal{K}_m$ be the $(I_1\times \ldots \times I_N \times m)$-dimensional commutation tensor such that $\mathcal{K}_m^\sigma \bar{\times}_{N,0} \mathcal{K}_m= \mathbf{I}_m$, where $\mathcal{K}_m^\sigma$ is the tensor obtained by flipping the modes of $\mathcal{K}_m$.
Define the $(I_1\times\ldots\times I_N \times I^*)$-dimensional tensor $\mathcal{A}_j = \widetilde{\mathcal{A}}_j \bar{\times}_{N} \mathcal{K}_{I^*}$ and the $(I_1\times\ldots\times I_N \times J^*)$-dimensional tensor $\mathcal{B} = \widetilde{\mathcal{B}} \bar{\times}_{N} \mathcal{K}_{J^*}$, with $J^* = \prod_j J_j$. We obtain $\mathcal{A}_j \times_{N+1} \vecc{\mathcal{Y}_{t-j}} = \widetilde{\mathcal{A}}_j \bar{\times}_{N} \mathcal{Y}_{t-j}$ and the compact representation
\begin{equation}
\begin{split}
\mathcal{Y}_t & = \mathcal{A}_0 + \sum_{j=1}^p \mathcal{A}_j \times_{N+1} \vecc{\mathcal{Y}_{t-j}} + \mathcal{B} \times_{N+1} \vecc{\mathcal{X}_t} + \mathcal{E}_t, \\
\mathcal{E}_t & \distas{iid} \mathcal{N}_{I_1,\ldots,I_N}(\mathcal{O},\Sigma_1,\ldots,\Sigma_N).
\end{split}
\label{eq:model_general}
\end{equation}

Let $\T = (\R^{I_1\times\ldots\times I_N\times I_1\times\ldots\times I_N}, \bar{\times}_{N})$ be the space of $(I_1\times\ldots\times I_N\times I_1\times\ldots\times I_N)$-dimensional tensors endowed with the contracted product $\bar{\times}_{N}$.
We define the identity tensor $\mathcal{I} \in \T$ to be the neutral element of $\bar{\times}_{N}$, that is the tensor whose entries are $\mathcal{I}_{i_1,\ldots,i_N,i_{N+1},\ldots,i_{2N}} = 1$ if $i_k = i_{k+N}$ for all $k=1,\ldots,N$ and $0$ otherwise.
The inverse of a tensor $\mathcal{A} \in \T$ is the tensor $\mathcal{A}^{-1} \in \T$ satisfying $\mathcal{A}^{-1} \bar{\times}_{N} \mathcal{A} = \mathcal{A} \bar{\times}_{N} \mathcal{A}^{-1} = \mathcal{I}$.
A complex number $\lambda \in \C$ and a nonzero tensor $\mathcal{X} \in \R^{I_1\times \ldots\times I_N}$ are called eigenvalue and eigenvector of the tensor $\mathcal{A} \in \T$ if they satisfy the multilinear equation $\mathcal{A} \bar{\times}_{N} \mathcal{X} = \lambda \mathcal{X}$. We define the spectral radius $\rho(\mathcal{A})$ of $\mathcal{A}$ to be the largest modulus of the eigenvalues of $\mathcal{A}$.
We define a stochastic process to be weakly stationary if the first and second moment of its finite dimensional distributions are finite and constant in $t$.
Finally, note that it is always possible to rewrite an ART($p$) process as a ART(1) process on an augmented state space, by stacking the endogenous tensors along the first mode. Thus, without loss of generality, we focus on the case $p=1$.
We use the definition of inverse tensor, spectral radius and the convergence of power series of tensors to prove the following result.

\begin{lemma} \label{lemma:ARTp_ART1}
Every $(I_1 \times I_2 \times \ldots \times I_N \times I_1 \times I_2 \times \ldots \times I_N)$-dimensional ART($p$) process $\mathcal{Y}_t = \sum_{k=1}^p \mathcal{A}_{k} \bar{\times}_N \mathcal{Y}_{t-j} + \mathcal{E}_t$ can be rewritten as a $(pI_1 \times I_2 \times \ldots \times I_N \times pI_1 \times I_2 \times \ldots \times I_N)$-dimensional ART(1) process $\underline{\mathcal{Y}}_t = \underline{\mathcal{A}} \bar{\times}_N \underline{\mathcal{Y}}_{t-1} + \underline{\mathcal{E}}_t$.
\end{lemma}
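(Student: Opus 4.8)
The plan is to adapt the classical companion-form argument for VAR($p$) processes to the tensor setting, stacking the $p$ consecutive lags of $\mathcal{Y}_t$ along the first mode. First I would define the augmented state $\underline{\mathcal{Y}}_t$ as the $(pI_1 \times I_2 \times \ldots \times I_N)$-dimensional tensor obtained by concatenating $\mathcal{Y}_t, \mathcal{Y}_{t-1}, \ldots, \mathcal{Y}_{t-p+1}$ along mode $1$. Concretely, I would index the enlarged first mode by a pair $(b, i_1)$ with $b \in \lbrace 1,\ldots,p \rbrace$ and $i_1 \in \lbrace 1,\ldots,I_1 \rbrace$ through the map $(b,i_1) \mapsto (b-1)I_1 + i_1$, so that the $b$-th block of $\underline{\mathcal{Y}}_t$ is exactly $\mathcal{Y}_{t-b+1}$ and, consequently, the $c$-th block of $\underline{\mathcal{Y}}_{t-1}$ is $\mathcal{Y}_{t-c}$.

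Next I would construct the companion tensor $\underline{\mathcal{A}}$ of size $(pI_1 \times I_2 \times \ldots \times I_N \times pI_1 \times I_2 \times \ldots \times I_N)$ block by block, a block being indexed by an output block $b$ and an input block $c$ and being itself a $(I_1 \times \ldots \times I_N \times I_1 \times \ldots \times I_N)$-dimensional tensor. I would place the original coefficients on the top block row, the identity tensor on the first subdiagonal, and the zero tensor everywhere else:
\[
\underline{\mathcal{A}}_{(b,\cdot),(c,\cdot)} = \begin{cases} \mathcal{A}_c & b = 1, \\ \mathcal{I} & b \geq 2,\ c = b-1, \\ \mathcal{O} & \text{otherwise}, \end{cases}
\]
with $\mathcal{I}$ the neutral element of $\bar{\times}_{N}$ introduced above and $\mathcal{O}$ the zero tensor. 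Correspondingly I would set the augmented innovation $\underline{\mathcal{E}}_t$ to have first block $\mathcal{E}_t$ and all remaining blocks equal to zero.

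The core of the proof is then to verify that $\underline{\mathcal{Y}}_t = \underline{\mathcal{A}} \bar{\times}_{N} \underline{\mathcal{Y}}_{t-1} + \underline{\mathcal{E}}_t$ reproduces the original ART($p$) dynamics, which I would do by evaluating the contracted product block by block. For the first output block, the structure of $\underline{\mathcal{A}}$ yields $\sum_{c=1}^p \mathcal{A}_c \bar{\times}_{N} \mathcal{Y}_{t-c} + \mathcal{E}_t$, which is precisely the right-hand side of the ART($p$) equation and hence equals $\mathcal{Y}_t$, i.e. block $1$ of $\underline{\mathcal{Y}}_t$. For each output block $b \geq 2$, only the identity tensor in position $(b,b-1)$ contributes, so the defining property $\mathcal{I} \bar{\times}_{N} \mathcal{Y}_{t-(b-1)} = \mathcal{Y}_{t-(b-1)}$ returns exactly block $b$ of $\underline{\mathcal{Y}}_t$. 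Matching all $p$ blocks establishes the equivalence.

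The step I expect to require the most care is the index bookkeeping for the contracted product once the first mode has been enlarged to $pI_1$. Since $\bar{\times}_{N}$ contracts over all $N$ modes simultaneously, I must confirm that the partition of the enlarged first-mode index into $p$ blocks is compatible with the contraction, i.e. that the sum over the contracted indices factorizes cleanly as $\sum_{c} \big( \underline{\mathcal{A}}_{(b,\cdot),(c,\cdot)} \bar{\times}_{N} (\text{block } c) \big)$. Once this block-wise factorization of $\bar{\times}_{N}$ is verified, the remaining checks are routine, and the converse direction (recovering the ART($p$) recursion from the ART(1) one) follows by reversing the construction.
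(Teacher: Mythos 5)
Your proposal is correct and follows essentially the same companion-form construction as the paper's proof: the $p$ lags stacked along the first mode, the coefficient tensors $\mathcal{A}_1,\ldots,\mathcal{A}_p$ placed in the first block row of $\underline{\mathcal{A}}$, identity tensors $\mathcal{I}$ on the first block subdiagonal, and zeros elsewhere. The only divergence is in the innovation term: you zero-pad $\underline{\mathcal{E}}_t$ below the first block, which is the standard and, as your block-$b \geq 2$ verification shows, the necessary choice, whereas the paper stacks the lagged errors $\mathcal{E}_{t-k}$ into the lower blocks of $\underline{\mathcal{E}}_t$ --- an apparent slip, since nonzero lower blocks would break the shift equations $\mathcal{I} \bar{\times}_N \mathcal{Y}_{t-b+1} = \mathcal{Y}_{t-b+1}$ that your proof (unlike the paper's, which asserts the rewriting without the block-wise check) verifies explicitly.
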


\begin{proposition}[Stationarity] \label{proposition:ART_stationatity}
If $\rho(\widetilde{\mathcal{A}}_1) < 1$ and the process $\mathcal{X}_t$ is weakly stationary, then the ART process in eq. \eqref{eq:model_general_contracted_product}, with $p=1$, is weakly stationary and admits the representation
\begin{equation*}
\mathcal{Y}_t = (\mathcal{I} -\widetilde{\mathcal{A}}_1)^{-1} \bar{\times}_{N} \widetilde{\mathcal{A}}_0 + \sum_{k=0}^\infty \widetilde{\mathcal{A}}_1^k \bar{\times}_{N} \widetilde{\mathcal{B}} \bar{\times}_{M} \mathcal{X}_{t-k} + \sum_{k=0}^\infty \widetilde{\mathcal{A}}_1^k \bar{\times}_{N} \mathcal{E}_{t-k}.
\end{equation*}
\end{proposition}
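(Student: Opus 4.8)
The plan is to reduce the tensor recursion to a classical first-order vector autoregression with exogenous regressors and then import the standard stationarity argument. Writing $I^* = \prod_i I_i$, the vectorization map $\mathcal{X} \mapsto \vecc{\mathcal{X}}$ identifies $\R^{I_1\times\ldots\times I_N}$ with $\R^{I^*}$ and the space $\T$ with the matrix algebra $(\R^{I^*\times I^*},\cdot)$: to each $\mathcal{A}\in\T$ there corresponds a matrix $A$ with $\vecc{\mathcal{A}\bar{\times}_N\mathcal{X}} = A\vecc{\mathcal{X}}$ for all $\mathcal{X}$, the product $\bar{\times}_N$ corresponds to matrix multiplication, the identity tensor $\mathcal{I}$ to $\mathbf{I}_{I^*}$, and the tensor inverse to the matrix inverse. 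Under this identification the multilinear eigenvalue equation $\mathcal{A}\bar{\times}_N\mathcal{X}=\lambda\mathcal{X}$ becomes $A\vecc{\mathcal{X}}=\lambda\vecc{\mathcal{X}}$, so the eigenvalues of $\mathcal{A}$ coincide with those of $A$ and $\rho(\widetilde{\mathcal{A}}_1)$ equals the ordinary matrix spectral radius of $A_1$, the matricization of $\widetilde{\mathcal{A}}_1$.

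First I would vectorize eq. \eqref{eq:model_general_contracted_product} with $p=1$, obtaining the VARX(1)
\begin{equation*}
\vecc{\mathcal{Y}_t} = \vecc{\widetilde{\mathcal{A}}_0} + A_1 \vecc{\mathcal{Y}_{t-1}} + \vecc{\widetilde{\mathcal{B}}\bar{\times}_M\mathcal{X}_t} + \vecc{\mathcal{E}_t}.
\end{equation*}
Using $\rho(A_1)=\rho(\widetilde{\mathcal{A}}_1)<1$, I would invoke Gelfand's formula $\lim_k \norm{A_1^k}^{1/k}=\rho(A_1)$ to fix $r$ with $\rho(A_1)<r<1$ and conclude $\norm{A_1^k}\le C r^k$ for some constant $C$ and all $k$. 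This geometric decay yields two facts: the Neumann series $\sum_{k=0}^\infty A_1^k$ converges to $(\mathbf{I}_{I^*}-A_1)^{-1}$, so that $\mathbf{I}_{I^*}-A_1$ is invertible and, translating back, $\mathcal{I}-\widetilde{\mathcal{A}}_1$ is invertible with $\sum_{k=0}^\infty\widetilde{\mathcal{A}}_1^k=(\mathcal{I}-\widetilde{\mathcal{A}}_1)^{-1}$; and $A_1^K\to 0$.

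I would then iterate the recursion $K$ times,
\begin{equation*}
\vecc{\mathcal{Y}_t} = \sum_{k=0}^{K-1} A_1^k \vecc{\widetilde{\mathcal{A}}_0} + \sum_{k=0}^{K-1} A_1^k \vecc{\widetilde{\mathcal{B}}\bar{\times}_M\mathcal{X}_{t-k}} + \sum_{k=0}^{K-1} A_1^k \vecc{\mathcal{E}_{t-k}} + A_1^K \vecc{\mathcal{Y}_{t-K}},
\end{equation*}
and let $K\to\infty$. The deterministic sum converges to $(\mathbf{I}_{I^*}-A_1)^{-1}\vecc{\widetilde{\mathcal{A}}_0}$. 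Because the $\mathcal{E}_t$ are iid with finite variance and $\mathcal{X}_t$ is weakly stationary (hence has second moments bounded uniformly in $t$), the bound $\norm{A_1^k}\le C r^k$ shows the two stochastic partial sums are Cauchy in $L^2$ and therefore converge in mean square, while the remainder $A_1^K\vecc{\mathcal{Y}_{t-K}}$ vanishes in mean square. Devectorizing reproduces the claimed representation. Weak stationarity then follows by computing the first two moments from the MA($\infty$) form: $\E[\mathcal{X}_t]$ and the autocovariances of $\mathcal{X}_t$ are constant in $t$ and $\mathcal{E}_t$ is iid, so $\E[\mathcal{Y}_t]$ and $\mathrm{Cov}(\mathcal{Y}_t,\mathcal{Y}_{t+h})$ depend on $t$ only through $h$.

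The main obstacle is the first paragraph: rigorously establishing the isomorphism and, in particular, that the spectral radius defined through the multilinear eigenvalue equation agrees with the matrix spectral radius of the matricized operator. Once this correspondence and the resulting Gelfand-type bound $\norm{\widetilde{\mathcal{A}}_1^k}\le C r^k$ are in hand, the convergence of the tensor power series, the mean-square convergence of the MA($\infty$) terms, and the verification of constant moments are all routine.
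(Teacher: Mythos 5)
Your proof is correct, but it takes a genuinely different route from the paper's. You vectorize the ART(1) into a VARX(1), identify $\rho(\widetilde{\mathcal{A}}_1)$ with the matrix spectral radius of the matricization $A_1$, and then run the classical Gelfand/Neumann-series argument with $L^2$-Cauchy partial sums. The paper instead works intrinsically in tensor space: it applies the lag-operator polynomial $(\mathcal{I}+\widetilde{\mathcal{A}}_1L+\ldots+\widetilde{\mathcal{A}}_1^{t-1}L^{t-1})$ to the recursion, invokes results of Behera, Nandi and Sahoo (2019) on inverses of even-order tensors to get convergence of the tensor power series $\sum_k \widetilde{\mathcal{A}}_1^k L^k$ to $(\mathcal{I}-\widetilde{\mathcal{A}}_1L)^{-1}$ under $\rho(\widetilde{\mathcal{A}}_1)<1$, and then computes the autocovariance directly via contracted- and outer-product identities (its Lemma \ref{lemma:contracted_properties}). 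Notably, the ``main obstacle'' you flag --- rigorously establishing that matricization is an algebra isomorphism under which the multilinear eigenvalue problem $\mathcal{A}\bar{\times}_N\mathcal{X}=\lambda\mathcal{X}$ corresponds to $A\vecc{\mathcal{X}}=\lambda\vecc{\mathcal{X}}$, so $\rho(\mathcal{A})=\rho(A)$ --- is not an open issue in this paper: it is exactly the content of the paper's proof of Proposition \ref{proposition:ART_stationatity_from_VAR}, which cites Brazell et al.\ (2013, Theorem 3.2 and Corollary 3.3) for the tensor group isomorphism. So your argument amounts to deriving Proposition \ref{proposition:ART_stationatity} from Proposition \ref{proposition:ART_stationatity_from_VAR} plus textbook VAR(1) theory (L{\"u}tkepohl), which is more elementary and arguably cleaner than the paper's direct computation; what the paper's route buys is a self-contained demonstration of the tensor calculus (tensor inverses, power series, moment computations via $\bar{\times}_N$) without passing through $\R^{I^*}$. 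One small caveat: your claim that the remainder $A_1^K\vecc{\mathcal{Y}_{t-K}}$ vanishes in mean square presupposes a uniform second-moment bound on the initial segment of the process (or the convention that the MA($\infty$) series itself defines the stationary solution); the paper's proof makes the analogous concession by assuming $\mathcal{Y}_0$ finite a.s., so your treatment is at the same, and in places greater, level of rigor.
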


\begin{proposition} \label{proposition:ART_stationatity_from_VAR}
The VAR($p$) in eq. \eqref{eq:model_final_vectorised} is weakly stationary if and only if the ART($p$) in eq. \eqref{eq:model_general_contracted_product} is weakly stationary.
\end{proposition}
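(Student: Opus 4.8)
The plan is to exploit that vectorization is a bijective linear isomorphism between the tensor space $\R^{I_1\times\ldots\times I_N}$ and $\R^{I^*}$, and that under this map the tensor recursion \eqref{eq:model_general_contracted_product} is carried exactly onto the vectorized system \eqref{eq:model_final_vectorised}. The key observation is that weak stationarity, as defined here through finiteness and time-invariance of the first two moments of the finite-dimensional distributions, depends only on the collection of entries of the process, and that $\vecc{\mathcal{Y}_t}$ and $\mathcal{Y}_t$ carry exactly the same entries (merely reindexed). I therefore expect the equivalence to reduce to the invariance of the moment structure under a fixed invertible linear change of coordinates.

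First I would make the correspondence between the two representations explicit. Setting $\mathbf{y}_t = \vecc{\mathcal{Y}_t}$, I apply the vectorization operator to both sides of \eqref{eq:model_general_contracted_product}. By the vectorization identity for the contracted product, which extends the single-mode rule stated in the Remark following \eqref{eq:PARAFAC_demposition} to the full multi-mode contraction $\bar{\times}_N$, each term $\widetilde{\mathcal{A}}_j \bar{\times}_N \mathcal{Y}_{t-j}$ becomes $A_j \mathbf{y}_{t-j}$ for the associated $I^*\times I^*$ matricization $A_j$ of $\widetilde{\mathcal{A}}_j$, while $\widetilde{\mathcal{B}} \bar{\times}_M \mathcal{X}_t$ and $\mathcal{E}_t$ map to $B\,\vecc{\mathcal{X}_t}$ and $\vecc{\mathcal{E}_t}$ respectively. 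This shows that \eqref{eq:model_final_vectorised} is precisely the vectorized image of \eqref{eq:model_general_contracted_product}, so that $\mathbf{y}_t = \vecc{\mathcal{Y}_t}$ solves the VAR($p$) if and only if $\mathcal{Y}_t$ solves the ART($p$).

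Next I would establish the equivalence of weak stationarity. Since $\vecc{\cdot}$ is an invertible linear reindexing, $\E[\mathbf{y}_t] = \vecc{\E[\mathcal{Y}_t]}$, so the first moment of $\mathbf{y}_t$ is finite and constant in $t$ exactly when the first moment of $\mathcal{Y}_t$ is. For the second moments, every cross-moment entering the finite-dimensional distributions of $\mathbf{y}_t$ at any lag is obtained from the corresponding tensor cross-moment of $\mathcal{Y}_t$ by the same fixed reindexing of entries; hence it is finite and independent of $t$ for one process exactly when it is for the other. As both conditions in the definition of weak stationarity transfer across the bijection, the ART($p$) is weakly stationary if and only if the VAR($p$) is.

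The step I expect to require the most care is the bookkeeping in the identification: verifying that the vectorization of the multi-mode contracted product $\bar{\times}_N$ yields genuine matrix multiplication by the correctly matricized coefficient $A_j$, so that the resulting recursion is a bona fide VAR($p$) in $\mathbf{y}_t$ and the two notions of weak stationarity are compared in compatible coordinates. Once the coordinate identification is in place, the equivalence is essentially automatic. As a consistency check in the case $p=1$, this also aligns with Proposition \ref{proposition:ART_stationatity}: the tensor eigenvalue equation $\widetilde{\mathcal{A}}_1 \bar{\times}_N \mathcal{X} = \lambda \mathcal{X}$ vectorizes to $A_1 \vecc{\mathcal{X}} = \lambda \vecc{\mathcal{X}}$ for nonzero $\vecc{\mathcal{X}}$, so $\rho(\widetilde{\mathcal{A}}_1) = \rho(A_1)$ and the classical VAR condition $\rho(A_1) < 1$ translates directly into the tensor condition used there.
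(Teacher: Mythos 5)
Your proof is correct, but it follows a genuinely different route from the paper's. The paper proves this proposition spectrally: it invokes the result of Brazell et al.\ that $\operatorname{mat}_{1:N,1:N}$ is an isomorphism between the tensor group $\T$ and the group of $I^*\times I^*$ matrices, deduces a one-to-one correspondence between the tensor eigenvalue problem $\mathcal{A}\bar{\times}_N\mathcal{X}=\lambda\mathcal{X}$ and the matrix problem $A\mathbf{x}=\lambda\mathbf{x}$ with $\mathbf{x}=\vecc{\mathcal{X}}$, concludes $\rho(A)=\rho(\mathcal{A})$, and then cites the classical sufficient condition $\rho(A)<1$ for VAR stationarity, lifting from $p=1$ to general $p$ via the companion-form augmentation of Lemma \ref{lemma:ARTp_ART1}. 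You instead bypass spectral theory entirely: you identify the vectorized ART with the VAR as literally the same process in different coordinates, and transfer the moment-based definition of weak stationarity across the fixed linear bijection $\vecc{\cdot}$, handling all $p$ at once with no companion form. Your argument is arguably the cleaner one for the stated ``if and only if'': since weak stationarity is defined here purely through finiteness and time-invariance of the first two moments of the finite-dimensional distributions, and these are preserved under a fixed reindexing of entries, the equivalence is immediate, whereas the paper's chain (equality of spectral radii plus a \emph{sufficient} stability condition) delivers the biconditional only because the two recursions are images of one another under the isomorphism in the first place. What the paper's route buys in exchange is the spectral fact $\rho(A)=\rho(\widetilde{\mathcal{A}})$ itself, which links the tensor condition of Proposition \ref{proposition:ART_stationatity} to the familiar VAR eigenvalue criterion and is what the empirical section uses when reading stationarity off the eigenvalues of $\hat{B}_{(4)}$ --- a payoff you correctly recover, but only as a closing consistency check. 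Your flagged ``step requiring care'' (that $\bar{\times}_N$ vectorizes to multiplication by the matricized coefficient) is indeed the crux, and it is exactly what Lemma \ref{lemma:contracted_properties}(ii) and Lemma \ref{lemma:matricize_outer_kronecker} in the appendix provide, so nothing is missing.
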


\subsection{Parametrization}
The unrestricted model in eq. \eqref{eq:model_general} cannot be estimated, as the number of parameters greatly outmatches the available data. We address this issue by assuming a PARAFAC($R$) decomposition for the tensor coefficients, which makes the estimation feasible by reducing the dimension of the parameter space.
The models in eqq. \eqref{eq:model_general}-\eqref{eq:model_general_contracted_product} are equivalent but the assuming a PARAFAC decomposition for the coefficient tensors leads to different degrees of parsimony, as shown in the following remark.

\begin{remark}[Alternative parametrization via contracted product]
The two models \eqref{eq:model_general} and \eqref{eq:model_general_contracted_product} combined with the PARAFAC decomposition for the tensor coefficients allow for different degree of parsimony. To show this, without loss of generality, focus on the coefficient tensor $\widetilde{\mathcal{A}}_1$ (similar argument holds for $\widetilde{\mathcal{A}}_j$, $j=2,\ldots,p$ and $\widetilde{\mathcal{B}}$).
By assuming a PARAFAC($R$) decomposition for $\widetilde{\mathcal{A}}_1$ in \eqref{eq:model_general_contracted_product} and for $\mathcal{A}_1$ in \eqref{eq:model_general}, we get, respectively
\begin{align*}
\hspace*{-30pt} \widetilde{\mathcal{A}}_1 & = \sum_{r=1}^R \widetilde{\boldsymbol{\alpha}}_1^{(r)} \circ \ldots \circ \widetilde{\boldsymbol{\alpha}}_N^{(r)} \circ \widetilde{\boldsymbol{\alpha}}_{N+1}^{(r)} \circ \ldots \circ \widetilde{\boldsymbol{\alpha}}_{2N}^{(r)}, \quad
\mathcal{A}_1 = \sum_{r=1}^R \boldsymbol{\alpha}_1^{(r)} \circ \ldots \circ \boldsymbol{\alpha}_N^{(r)} \circ \boldsymbol{\alpha}_{N+1}^{(r)},
\end{align*}
The length of the vectors $\boldsymbol{\alpha}_j^{(r)}$ and $\widetilde{\boldsymbol{\alpha}}_j^{(r)}$ coincide for each $j=1,\ldots,N$. However, $\boldsymbol{\alpha}_{N+1}^{(r)}$ has length $I^*$ while $\widetilde{\boldsymbol{\alpha}}_{N+1}^{(r)},\ldots,\widetilde{\boldsymbol{\alpha}}_{2N}^{(r)}$ have length $I_1,\ldots,I_N$, respectively. Therefore, the number of free parameters in the coefficient tensor $\mathcal{A}_1$ is $R(I_1 + \ldots + I_N + \prod_{j=1}^N I_j)$, while it is $2R(I_1 + \ldots + I_N)$ for  $\widetilde{\mathcal{A}}_1$. This highlights the greater parsimony granted by the use of the PARAFAC($R$) decomposition in model \eqref{eq:model_general_contracted_product} as compared to model \eqref{eq:model_general}.
\end{remark}

\begin{remark}[Vectorization]
There is a relation between the $(I_1\times\ldots\times I_N)$-dimensional ART($p$) and a $(I_1\cdot\ldots\cdot I_N)$-dimensional VAR($p$) model. The vector form of \eqref{eq:model_general} is
\begin{align} \notag
\hspace*{-25pt}  \vecc{\mathcal{Y}_t} & = \vecc{\mathcal{A}_0} + \sum_{j=1}^p \operatorname{mat}_{N+1}(\mathcal{A}_j) \vecc{\mathcal{Y}_{t-j}} + \operatorname{mat}_{N+1}(\mathcal{B}) \vecc{\mathcal{X}_t} + \vecc{\mathcal{E}_t} \\
\hspace*{-25pt}  \mathbf{y}_t & = \boldsymbol{\alpha}_0 + \sum_{j=1}^p \mathbf{A}_{(N+1),j}' \mathbf{y}_{t-j} + \mathbf{B}_{(N+1)}' \mathbf{x}_t + \boldsymbol{\epsilon}_t, \quad \boldsymbol{\epsilon}_t \sim \mathcal{N}_{I^*}(\mathbf{0}, \Sigma_N \otimes \ldots \otimes \Sigma_1),
\label{eq:model_ARTp_vectorized}
\end{align}
where the constraint on the covariance matrix stems from the one-to-one relation between the tensor normal distribution for $\mathcal{X}$ and the distribution of its vectorization (\cite{Ohlson13TensorNormal}) given by $\mathcal{X} \sim \mathcal{N}_{I_1,\ldots,I_N}(\mathcal{M},\Sigma_1,\ldots,\Sigma_N)$ if and only if $\vecc{\mathcal{X}} \sim \mathcal{N}_{I^*}(\vecc{\mathcal{M}},\Sigma_N \otimes \ldots \otimes \Sigma_1)$. The restriction on the covariance structure for the vectorized tensor provides a parsimonious parametrization of the multivariate normal distribution, while allowing both within and between mode dependence.
Alternative parametrizations for the covariance lead to generalizations of standard models. For example, assuming an additive covariance structure results in the tensor ANOVA.
This is an active field for further research.
\end{remark}

\begin{example}
For the sake of exposition, consider the model in eq. \eqref{eq:model_general}, where $p=1$, the response is a 3-order tensor $\mathcal{Y}_t\in\R^{d\times d\times d}$ and the covariates include only a constant coefficient tensor $\mathcal{A}_0$. Define by $k_{\mathcal{E}}$ the number of parameters of the noise distribution. The total number of parameters to estimate in the unrestricted case is $(d^{2N}) + k_{\mathcal{E}} = O(d^{2N})$, with $N=3$ in this example.
Instead, in a ART model defined via the mode-$n$ product in eq. \eqref{eq:model_general}, assuming a PARAFAC($R$) decomposition on $\mathcal{A}_0$ the total number of parameters is $\sum_{r=1}^R (d^N+d^N) + k_{\mathcal{E}} = O(d^N)$.
Finally, in the ART model defined by the contracted product in eq. \eqref{eq:model_general_contracted_product} with a PARAFAC($R$) decomposition on $\widetilde{\mathcal{A}}_0$ the number of parameters is $\sum_{r=1}^R Nd + k_{\mathcal{E}} = O(d)$.
A comparison of the different parsimony granted by the PARAFAC decomposition in all models is illustrated in Fig. \ref{fig:parameters}.

\begin{figure}[t]
\centering
\captionsetup{width=0.9\linewidth}
\setlength{\abovecaptionskip}{1pt}
\setlength{\tabcolsep}{4pt}
\begin{tabular}{ccc}
{\footnotesize (a) vectorized} & {\footnotesize (b) contracted product $\bar{\times}_{N}$ as \eqref{eq:model_general}} & {\footnotesize (c) contracted product $\bar{\times}_{N}$ as \eqref{eq:model_general_contracted_product}} \\[2pt]
\includegraphics[trim= 10mm 5mm 10mm 5mm,clip,height= 4.0cm,width= 4.6cm]{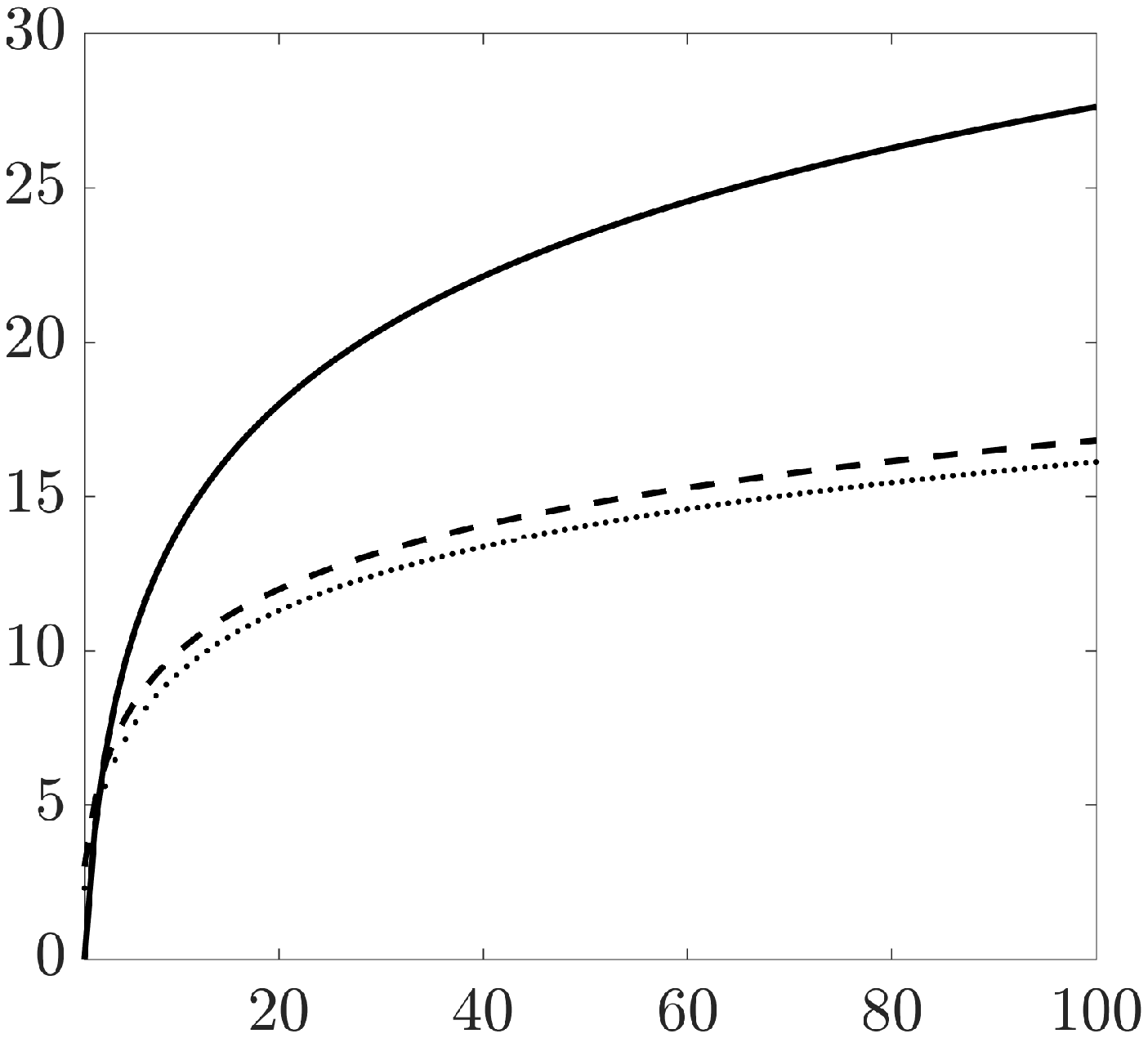} & 
\includegraphics[trim= 10mm 5mm 10mm 5mm,clip,height= 4.0cm,width= 4.6cm]{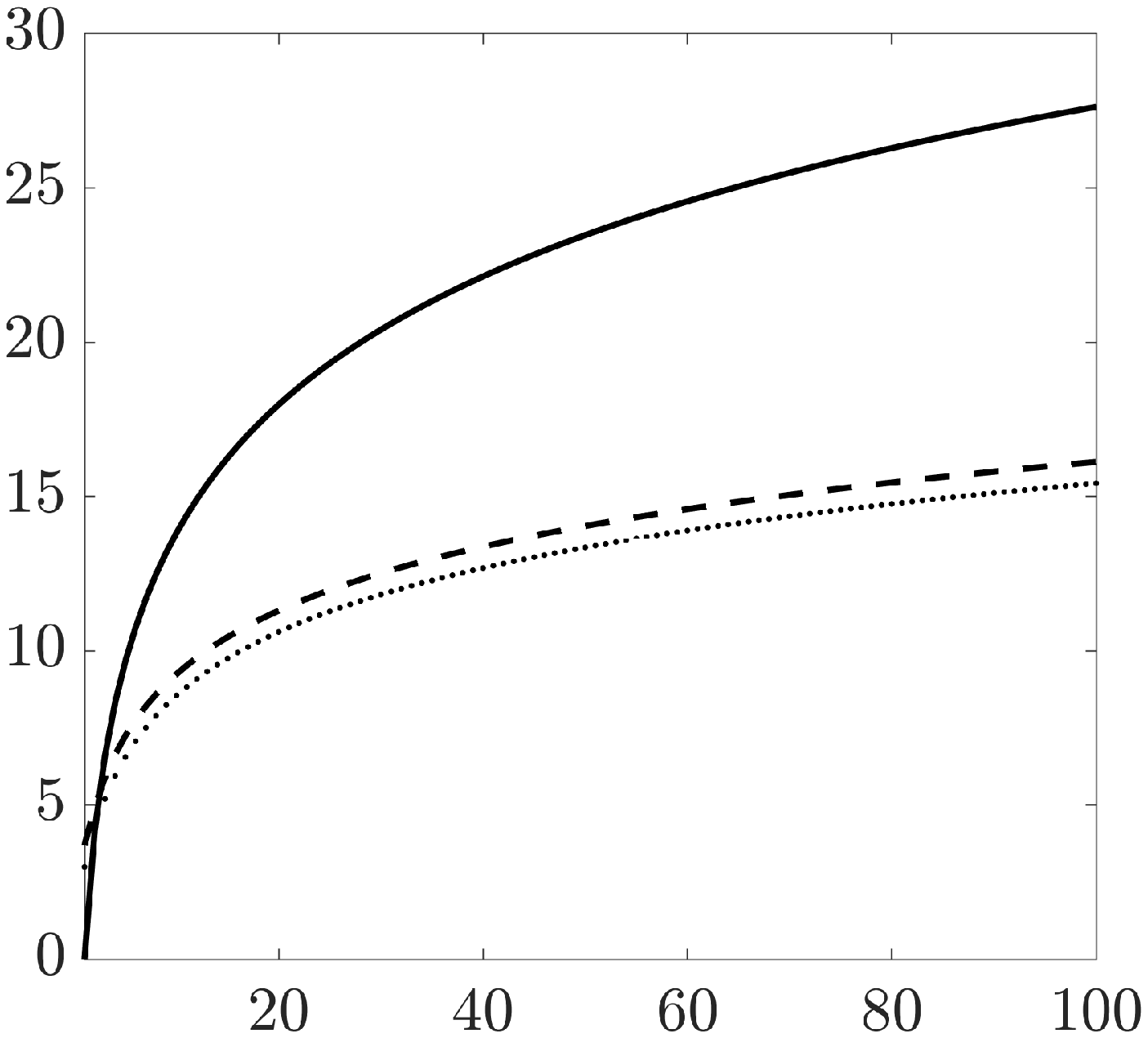} & 
\includegraphics[trim= 10mm 5mm 10mm 5mm,clip,height= 4.0cm,width= 4.6cm]{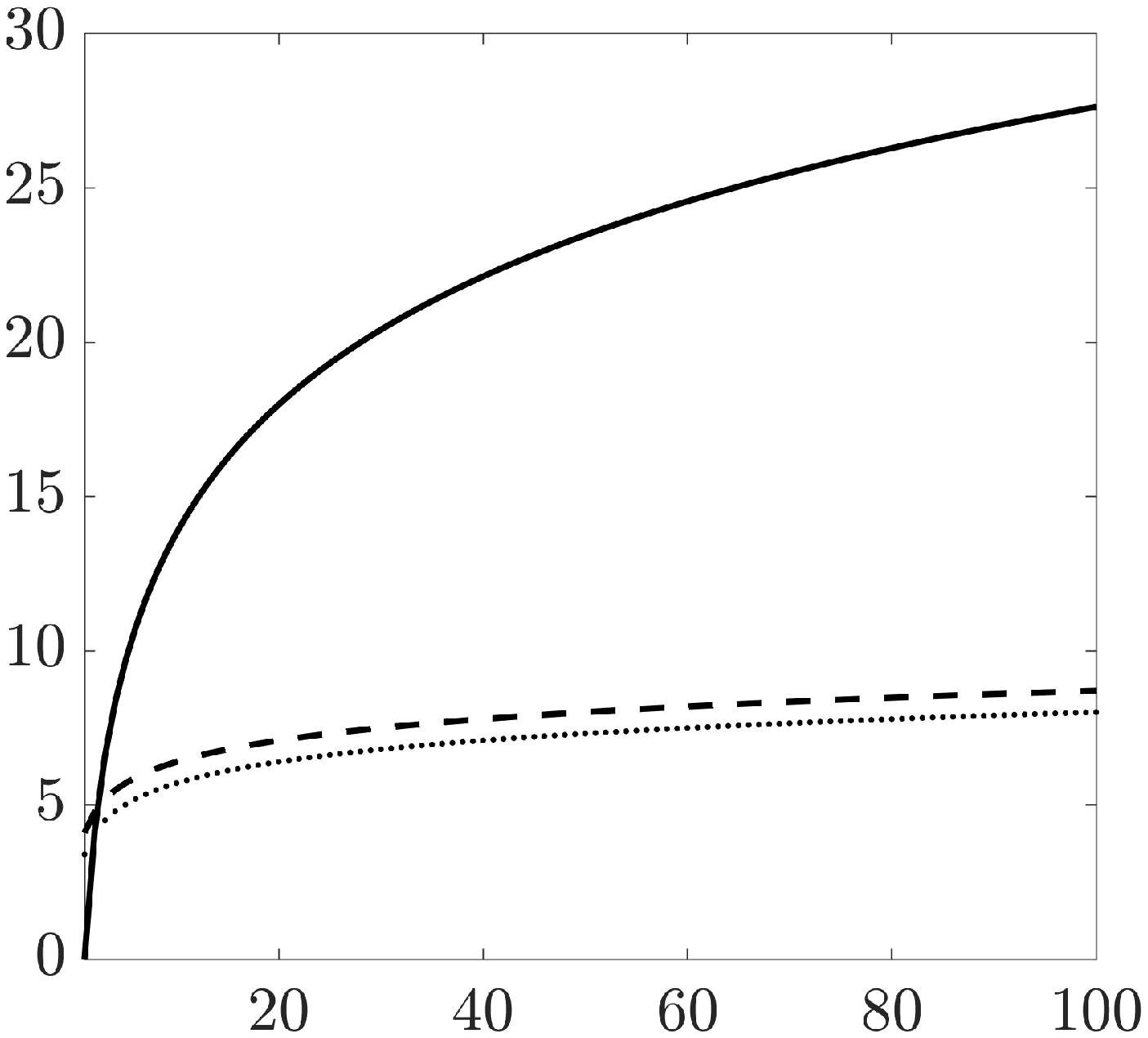}
\end{tabular}
\caption{Number of parameters in $\mathcal{A}_0$, in log-scale (\textit{vertical axis}) as function of the size $d$ of the $(d\times d\times d)$-dimensional tensor $\mathcal{Y}_t$ (\textit{horizontal axis}) in a ART(1) model. In all plots: unconstrained model (\textit{solid} line), PARAFAC($R$) parametrization with $R=10$ (\textit{dashed} line) and $R=5$ (\textit{dotted} line). Parametrizations: vectorized model (panel \textit{a}), mode-$n$ product of \protect\eqref{eq:model_general} (panel \textit{b}) and contracted product of \protect\eqref{eq:model_general_contracted_product} (panel \textit{c}).}
\label{fig:parameters}
\end{figure}
\end{example}

The structure of the PARAFAC decomposition poses an identification problem for the marginals $\boldsymbol{\beta}_j^{(r)}$, which may arise from three sources:
\begin{enumerate}[label=(\roman*)]
\item \textit{scale} identification, since $\lambda_{jr} \boldsymbol{\beta}_j^{(r)} \circ \lambda_{kr} \boldsymbol{\beta}_k^{(r)} = \boldsymbol{\beta}_j^{(r)} \circ \boldsymbol{\beta}_k^{(r)}$ for any collection $\lbrace \lambda_{jr} \rbrace_{j,r}$ such that $\prod_{j=1}^J \lambda_{jr}=1$;
\item \textit{permutation} identification, since for any permutation of the indices $\lbrace 1,\ldots,R \rbrace$ the outer product of the original vectors is equal to that of the permuted ones;
\item \textit{orthogonal transformation} identification, since $\boldsymbol{\beta}_j^{(r)}Q \circ \boldsymbol{\beta}_k^{(r)} Q = \boldsymbol{\beta}_j^{(r)} \circ \boldsymbol{\beta}_k^{(r)}$ for any orthonormal matrix $Q$.
\end{enumerate}
Note that in our framework these issues do not hamper the inference, since our object of interest is the coefficient tensor $\mathcal{B}$, which is exactly identified. The marginals $\boldsymbol{\beta}_j^{(r)}$ have no interpretation, as the PARAFAC decomposition is assumed on the coefficient tensor for the sake of providing a parsimonious parametrization.

\subsection{Important Special Cases}
The model in eq. \eqref{eq:model_general} is a generalization of several well-known econometric models, as shown in the following remarks. See the supplement for the proofs of these results.

\begin{remark}[Univariate] \label{remark:univ_reg}
If $I_i=1$ for $i=1,\ldots,N$, then model \eqref{eq:model_general} reduces to a univariate regression
\begin{equation}
y_t = \alpha_0 + \sum_{j=1}^p \alpha_j y_{t-j} + \boldsymbol{\beta}' \vecc{\mathcal{X}_t} +\epsilon_t \qquad \epsilon_t \sim \mathcal{N}(0,\sigma^2),
\label{eq:model_univ_reg}
\end{equation}
where the coefficients of \eqref{eq:model_general} become $\mathcal{A}_j = \alpha_j \in\R$, $j=0,\ldots,p$ and $\mathcal{B} = \boldsymbol{\beta} \in\R^{J^*}$.
\end{remark}

\begin{remark}[SUR] \label{remark:sur}
If $I_i=1$ for $i=2,\ldots,N$ and define by $\mathbf{1}_n$ the unit vector of length $n$, then model \eqref{eq:model_general} reduces to a Seemingly Unrelated Regression (SUR) model (\cite{Zellner62SUR})
\begin{equation}
\mathbf{y}_t = \boldsymbol{\alpha}_0 + B \times_2 \vecc{\mathcal{X}_t} + \boldsymbol{\epsilon}_t \qquad \boldsymbol{\epsilon}_t \sim\mathcal{N}_{m}(\mathbf{0},\Sigma),
\label{eq:model_sur}
\end{equation}
where $I_1=m$ and the coefficients of \eqref{eq:model_general} become $\mathcal{A}_j = 0$, $j=1,\ldots,p$, $\mathcal{A}_0 = \boldsymbol{\alpha}_0 \in \R^{m}$ and $\mathcal{B} = B \in\R^{m \times J^*}$. Note that, by definition, $B \times_2 \vecc{\mathcal{X}_t} = B \vecc{\mathcal{X}_t}$.
\end{remark}

\begin{remark}[VARX and Panel VAR] \label{remark:var}
Consider the setup of Remark \ref{remark:sur}. If $\mathbf{z}_t = \mathbf{y}_{t-1}$, then weoobtain a VARX(1) model, with restricted covariance matrix.
Another vector of regressors $\mathbf{w}_t = \vecc{W_t} \in \R^q$ may enter the regression \eqref{eq:model_sur} pre-multiplied (along mode-$3$) by a tensor $\mathcal{D} \in \R^{m\times n\times q}$.
Therefore, model \eqref{eq:model_general} encompasses as a particular case also the panel VAR models of \cite{CanovaCiccarelli04PanelVAR}, \cite{CanovaCiccarelliOrtega07pVAR}, \cite{CanovaCiccarelli09pVAR}, provided that we make the same restriction on 	$\Sigma$.
\end{remark}

\begin{remark}[VECM] \label{remark:vecm}
The model in eq. \eqref{eq:model_general} generalises the Vector Error Correction Model (VECM) widely used in multivariate time series analysis (see \cite{EngleGranger87Cointegration_VECM}, \cite{SchotmanVanDijk91BayesUnitRoot}).
Consider a $K$-dimensional VAR(1) model
\begin{equation*}
\mathbf{y}_t = B \mathbf{y}_{t-1} + \boldsymbol{\epsilon}_t \qquad \boldsymbol{\epsilon}_t \sim \mathcal{N}_m(\mathbf{0},\Sigma).
\end{equation*}
Defining $\Delta \mathbf{y}_t = \mathbf{y}_t-\mathbf{y}_{t-1}$ and $\Pi = (B-I) = \boldsymbol{\alpha}\boldsymbol{\beta}'$, where $\boldsymbol{\alpha}$ and $\boldsymbol{\beta}$ are $K\times R$ matrices of rank $R<K$, we obtain the associated VECM
\begin{equation}
\Delta \mathbf{y}_t = \boldsymbol{\alpha}\boldsymbol{\beta}' \mathbf{y}_{t-1} + \boldsymbol{\epsilon}_t.
\label{eq:remark_vecm}
\end{equation}
This is used for studying the cointegration relations among the components of $\mathbf{y}_t$.
Since $\Pi = \boldsymbol{\alpha}\boldsymbol{\beta}' = \sum_{r=1}^R \boldsymbol{\alpha}_{:,r} \boldsymbol{\beta}_{:,r}' = \sum_{r=1}^R \tilde{\boldsymbol{\beta}}_1^{(r)} \circ \tilde{\boldsymbol{\beta}}_2^{(r)}$, we can interpret the VECM model in eq. \eqref{eq:remark_vecm} as a particular case of the model in eq. \eqref{eq:model_general} where the coefficient $\mathcal{B}$ is the matrix $\Pi = \boldsymbol{\alpha}\boldsymbol{\beta}'$. Furthermore by writing $\Pi = \sum_{r=1}^R \tilde{\boldsymbol{\beta}}_1^{(r)} \circ \tilde{\boldsymbol{\beta}}_2^{(r)}$ we can interpret this relation as a rank-R PARAFAC decomposition of $\mathcal{B}$.
Following this analogy, the PARAFAC rank corresponds to the cointegration rank, $\tilde{\boldsymbol{\beta}}_1^{(r)}$ are the mean-reverting coefficients and $\tilde{\boldsymbol{\beta}}_2^{(r)} = (\tilde{\beta}_{2,1}^{(r)},\ldots,\tilde{\beta}_{2,K}^{(r)})$ are the cointegrating vectors. See the supplement for details.
This interpretation opens the way to reparametrization of $\mathcal{B}$ based on tensor SVD representations, and to the application of regularization methods in the spirit of \cite{Basturk17NearBoundary_ReducedRank}. This is beyond the scope of the paper, thus we leave it for further research.
\end{remark}

\begin{remark}[MAI of \cite{Carrieroetal16Multivariate_AR_Index}] \label{remark:MAI}
The multivariate autoregressive index model (MAI) of \cite{Carrieroetal16Multivariate_AR_Index} is another special case of model \eqref{eq:model_general}. A MAI is a VAR model with a low rank decomposition imposed on the coefficient matrix, as follows
\begin{equation*}
\mathbf{y}_t = \mathbf{A}\mathbf{B}_0 \mathbf{y}_{t-1} + \boldsymbol{\epsilon}_t,
\end{equation*}
where $\mathbf{y}_t$ is a $(n\times 1)$ vector, whereas $\mathbf{A},\mathbf{B}_0$ are $(n\times R)$ and $(R\times n)$ matrices, respectively. In \cite{Carrieroetal16Multivariate_AR_Index}, the authors assumed $R=1$. This corresponds to our parametrization using $R=1$ and defining $\mathbf{A} \boldsymbol{\beta}_1^{(1)}$ and $\mathbf{B}_0' = \boldsymbol{\beta}_2^{(1)}$, which leads us to $\mathbf{A} \mathbf{B}_0 = \boldsymbol{\beta}_1^{(1)} \circ \boldsymbol{\beta}_2^{(1)}$.
\end{remark}

\begin{remark}[Tensor autoregressive model (ART)] \label{remark:tensorAR}
By removing all the covariates from eq. \eqref{eq:model_general} except the lags of the dependent variable, we obtain a tensor autoregressive model of order $p$ (or ART($p$))
\begin{equation}
\mathcal{Y}_t = \mathcal{A}_0 + \sum_{j=1}^p \mathcal{A}_j \times_{N+1} \vecc{\mathcal{Y}_{t-j}} + \mathcal{E}_t, \qquad \mathcal{E}_t \distas{iid} \mathcal{N}_{I_1,\ldots,I_N}(\mathbf{0},\Sigma_1,\ldots,\Sigma_N).
\label{eq:tensorAR}
\end{equation}
\end{remark}

Matrix autoregressive models (MAR) are another special case of \eqref{eq:model_general}, which can be obtained from eq. \eqref{eq:tensorAR} when the dependent variable is a matrix. See the supplement for an example.

\subsection{Impulse Response Analysis} \label{sec:IRF}
In this section we derive two impulse response functions (IRF) for ART models, the block Cholesky IRF and the block generalised IRF, exploiting the relationship between ART and VAR models. Without loss of generality, we focus on the ART($p$) model in eq. \eqref{eq:tensorAR}, with $p=1$ and $\mathcal{A}_0 = \mathbf{0}$, and introduce the following notation.
Let $\mathbf{y}_t = \vecc{\mathcal{Y}_t}$ and $\boldsymbol{\epsilon}_t = \vecc{\mathcal{E}_t} \sim \mathcal{N}_{I^*}(\mathbf{0},\Sigma)$ be the $(I^* \times 1)$ tensor response and noise term in vector form, respectively, where $\Sigma = \Sigma_N \otimes \ldots \otimes \Sigma_1$ is the $(I^* \times I^*)$ covariance of the model in vector form and $I^* = \prod_{k=1}^N I_k$.
Partition $\Sigma$ in blocks as
\begin{equation}
\renewcommand*{\arraystretch}{0.72}
\Sigma = \bigg( \begin{array}{c|c} A & B \\ \hline B' & C \end{array} \bigg),
\label{eq:Sigma_block}
\end{equation}
where $A$ is $n \times n$, $B$ is $n \times (I^*-n)$ and $C$ is $(I^*-n)\times (I^*-n)$. Then, denoting by $S = C - B' A^{-1} B$ the Schur complement of $A$, the LDU decomposition of $\Sigma$ is
\begin{equation*}
\renewcommand*{\arraystretch}{0.72}
\Sigma = 
\bigg( \begin{array}{c|c} I_n & \bigzero_{n,I^*-n} \\ \hline B' A^{-1} & I_{I^*-n} \end{array} \bigg)
\bigg( \begin{array}{c|c} A & \bigzero_{n,I^*-n} \\ \hline \bigzero_{n,I^*-n}' & S \end{array} \bigg)
\bigg( \begin{array}{c|c} I_n & A^{-1} B \\ \hline \bigzero_{n,I^*-n}' & I_{I^*-n} \end{array} \bigg)
= L D L'.
\label{eq:Sigma_LDU}
\end{equation*}
Hence $\Sigma$ can be block-diagonalised
\begin{equation}
\renewcommand*{\arraystretch}{0.72}
D = L^{-1} \Sigma (L')^{-1} = \bigg( \begin{array}{c|c} A & \bigzero_{n,I^*-n} \\ \hline \bigzero_{n,I^*-n}' & S \end{array} \bigg).
\label{eq:Sigma_LDU_inverse}
\end{equation}
From the Cholesky decomposition of $D$ one obtains a block Cholesky decomposition
\begin{equation*}
\renewcommand*{\arraystretch}{0.72}
\Sigma = \bigg( \begin{array}{c|c} L_A & \bigzero_{n,I^*-n} \\ \hline B' (L_A^{-1})' & L_S \end{array} \bigg) \bigg( \begin{array}{c|c} L_A' & L_A^{-1} B \\ \hline \bigzero_{n,I^*-n}' & L_S' \end{array} \bigg) = P P',
\end{equation*}
where $L_A,L_S$ are the Cholesky factors of $A$ and $S$, respectively.
Assume the vectorised ART process admits an infinite MA representation, with $\Psi_0 = I_{I^*}$ and $\Psi_i = \operatorname{mat}_{(4)}(\mathcal{B})' \Psi_{i-1}$, then using the previous results we get:
\begin{align}
\mathbf{y}_t & = \sum_{i=0}^\infty \Psi_i \boldsymbol{\epsilon}_{t-i} = \sum_{i=0}^\infty (\Psi_i L) (L^{-1} \boldsymbol{\epsilon}_{t-i}) = \sum_{i=0}^\infty (\Psi_i L) \boldsymbol{\eta}_{t-i} \quad \boldsymbol{\eta}_t \sim \mathcal{N}_{I^*}(\mathbf{0},D),
\label{eq:IRF_step1}
\end{align}
where $\boldsymbol{\eta}_t = L^{-1} \boldsymbol{\epsilon}_t$ are the block-orthogonalised shocks and $D$ is the block-diagonal matrix in eq. \eqref{eq:Sigma_LDU_inverse}.
Denote with $E_n$ the $I^* \times n$ matrix that selects $n$ columns from a pre-multiplying matrix, i.e. $D E_n$ is a matrix containing $n$ columns of $D$.
Denote with $\boldsymbol{\delta}^*$ a $n$-dimensional vector of shocks. Using the property of the multivariate Normal distribution, and recalling that the top-left block of size $n$ of $D$ is $A$, we extend the generalised IRF of \cite{Koop96Generalized_ImpulseResponse} and \cite{Pesaran98Generalized_ImpulseResponse} by defining the block generalised IRF
\begin{align}
\notag
\boldsymbol{\psi}^G(h;n) & = \mathbb{E}\big( \vecc{\mathcal{Y}_{t+h}} | \vecc{\mathcal{E}_t}' = (\boldsymbol{\delta}^{*\prime},\mathbf{0}_{I^*-n}'),\mathcal{F}_{t-1} \big) - \mathbb{E}\big( \vecc{\mathcal{Y}_{t+h}} | \mathcal{F}_{t-1} \big) \\
 & = (\Psi_h L) D E_n A^{-1} \boldsymbol{\delta}^*,
\label{eq:GIRF_new}
\end{align}
where $\mathcal{F}_t$ is the natural filtration associated to the stochastic process.
Starting from eq. \eqref{eq:IRF_step1} we derive the block Cholesky IRF (OIRF) as
\begin{align}
\notag
\boldsymbol{\psi}^O(h;n) & = \mathbb{E}\big( \vecc{\mathcal{Y}_{t+h}} | \vecc{\mathcal{E}_t}' =(\boldsymbol{\delta}^{*\prime},\mathbf{0}_{I^*-n}'),\mathcal{F}_{t-1} \big) \\ \notag
 & \quad - \mathbb{E}\big( \vecc{\mathcal{Y}_{t+h}} | \vecc{\mathcal{E}_t}' = \mathbf{0}_{I^*}', \mathcal{F}_{t-1} \big) \\
 & = (\Psi_h L) P E_n \boldsymbol{\delta}^*.
\label{eq:OIRF_new}
\end{align}
Define with $\mathbf{e}_j$ the $j$-th column of the $I^*$-dimensional identity matrix.
The impact of a shock $\delta^*$ to the $j$-th variable on all $I^*$ variables is given below in eq. \eqref{eq:GIRF_OIRF_new_j}, whereas the impact of a shock to the $j$-th variable on the $i$-th variable is given in eq. \eqref{eq:GIRF_OIRF_new_ij}.
\begin{alignat}{3}
\label{eq:GIRF_OIRF_new_j}
\boldsymbol{\psi}_j^G(h;n) & = & \Psi_h L D \mathbf{e}_j D_{jj}^{-1} \delta^*, \qquad
\boldsymbol{\psi}_j^O(h;n) & = & \Psi_h L P \mathbf{e}_j \delta^* \\
\label{eq:GIRF_OIRF_new_ij}
\psi_{ij}^G(h;n) & = & \; \mathbf{e}_i' \Psi_h L D \mathbf{e}_j D_{jj}^{-1} \delta^*, \qquad
\psi_{ij}^O(h;n) & = & \; \mathbf{e}_i' \Psi_h L P \mathbf{e}_j \delta^*.
\end{alignat}
Finally, denoting $\boldsymbol{\delta}_j = \mathbf{e}_j \delta^*$, we have the compact notation
\begin{alignat*}{3}
\boldsymbol{\psi}_j^G(h;n) & = & \Psi_h L D D_{jj}^{-1} \boldsymbol{\delta}_j, \qquad
\boldsymbol{\psi}_j^O(h;n) & = & \Psi_h L P \boldsymbol{\delta}_j \\
\psi_{ij}^G(h;n) & = & \; \mathbf{e}_i' \Psi_h L D D_{jj}^{-1} \boldsymbol{\delta}_j, \qquad
\psi_{ij}^O(h;n) & = & \; \mathbf{e}_i' \Psi_h L P \boldsymbol{\delta}_j.
\end{alignat*}

\section{Bayesian Inference} \label{sec:bayesian_inference}
In this section, without loss of generality, we present the inference procedure for a special case of the model in eq. \eqref{eq:model_general}, given by
\begin{equation}
\mathcal{Y}_t = \mathcal{B} \times_4 \vecc{\mathcal{Y}_{t-1}} + \mathcal{E}_t, \qquad \mathcal{E}_t \distas{iid} \mathcal{N}_{I_1,I_2,I_3}(\mathbf{0},\Sigma_1,\Sigma_2,\Sigma_3).
\label{eq:model_final}
\end{equation}
Here $\mathcal{Y}_t$ is a $3$-order tensor response of size $I_1\times I_2\times I_3$, $\mathcal{X}_t = \mathcal{Y}_{t-1}$ and $\mathcal{B}$ is thus a $4$-order coefficient tensor of size $I_1\times I_2\times I_3\times I_4$, with $I_4 = I_1 I_2 I_3$.
This is a $3$-order \textit{tensor autoregressive model} of lag-order $1$, or ART($1$), coinciding with eq. \eqref{eq:tensorAR} for $p=1$ and $\mathcal{A}_0 = \mathbf{0}$.
The noise term $\mathcal{E}_t$ has as tensor normal distribution, with zero mean and covariance matrices $\Sigma_1,\Sigma_2,\Sigma_3$ of sizes $I_1\times I_1$, $I_2\times I_2$ and $I_3\times I_3$, respectively, accounting for the covariance along each of the three dimensions of $\mathcal{Y}_t$.
The specification of a tensor model with a tensor normal noise instead of a vector model (like a Gaussian VAR) has the advantage of being more parsimonious. By vectorising \eqref{eq:model_final}, we get the equivalent VAR
\begin{equation}
\vecc{\mathcal{Y}_t} = \mathbf{B}_{(4)}' \vecc{\mathcal{Y}_{t-1}} + \vecc{\mathcal{E}_t}, \quad \vecc{\mathcal{E}_t} \distas{iid} \mathcal{N}_{I^*}(\mathbf{0},\Sigma_3 \otimes \Sigma_2 \otimes \Sigma_1),
\label{eq:model_final_vectorised}
\end{equation}
whose covariance has a Knocker structure, which contains $(I_1(I_1+1) + I_2(I_2+1) + I_3(I_3+1))/2$ parameters (as opposed to $(I^*(I^*+1))/2$ of an unrestricted VAR) and allows for heteroskedasticity.

The choice the Bayesian approach for inference is motivated by the fact that the large number of parameters may lead to an overfitting problem, especially when the samples size is rather small. This issue can be addressed by the indirect inclusion of parameter restrictions through a suitable specification of the corresponding prior distributions.
In the unrestricted model \eqref{eq:model_final} it would be necessary to define a prior distribution on the $4$-order tensor $\mathcal{B}$. The literature on tensor-valued distributions is limited to the elliptical family (e.g., \cite{Ohlson13TensorNormal}), which includes the tensor normal and tensor $t$. Both distributions do not easily allow for the specification of restrictions on a subset of the entries of the tensor, hampering the use of standard regularization prior distributions (such as shrinkage priors).

The PARAFAC($R$) decomposition of the coefficient tensor provides a way to circumvent this issue. This decomposition allows to represent a tensor through a collection of vectors (the marginals), for which many flexible shrinkage prior distributions are available. Indirectly, this introduces \textit{a priori} sparsity on the coefficient tensor.


\subsection{Prior Specification}
The choice of the prior distribution on the PARAFAC marginals is crucial for recovering the sparsity pattern of the coefficient tensor and for the efficiency of the inference.
Global-local prior distributions are based on scale mixtures of normal distributions, where the different components of the covariance matrix govern the amount of prior shrinkage.
Compared to spike-and-slab distributions (e.g., \cite{Mitchell88SpikeSlab_priors}, \cite{George97SpikeSlabPrior}, \cite{Ishwaran05SpikeSlabPrior}) which become infeasible as the parameter space grows, global-local priors have better scalability properties in high-dimensional settings. They do not provide automatic variable selection, which can nonetheless be obtained by post-estimation thresholding (\cite{Park08BayesianLasso}).

Motivated by these arguments, we define a global-local shrinkage prior for the marginals $\boldsymbol{\beta}_j^{(r)}$ of the coefficient tensor $\mathcal{B}$ following the hierarchical prior specification of \cite{GuhaniyogiDunson17BayesTensorReg} (see also \cite{BattDuns15}, \cite{ZhouBattDuns15}).
For each $\boldsymbol{\beta}_j^{(r)}$, we define a prior distributions as a scale mixture of normals centred in zero, with three components for the covariance. The global parameter $\tau$ governs the overall variance, the middle parameter $\phi_r$ defines the common shrinkage for the marginals in $r$-th component of the PARAFAC, and the local parameter $W_{j,r}= \text{diag}(\mathbf{w}_{j,r})$ drives the shrinkage of each entry of each marginal.
Summarizing, for $p=1,\ldots,I_j$, $j=1,\ldots,J$ ($J=4$ in eq. \eqref{eq:model_final}) and $r=1,\ldots,R$, the hierarchical prior structure\footnote{We use the shape-rate formulation for the gamma distribution.} for each vector of the PARAFAC($R$) decomposition in eq. \eqref{eq:PARAFAC_demposition} is
\begin{align}
\begin{split}
\pi(\boldsymbol{\phi})  \sim \mathcal{D}ir(\alpha \mathbf{1}_R) \quad
\pi(\tau) & \sim \mathcal{G}a(a_{\tau},b_{\tau}) \quad
\pi(\lambda_{j,r})  \sim \mathcal{G}a(a_\lambda,b_\lambda) \\
\pi(w_{j,r,p}|\lambda_{j,r}) & \sim \mathcal{E}xp (\lambda_{j,r}^2/2) \\
\pi\big( \boldsymbol{\beta}_j^{(r)} \big| W_{j,r},\boldsymbol{\phi},\tau \big) & \sim \mathcal{N}_{I_j}(\mathbf{0}, \tau \phi_r W_{j,r}),
\end{split}
\label{eq:prior_beta}
\end{align}
where $\mathbf{1}_R$ is the vector of ones of length $R$ and we assume $a_\tau = \alpha R$ and $b_\tau = \alpha R^{1/J}$.
The conditional prior distribution of a generic entry $b_{i_1,\ldots,i_J}$ of $\mathcal{B}$ is the law of a sum of product Normals\footnote{A product Normal is the distribution of the product of $n$ independent centred Normal random variables.}: it is symmetric around zero, with fatter tails than both a standard Gaussian or a standard Laplace distribution (see the supplement for further details). Note that a product Normal prior promotes sparsity due to the peak at zero.
The following result characterises the conditional prior distribution of an entry of the coefficient tensor $\mathcal{B}$ induced by the hierarchical prior in eq. \eqref{eq:prior_beta}. See the supplement for the proof.

\begin{lemma} \label{lemma:prior_entries_B}
Let $b_{ijkp} = \sum_{r=1}^R \beta_r$, where $\beta_r = \beta_{1,i}^{(r)} \beta_{2,j}^{(r)} \beta_{3,k}^{(r)} \beta_{4,p}^{(r)}$, and let $m_1=i$, $m_2=j$, $m_3=k$ and $m_4=p$. Under the prior specification in \eqref{eq:prior_beta}, the generic entry $b_{ijkp}$ of the coefficient tensor $\mathcal{B}$ has the conditional prior distribution
\begin{align*}
\pi(b_{ijkp} | \tau, \boldsymbol{\phi}, \mathbf{W}) & = p\bigg( \sum_{r=1}^R \beta_r \big| - \bigg) = p(\beta_1 | -) \ast \ldots \ast p(\beta_R | -),
\end{align*}
where $\ast$ denotes convolution and
\begin{align*}
p(\beta_r | -) & = K_r \cdot G_{4,0}^{4,0}\Bigl( \beta_r^2 \prod_{h=1}^4 (2\tau \phi_r w_{h,r,m_h})^{-1} \Bigl| \mathbf{0} \Bigr),
\end{align*}
with $G_{p,q}^{m,n}(x| \textcolor{white}{a}_{\mathbf{b}}^{\mathbf{a}})$ a Meijer G-function and
\begin{align*}
G_{4,0}^{4,0}\Bigl( \beta_r^2 \prod_{h=1}^4 (2\tau \phi_r w_{h,r,m_h})^{-1} \Bigl| \mathbf{0} \Bigr) & = \frac{1}{2\pi i} \int_{c-i^\infty}^{c+i^\infty} \Bigl( \beta_r^2 \prod_{h=1}^4 (2\tau \phi_r w_{h,r,m_h})^{-1} \Bigr)^{-s} \: \mathrm{d}s \\
K_r & = (2\pi)^{-4/2} \prod_{h=1}^4 (2\tau \phi_r w_{h,r,m_h})^{-1} \, .
\end{align*}
\end{lemma}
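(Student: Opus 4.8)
The plan is to establish the two displayed equalities in turn: first the convolution factorisation, which follows from conditional independence, and then the closed form for each summand $p(\beta_r\mid-)$, which is where the Meijer $G$-function appears.

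\textbf{Convolution step.} Conditionally on $(\tau,\boldsymbol{\phi},\mathbf{W})$, the hierarchical prior \eqref{eq:prior_beta} makes the marginals $\boldsymbol{\beta}_j^{(r)}\sim\mathcal{N}_{I_j}(\mathbf{0},\tau\phi_r W_{j,r})$ mutually independent across both $j$ and $r$. Each summand $\beta_r=\beta_{1,i}^{(r)}\beta_{2,j}^{(r)}\beta_{3,k}^{(r)}\beta_{4,p}^{(r)}$ is a measurable function of the $r$-th block of marginals only, so the variables $\beta_1,\dots,\beta_R$ are independent. Hence the density of $b_{ijkp}=\sum_{r=1}^R\beta_r$ is the convolution of the individual densities, which is the second equality in the statement.

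\textbf{Single-component step via the Mellin transform.} Fix $r$. Then $\beta_r$ is a product of four independent centred Gaussians $\beta_{h,m_h}^{(r)}\sim\mathcal{N}(0,v_h)$ with $v_h=\tau\phi_r w_{h,r,m_h}$. Because the Mellin transform turns products of independent variables into products of transforms, I would work with $|\beta_r|=\prod_{h=1}^4|\beta_{h,m_h}^{(r)}|$ and compute, via the substitution $u=x^2/(2v_h)$ in the Gaussian integral,
\begin{equation*}
\mathbb{E}\big[\,|\beta_{h,m_h}^{(r)}|^{\,s-1}\,\big]=\frac{(2v_h)^{s/2}}{\sqrt{2\pi v_h}}\,\Gamma\!\Big(\tfrac{s}{2}\Big).
\end{equation*}
Multiplying over $h=1,\dots,4$ yields a factor $\Gamma(s/2)^4$ together with the scale $\prod_h(2v_h)^{s/2}$. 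Inverting the Mellin transform and setting $t=s/2$ turns $x^{-s}\prod_h(2v_h)^{s/2}$ into $\big(x^2\prod_h(2v_h)^{-1}\big)^{-t}$, so the contour integral becomes exactly the Mellin--Barnes representation of $G_{4,0}^{4,0}$ evaluated at $\beta_r^2\prod_{h=1}^4(2\tau\phi_r w_{h,r,m_h})^{-1}$. Finally, using that $\beta_r$ is symmetric about zero, $p(\beta_r\mid-)=\tfrac12\,p_{|\beta_r|}(|\beta_r|\mid-)$, which both explains the squared argument and collects the normalising constants into $K_r$.

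\textbf{Main obstacle.} The routine parts are the independence/convolution argument and the symmetry reduction; the delicate point is the constant and scale bookkeeping in the Mellin inversion. One must track the $(2\pi)^{-1/2}$ normaliser and the power of $2$ contributed by each of the four Gaussians, the Jacobians of the substitutions $u=x^2/(2v_h)$ and $t=s/2$, and the halving from the symmetry step, and then verify that the inverse transform coincides with the standard Mellin--Barnes integral for $G_{4,0}^{4,0}(\,\cdot\mid\mathbf{0})$ with all four lower parameters equal to zero and with prefactor $K_r$.
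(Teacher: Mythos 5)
Your proposal is correct and takes essentially the same route as the paper's own (supplementary) proof: the conditional-independence/convolution step followed by a Mellin-transform computation for the product of four independent centred Gaussians, whose Mellin--Barnes inversion is precisely the $G^{4,0}_{4,0}(\cdot\,|\,\mathbf{0})$ representation displayed in the lemma, and your key identity $\E\big[|\beta_{h,m_h}^{(r)}|^{s-1}\big]=(2v_h)^{s/2}(2\pi v_h)^{-1/2}\Gamma(s/2)$ is the standard ingredient of that argument. As a bonus, carrying your bookkeeping through to the end (the $\Gamma(s/2)^4$ factor surviving into the contour integral as $\Gamma(t)^4$ after $t=s/2$, and the prefactor $\prod_{h=1}^4(2\pi\tau\phi_r w_{h,r,m_h})^{-1/2}$ after the factor $2$ from $\mathrm{d}s=2\,\mathrm{d}t$ cancels the symmetry halving) shows that the printed statement contains two typos which your derivation implicitly corrects: the displayed Mellin--Barnes integrand omits the gamma factors, and $K_r$ should involve the power $-1/2$ rather than $-1$.
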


The use of Meijer G- and Fox H-functions is not new in econometrics (e.g., \cite{Abadir97MixedNormal_Meijer-G-function}), and they have been recently used for defining prior distributions in Bayesian statistics (\cite{Andrade15H-functions_prior_posterior}, \cite{Andrade17G-Meijer_prior_posterior}).

From eq. \eqref{eq:tensor_normal_dsitrbution}, we have that the covariance matrices $\Sigma_j$ enter the likelihood in a multiplicative way, therefore separate identification of their scales requires further restrictions. \cite{Wang09Bayes_matrixNormalGraph} and \cite{Dobra15TensorKalmanFilter} adopt independent hyper-inverse Wishart prior distributions (\cite{Dawid93HyperMarkov_decomposableGraphs}) for each $\Sigma_j$, then impose the identification restriction $\Sigma_{j,11} = 1$ for $j=2,\ldots,J-1$. The hard constraint $\Sigma_j=\mathbf{I}_{I_j}$ (where $\mathbf{I}_j$ is the identity matrix of size $j$), for all but one $n$, implicitly imposes that the dependence structure within different modes is the same, but there is no dependence between modes.
We follow \cite{Hoff11SeparableCovArray_Tucker}, who suggests to introduce dependence between the Inverse Wishart prior distribution of each $\Sigma_j$ via a hyper-parameter $\gamma$ affecting their prior scale. To account for marginal dependence, we add a level of hierarchy, thus obtaining
\begin{align}
\pi(\gamma) \sim \mathcal{G}a(a_\gamma,b_\gamma) \qquad \pi(\Sigma_j | \gamma) & \sim \mathcal{IW}_{I_j}(\nu_j,\gamma\Psi_j).
\label{eq:prior_Sigmas}
\end{align}
Define $\Lambda = \lbrace \lambda_{j,r} : j=1,\ldots,J, \: r=1,\ldots,R \rbrace$ and $\mathbf{W} = \lbrace W_{j,r} : j=1,\ldots,J, \: r=1,\ldots,R \rbrace$, and let $\boldsymbol{\theta}$ denote the collection of all parameters.
The directed acyclic graph (DAG) of the prior structure is given in Fig. \ref{fig:flow_prior}.

Note that our prior specification is flexible enough to include Minnesota-type restrictions or hierarchical structures as in \cite{CanovaCiccarelli04PanelVAR}.

\begin{figure}[t]
\centering
\captionsetup{width=0.9\linewidth}
\includegraphics[trim= 0mm 0mm 0mm 0mm,clip,scale= 1.00]{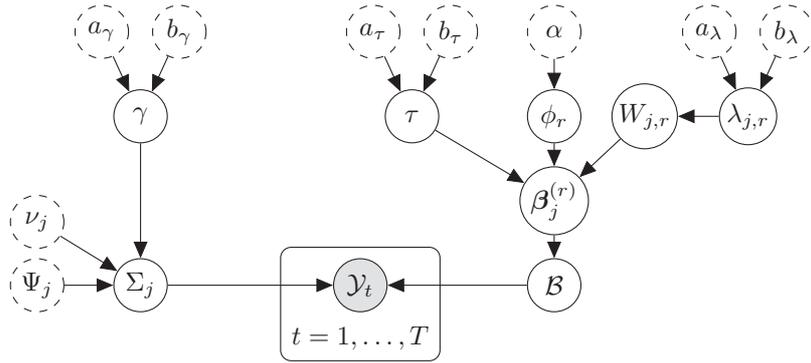}
\caption{Directed acyclic graph of the model in eq. \protect\eqref{eq:model_final} and prior structure in eqq. \protect\eqref{eq:prior_beta}-\protect\eqref{eq:prior_Sigmas}. Gray circles denote observable variables, white solid circles indicate parameters, white dashed circles indicate fixed hyperparameters. Directed edges represent the conditional independence relationships.}
\label{fig:flow_prior}
\end{figure}

\subsection{Posterior Computation}
Define $\mathbf{Y} = \lbrace \mathcal{Y}_t \rbrace_{t=1}^T$, $I_0 = \sum_{j=1}^J I_j$, $\boldsymbol{\beta}_{-j}^{(r)} = \lbrace \boldsymbol{\beta}_i^{(r)}: i\neq j \rbrace$ and $\mathcal{B}_{-r} = \lbrace B_i : i\neq r \rbrace$, with $B_r = \boldsymbol{\beta}_1^{(r)}\circ\ldots\circ\boldsymbol{\beta}_4^{(r)}$. The likelihood function of model \eqref{eq:model_final} is
\begin{align}
\notag
 & L(\mathbf{Y} | \boldsymbol{\theta}) = \prod_{t=1}^T (2\pi)^{-\frac{I_4}{2}} \prod_{j=1}^3 \abs{\Sigma_j}^{-\frac{I_{-j}}{2}} \\
 & \; \cdot \exp\Big( -\frac{1}{2} \Sigma_2^{-1} (\mathcal{Y}_t -\mathcal{B}\times_4 \mathbf{y}_{t-1}) \times_{1\ldots 3}^{1\ldots 3} \big( \circ_{j=1}^3 \Sigma_j^{-1} \big) \times_{1\ldots 3}^{1\ldots 3} (\mathcal{Y}_t -\mathcal{B}\times_4 \mathbf{y}_{t-1}) \Big),
\label{eq:likelihood}
\end{align}
where $\mathbf{y}_{t-1} =\vecc{\mathcal{Y}_{t-1}}$.
Since the posterior distribution is not tractable in closed form, we adopt an MCMC procedure based on Gibbs sampling. The technical details of the derivation of the posterior distributions are given in Appendix \ref{sec:apdx_computational_tensor}. We articulate the sampler in three main blocks:
\begin{enumerate}[label=(\Roman*)]
\item sample the global and middle variance hyper-parameters of the marginals, from
\begin{align}
\label{eq:posterior_psi}
p(\psi_r|\mathcal{B},\mathbf{W},\alpha) & \propto \textnormal{GiG} \big( \alpha -I_0/2, \, 2b_\tau, \, 2C_r \big) \\
p(\tau|\mathcal{B},\mathbf{W},\boldsymbol{\phi}) & \propto \textnormal{GiG} \big( a_\tau -R I_0/2, \, 2b_\tau, \, 2\sum_{r=1}^R C_r/\phi_r \big),
\label{eq:posterior_tau}
\end{align}
where $C_r = \sum_{j=1}^J \boldsymbol{\beta}_j^{(r)'} W_{j,r}^{-1} \boldsymbol{\beta}_j^{(r)}$, then set $\phi_r = \psi_r/\sum_{l=1}^{R} \psi_l$. For improving the mixing, we sample $\tau$ with a Hamiltonian Monte Carlo (HMC) step (\cite{Neal11HamiltonianMC}).

\item sample the hyper-parameters of the local variance component of the marginals and the marginals themselves, from
\begin{align}
\label{eq:posterior_lambda}
& p\big( \lambda_{j,r}|\boldsymbol{\beta}_j^{(r)},\phi_r,\tau \big) \propto \mathcal{G}a \big( a_\lambda +I_j, \, b_\lambda + \norm{\boldsymbol{\beta}_j^{(r)}}_1 (\tau \phi_r)^{-1/2} \big) \\
\label{eq:posterior_w}
& p\big( w_{j,r,p}|\lambda_{j,r},\phi_r,\tau,\boldsymbol{\beta}_j^{(r)} \big) \propto \textnormal{GiG}\big( 1/2, \, \lambda_{j,r}^2, \, (\beta_{j,p}^{(r)})^2/(\tau\phi_r) \big) \\
\label{eq:posterior_betas}
& p\big( \boldsymbol{\beta}_j^{(r)}|\boldsymbol{\beta}_{-j}^{(r)},\mathcal{B}_{-r},W_{j,r},\phi_r,\tau,\mathbf{Y},\Sigma_1,\ldots,\Sigma_3 \big) \propto \mathcal{N}_{I_j}(\bar{\boldsymbol{\mu}}_{\boldsymbol{\beta}_j}, \, \bar{\Sigma}_{\boldsymbol{\beta}_j}).
\end{align}

\item sample the covariance matrices and the latent scale, respectively, from
\begin{align}
\label{eq:posterior_Sigmar}
p(\Sigma_j|\mathcal{B},\mathbf{Y},\Sigma_{-j},\gamma) & \propto \mathcal{IW}_{I_j}(\nu_j + I_j, \,\gamma\Psi_j + S_j) \\
p(\gamma|\Sigma_1,\ldots,\Sigma_3) & \propto \mathcal{G}a \Big( a_\gamma + \sum_{j=1}^3 \nu_j I_j, \, b_\gamma + \sum_{j=1}^3 \textnormal{tr}(\Psi_j \Sigma_j^{-1}) \Big).
\label{eq:posterior_gamma}
\end{align}
\end{enumerate}

\section{Application to Multilayer Dynamic Networks} \label{sec:applications}
We apply the proposed methodology to study jointly the dynamics of international trade and credit networks. The international trade network has been previously studied by several authors (e.g., \cite{Fieler11ECTA_COMTRADEdata}, \cite{Eaton02ECTA_trade}), but to the best of our knowledge, this is the first attempt to model the dynamics of two networks jointly.
The bilateral trade data come from the COMTRADE database, whereas the data on bilateral outstanding capital come from the Bank of International Settlements database.
Our sample of yearly observations for 10 countries runs from 2003 to 2016. At at each time $t$, the $3$-order tensor $\mathcal{Y}_t$ has size $(10,10,2)$ and represents a $2$-layer node-aligned network (or multiplex) with $10$ vertices (countries), where each edge is given by a bilateral trade flow or financial stock.
See the supplement for data description

We estimate the tensor autoregressive model in eq. \eqref{eq:model_final}, using the prior structure described in section \ref{sec:bayesian_inference}, running the Gibbs sampler for $N=100,000$ iterations after $30,000$ burn-in iterations. We retain every second draw for posterior inference.

\begin{figure}[t!h]
\setlength{\abovecaptionskip}{6pt}
\captionsetup{width=0.95\linewidth}
\centering	
\includegraphics[trim= 0mm 0mm 0mm 0mm,clip,height= 5.5cm,width= 6.2cm]{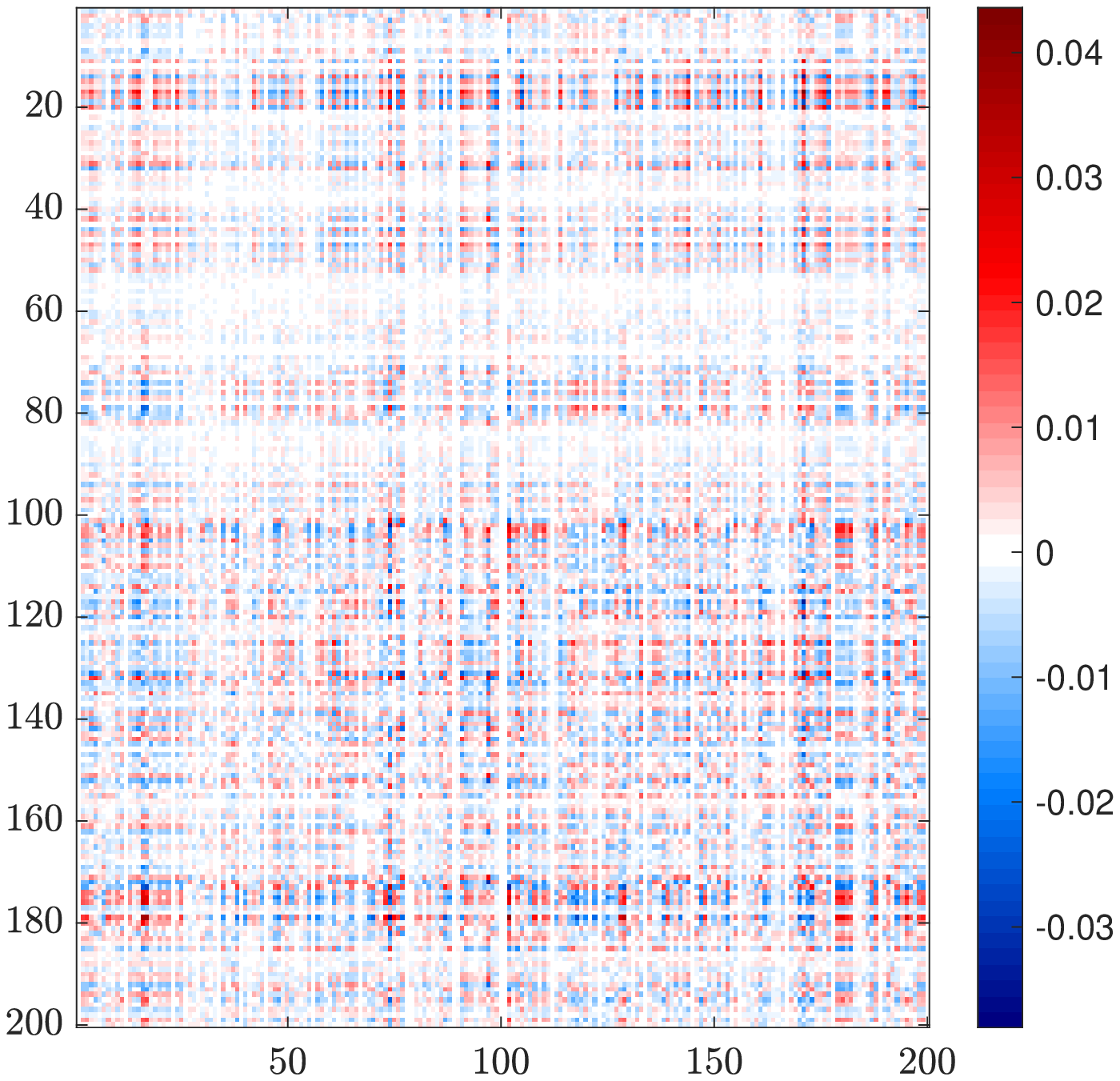} \hspace*{2ex}
\includegraphics[trim= 0mm 0mm 0mm 0mm,clip,height= 5.5cm,width= 6.0cm]{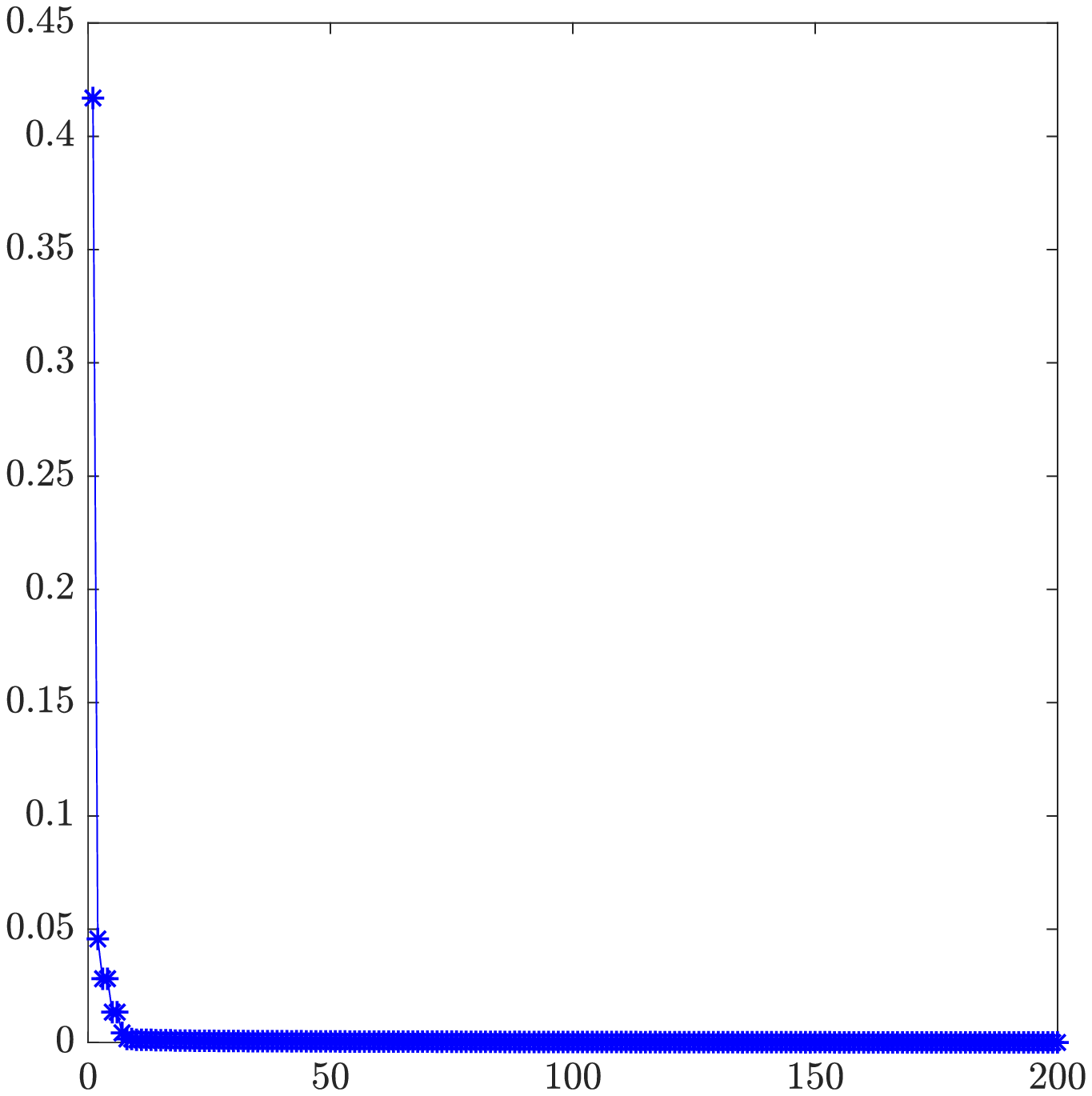}
\caption{\textit{Left:} mode-$4$ matricization of estimated coefficient tensor $\hat{B}_{(4)}$. \textit{Right:} log-spectrum of $\hat{B}_{(4)}$, decreasing order.}
\label{fig:application_ART_tensor}
\end{figure}

\begin{figure}[t!h]
\setlength{\abovecaptionskip}{6pt}
\captionsetup{width=0.95\linewidth}
\centering
\includegraphics[trim= 0mm 0mm 0mm 0mm,clip,height= 4.0cm,width= 4.2cm]{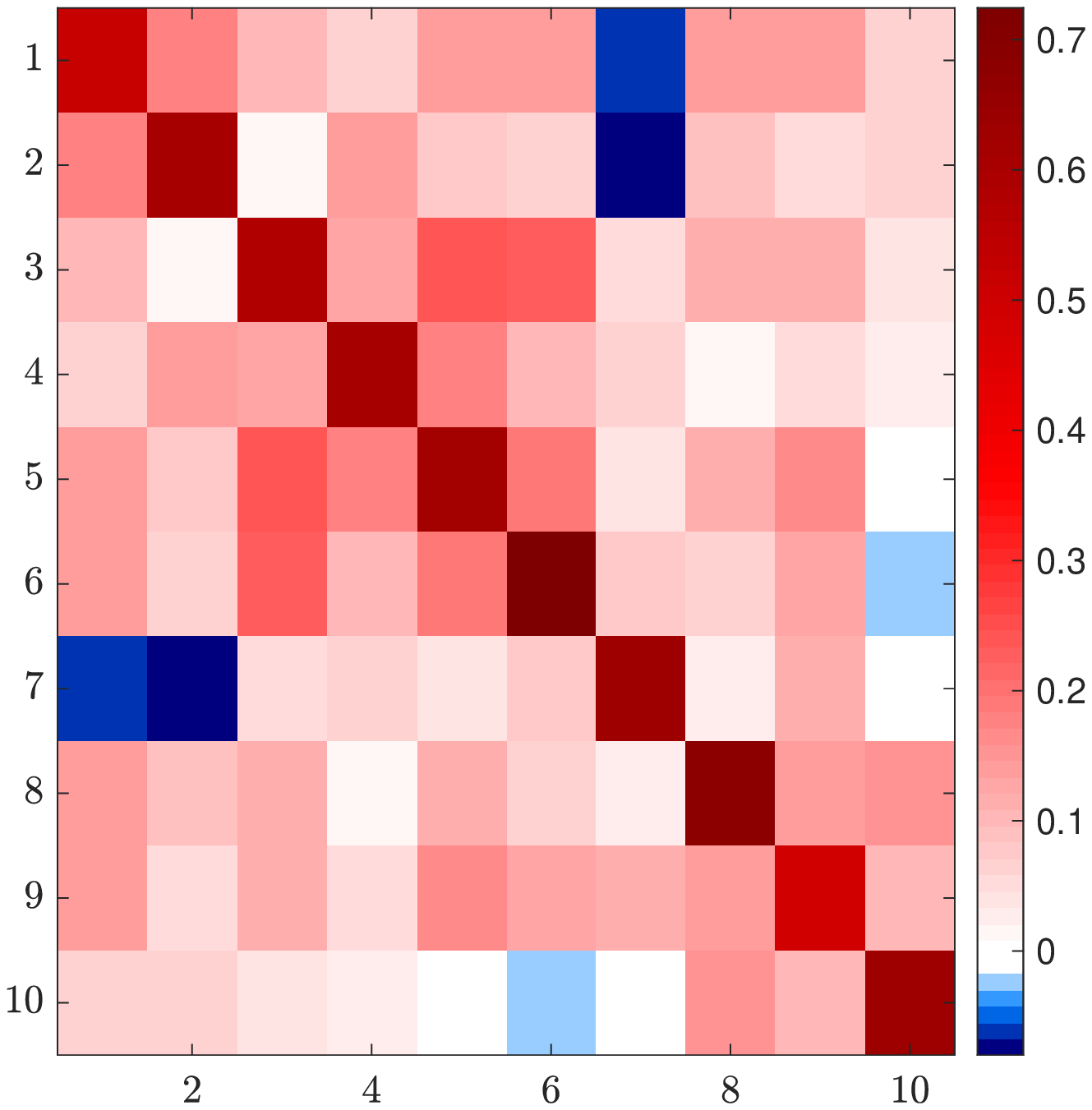} \hspace*{1ex}
\includegraphics[trim= 0mm 0mm 0mm 0mm,clip,height= 4.0cm,width= 4.2cm]{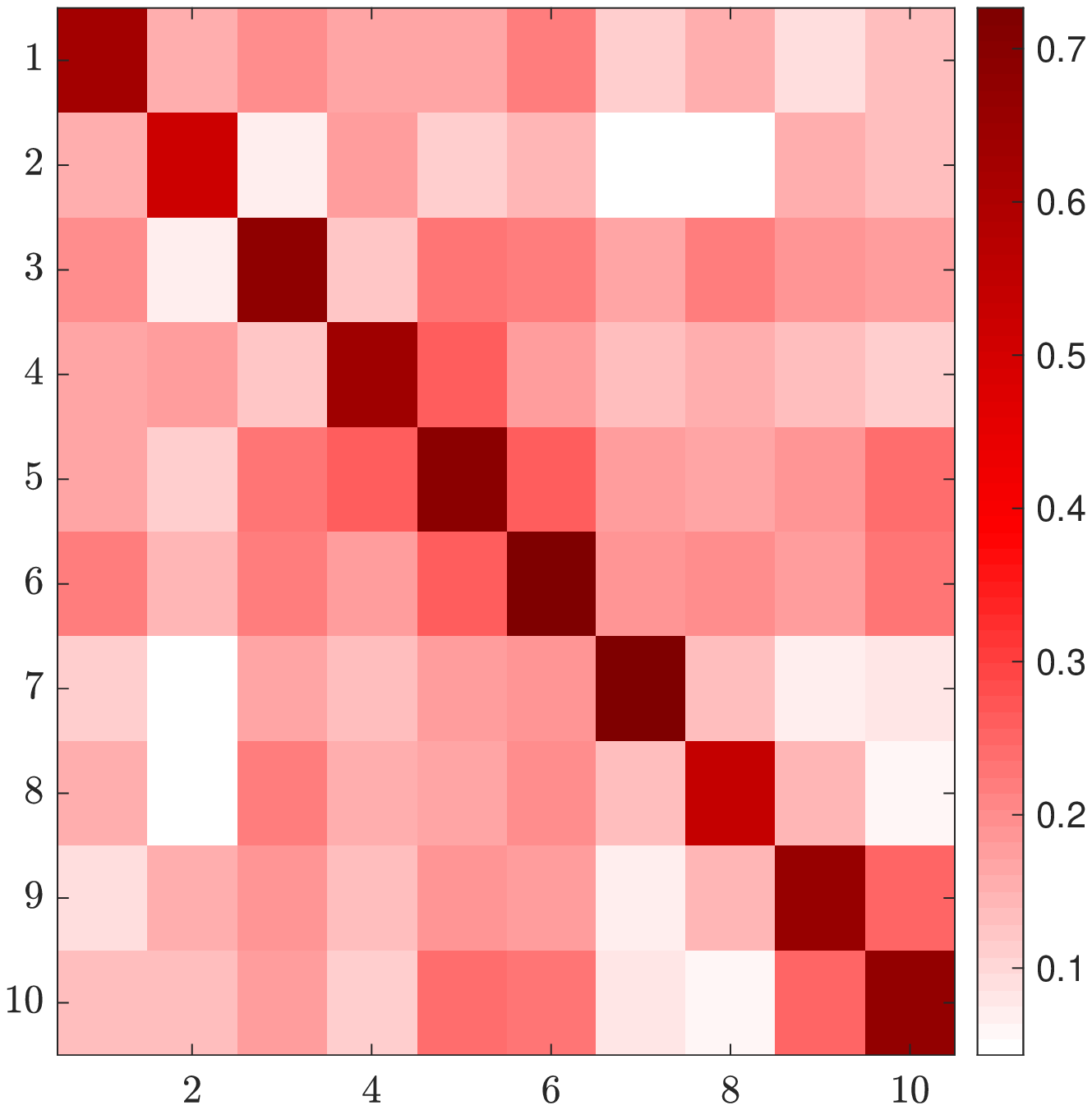} \hspace*{1ex}
\includegraphics[trim= 0mm 0mm 0mm 0mm,clip,height= 4.0cm,width= 4.2cm]{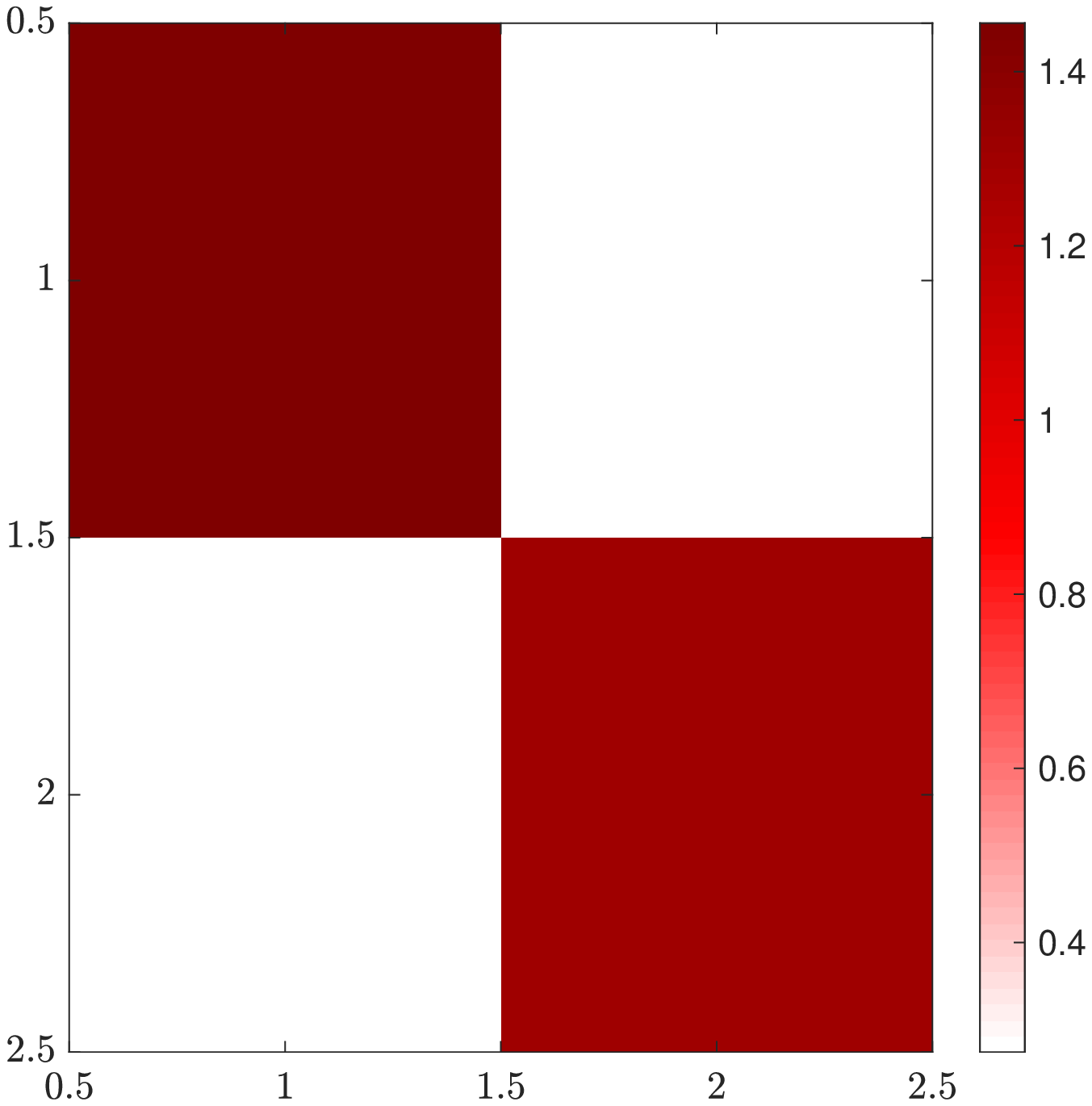}
\caption{Estimated covariance matrices: $\hat{\Sigma}_1$ (\textit{left}), $\hat{\Sigma}_2$ (\textit{center}), $\hat{\Sigma}_3$ (\textit{right}).}
\label{fig:application_ART_Sigmas}
\end{figure}

The mode-$4$ matricization of the estimated coefficient tensor, $\hat{B}_{(4)}$, is shown in the left panel of Fig. \ref{fig:application_ART_tensor}. The $(i,j)$-th entry of the matrix $\hat{B}_{(4)}$ reports the impact of the edge $j$ on edge $i$ (in vectorised form\footnote{For example, $j=21$ and $i=4$ corresponds to the coefficient of entry $\mathcal{Y}_{1,3,1,t-1}$ on $\mathcal{Y}_{4,1,1,t}$.}). The first $100$ rows/columns correspond to the edges in the first layer.
Hence, two rows of the matricized coefficient tensor are similar when two edges are affected by all the edges of the (lagged) network in a similar way, whereas two similar columns identify the situation where two edges impact the (next period) network in a similar way.
The overall distribution of the estimated entries of $\hat{B}_{(4)}$ is symmetric around zero and leptokurtic, as a consequence of the shrinkage to zero of the estimated coefficients. The right panel of Fig. \ref{fig:application_ART_tensor} shows the log-spectrum of $\hat{B}_{(4)}$. As all eigenvalues of $\hat{B}_{(4)}$ have modulus smaller than one, we conclude that the estimated ART($1$) model is stationary\footnote{It can be shown that the stationarity of the mode-$4$ matricised coefficient tensor implies stationarity of the ART($1$) process.}.
Fig. \ref{fig:application_ART_Sigmas} shows the estimated covariance matrices. In all cases, the highest values correspond to individual variances, while the estimated covariances are lower in magnitude and heterogeneous. We also find evidence of heterogeneity in the dependence structure, since $\Sigma_1$, which captures the covariance between rows (i.e., exporting and creditor countries), differs from $\Sigma_2$, which describes the covariance between columns (i.e., importing and debtor countries). With few exceptions, estimated covariances are positive.


After estimating the ART($1$) model \eqref{eq:model_final}, we may investigate shock propagation across the network computing generalised and orthogonalised impulse response functions presented in equations \eqref{eq:GIRF_new} and \eqref{eq:OIRF_new}, respectively.
Impulse responses allow us to analyze the propagation of shocks both across the network, within and across layers, and over time.
For illustration, we study the responses to a shock in all edges of  country, by applying  block Cholesky factorisation to $\Sigma$, in such a way that the shocked country contemporaneously affects all others and not vice-versa.\footnote{To save space, we do not report generalised IRFs, which are very similar to the ones presented.} Thus, the matrices $A$ and $C$ in eq. \eqref{eq:Sigma_block} reflect contemporaneous correlations across transactions of the shock-originating country and with transactions of all other countries, respectively. For expositional convenience, we report only statistically significant responses.

In the first analysis we consider a negative $1\%$ shock to US trade imports\footnote{That is, we allocate the shock across import originating countries to match import shares as in the last period of the sample.}. The results of the block Cholesky IRF at horizon $1$ are given in Fig. \ref{fig:application_ART_IRF_USimp}. We report the impact on the whole network (panel (a)) and, for illustrative purposes, the impact on Germany's transactions.
The main findings follows.

\textit{Global effect} on the network.
The negative shock to US imports has an effect on both layers (trade and financial) of the network.
There is evidence of heterogeneous responses across countries and country-specific transactions. On average, trade flows exhibit a slight expansion in response to the shock. Switzerland is the most positively affected, both in terms of exports and imports, and  trade imports of the US show (on average) a reverted positive response one period after the shock. This reflects an oscillating impulse response. The overall average effect on the financial layer is negative, similar in magnitude to the effect on the trade layer.
More specifically, we observe that Denmark's and Sweden's exports to Switzerland, Germany and France show a contraction, whereas the effect on US', Japan's and Ireland's exports to these countries is positive. We may interpret these effects as substitution effects: The decreasing share of Denmark's and Sweden's exports to Switzerland, Germany and France is offset by an increase of US, Japanese and Irish exports. In conclusion, the dynamic model can be used for predicting possible trade creation and diversion effects (e.g., see \cite{Bikker10trade_substitution_effect}).

\textit{Local effect on Germany}.
In panel (b) of Fig. \ref{fig:application_ART_IRF_USimp} we report the response of Germany's transactions to the negative shock in US imports. The effects on imports are mixed: while Germany's imports from most other EU countries increase, imports from Sweden and Denmark decrease. Likewise, Germany's exports show heterogeneous responses, whereby exports to Switzerland react strongest (positively).
The shock of US imports does not have a significant impact on Germany's outstanding credit against most countries (except Switzerland and Japan). On the other hand, the reactions of Germany's outstanding debt reflect those on trade imports.

\textit{Local effect} on other countries.
We observe that the most affected trade transactions are those of Denmark, Japan, Ireland, Sweden and US (as exporters) vis-\'a-vis Switzerland and France (as importers).
The financial layer mirrors these effects with opposite sign, while the magnitudes are comparable.
Outstanding credit of Ireland and Japan to Switzerland, Germany and France decrease at horizon $1$. By contrast, Denmark's outstanding credit to these countries increases. Note that outstanding debt of US vis-\'a-vis almost all countries decreases after the shock.
Overall, responses to a shock on US imports at horizon $1$ are heterogeneous in sign but rather low in magnitude, whereas at horizon $2$ (plot not reported) the propagation of the shock has vanished. We interpret this as a sign of fast (and monotone) decay of the IRF.

\begin{figure}[h!t]
\setlength{\abovecaptionskip}{2pt}
\captionsetup{width=.90\linewidth}
\centering
\begin{tabular}{cc}
\multicolumn{2}{c}{(a) Network IRF at $h=1$} \\
\begin{rotate}{90} \hspace*{20pt} {\scriptsize Financial (layer 2) \hspace{40pt} Trade (layer 1)} \end{rotate} &
\includegraphics[trim= 0mm 0mm 0mm 0mm,clip,height= 8.0cm,width= 4.5cm]{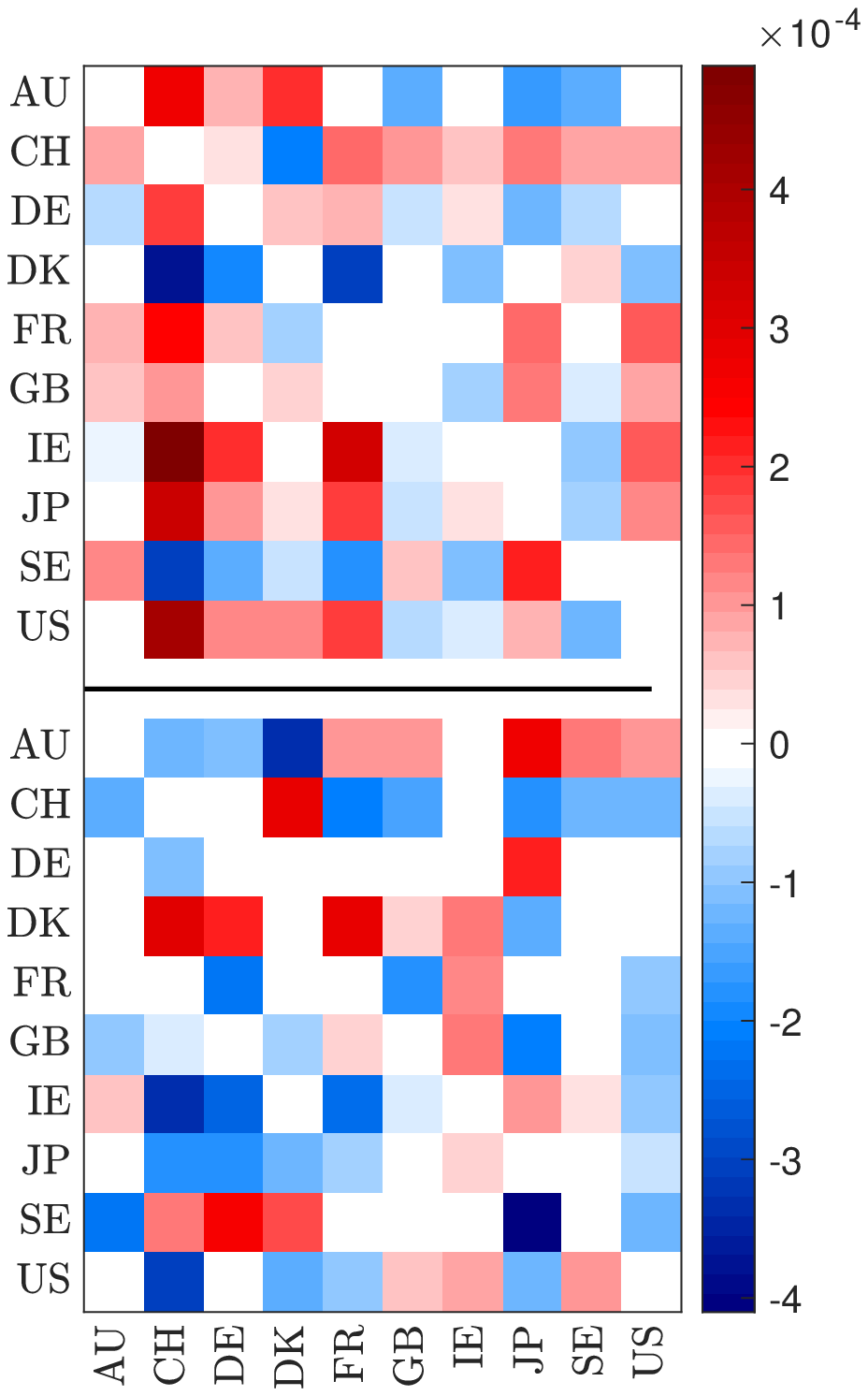}
\end{tabular}
\begin{tabular}{cc}
\multicolumn{2}{c}{(b) IRF for Germany's edges at $h=1$} \\[10pt]
\begin{rotate}{90} \hspace*{-95pt} {\scriptsize Financial (layer 2) \hspace{45pt} Trade (layer 1)} \end{rotate} &
\begin{tabular}{ccc}
{\footnotesize exports} & & {\footnotesize imports} \\
\includegraphics[trim= 0mm 30mm 0mm 30mm,clip,height= 3.0cm,width= 2.5cm]{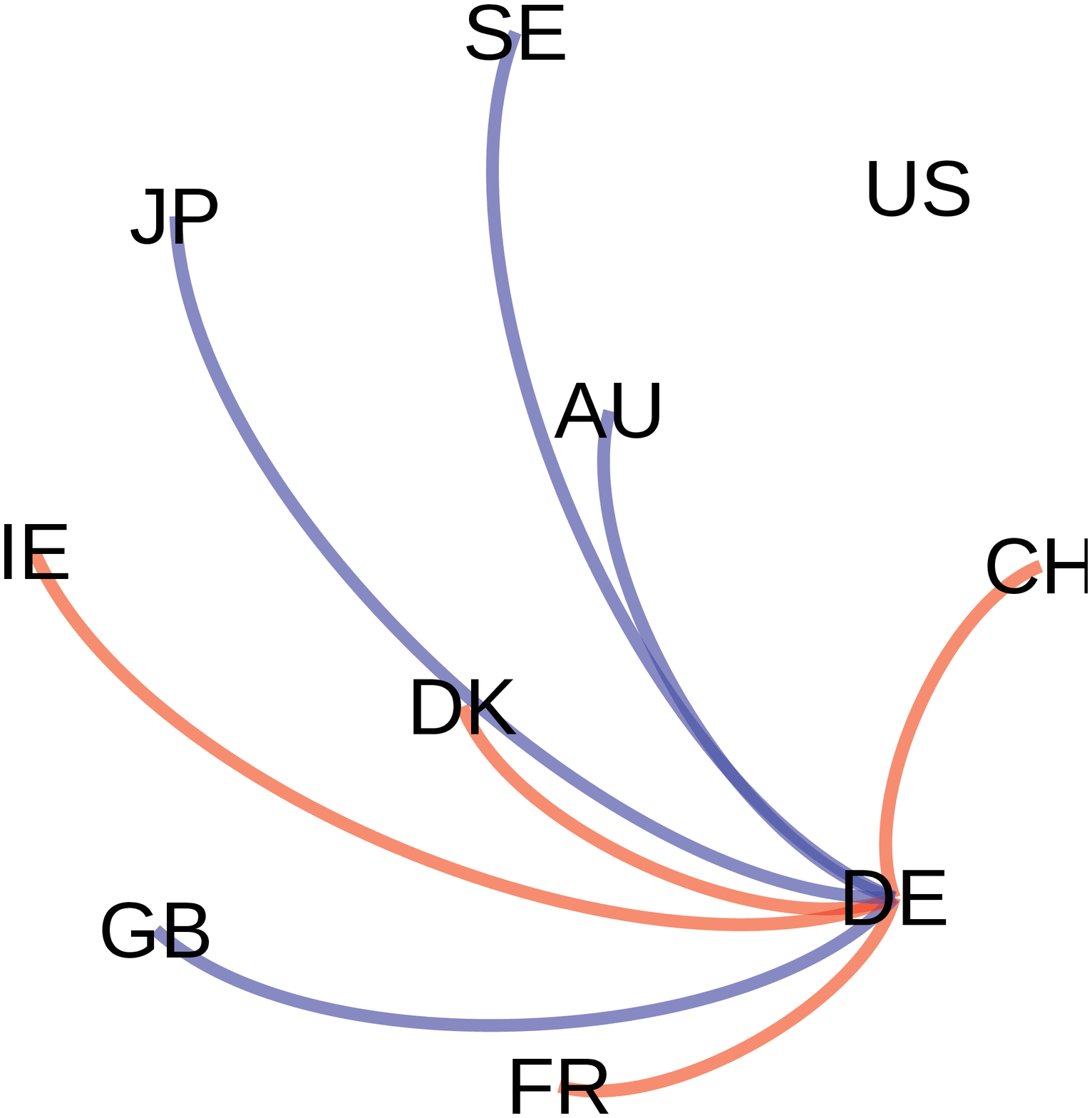} & & \includegraphics[trim= 0mm 30mm 0mm 30mm,clip,height= 3.0cm,width= 2.5cm]{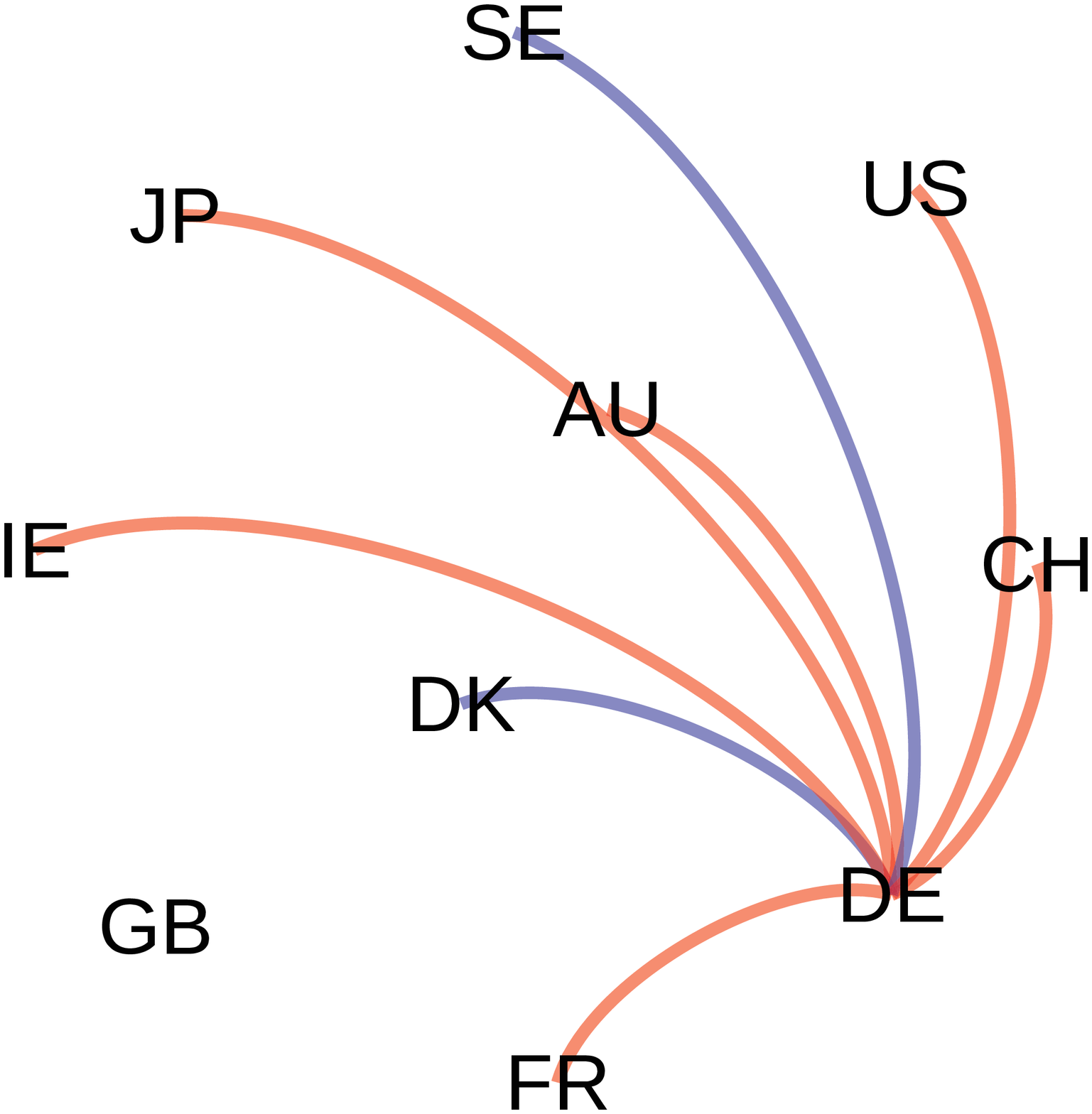} \\[-0ex]
{\footnotesize outstanding credit} & & {\footnotesize outstanding debt} \\
\includegraphics[trim= 0mm 30mm 0mm 30mm,clip,height= 3.0cm,width= 2.5cm]{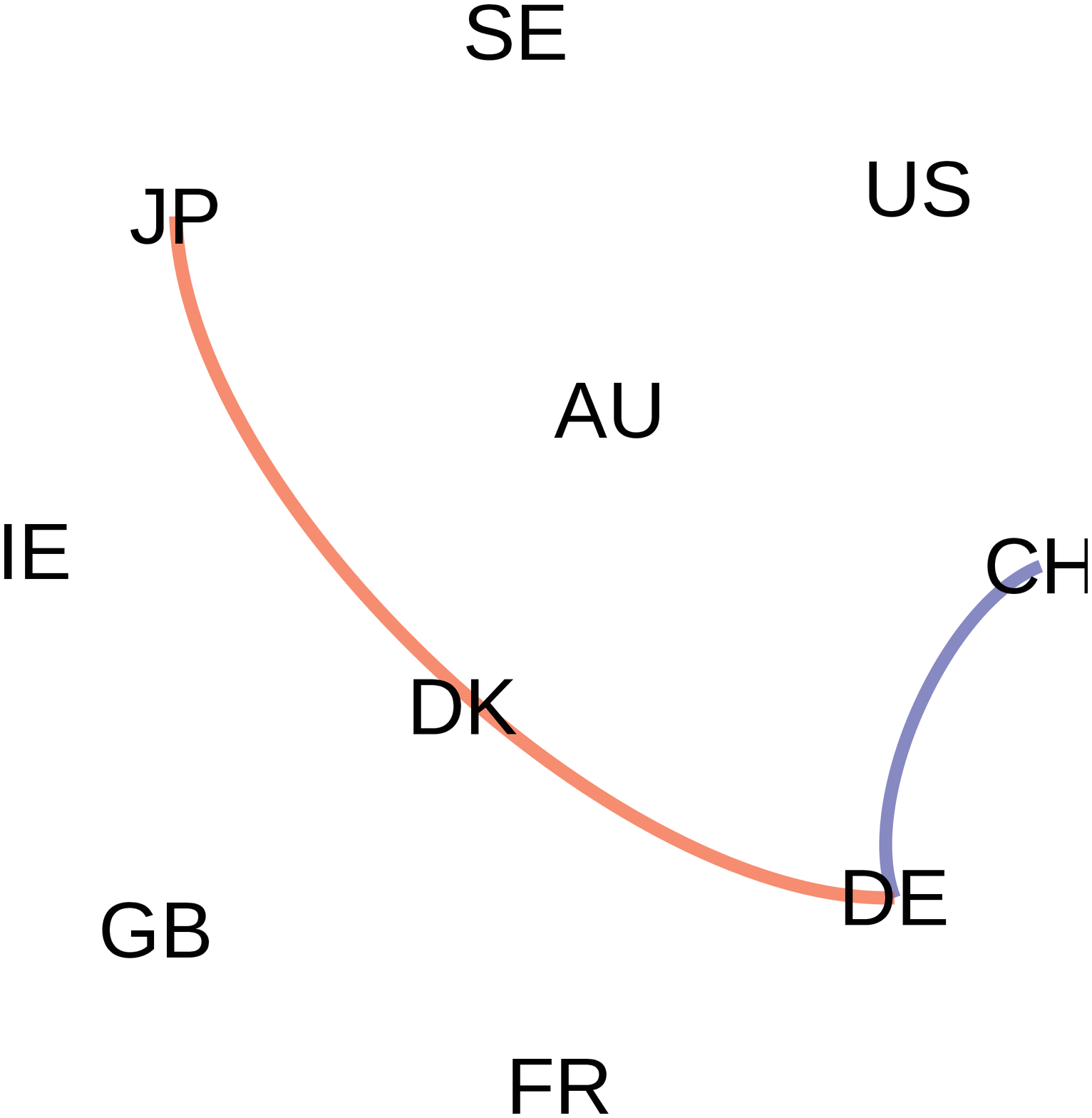} & & \includegraphics[trim= 0mm 30mm 0mm 30mm,clip,height= 3.0cm,width= 2.5cm]{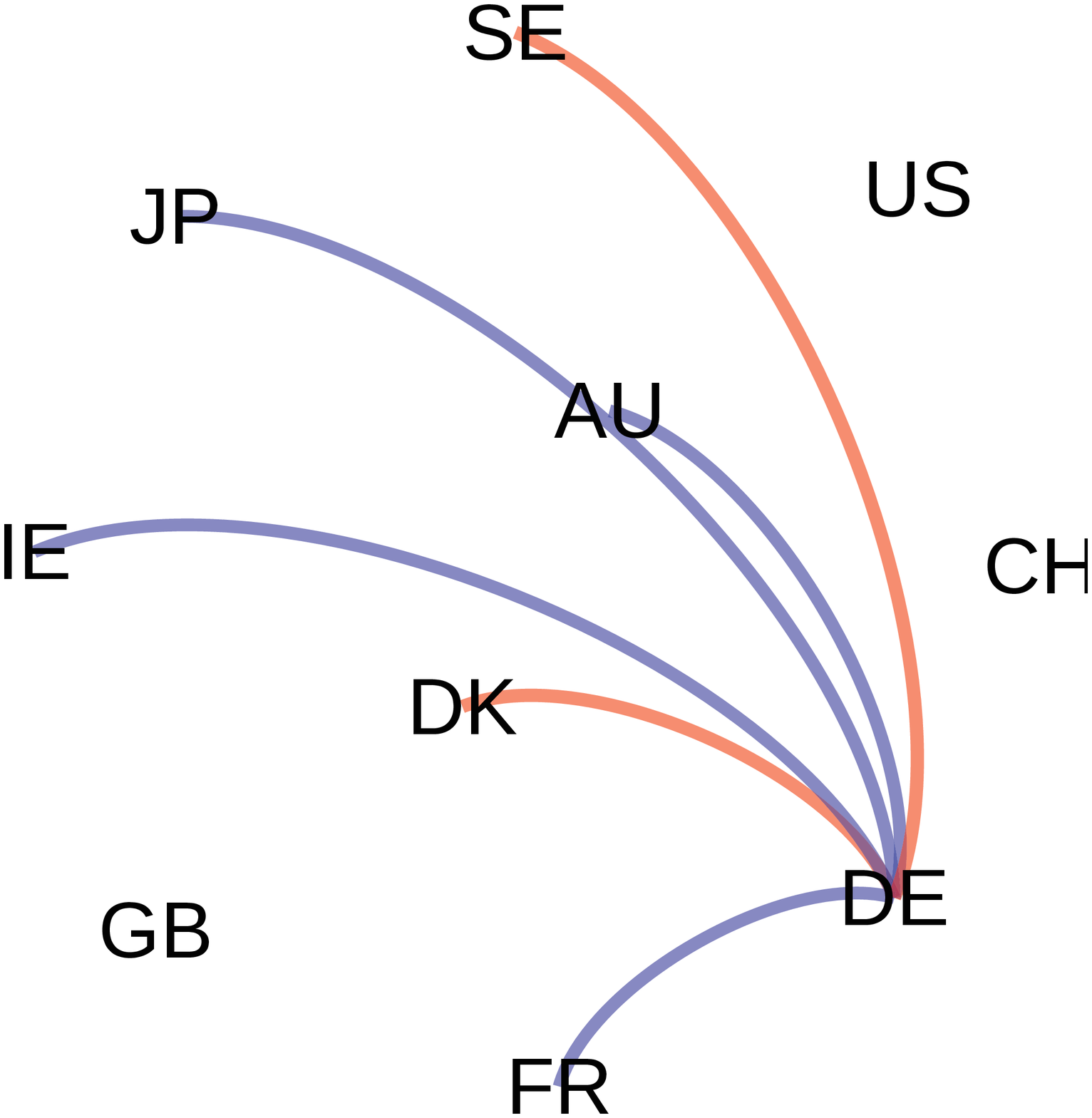} \\
\end{tabular}
\end{tabular}
\captionsetup{type=figure}
\captionof{figure}{Shock to US trade imports by -1\%. IRF at horizon $h=1$ for all (\textit{panel a}) and Germany (\textit{panel b}) financial and trade transactions. In each plot negative coefficients are in blue and positive in red.}
\label{fig:application_ART_IRF_USimp}
\end{figure}

\begin{figure}[h!t]
\setlength{\abovecaptionskip}{2pt}
\captionsetup{width=.90\linewidth}
\centering
\begin{tabular}{cc}
\multicolumn{2}{c}{(a) Network IRF at $h=1$} \\
\begin{rotate}{90} \hspace*{20pt} {\scriptsize Financial (layer 2) \hspace{40pt} Trade (layer 1)} \end{rotate} &
\includegraphics[trim= 0mm 0mm 0mm 0mm,clip,height= 8.0cm, width= 4.5cm]{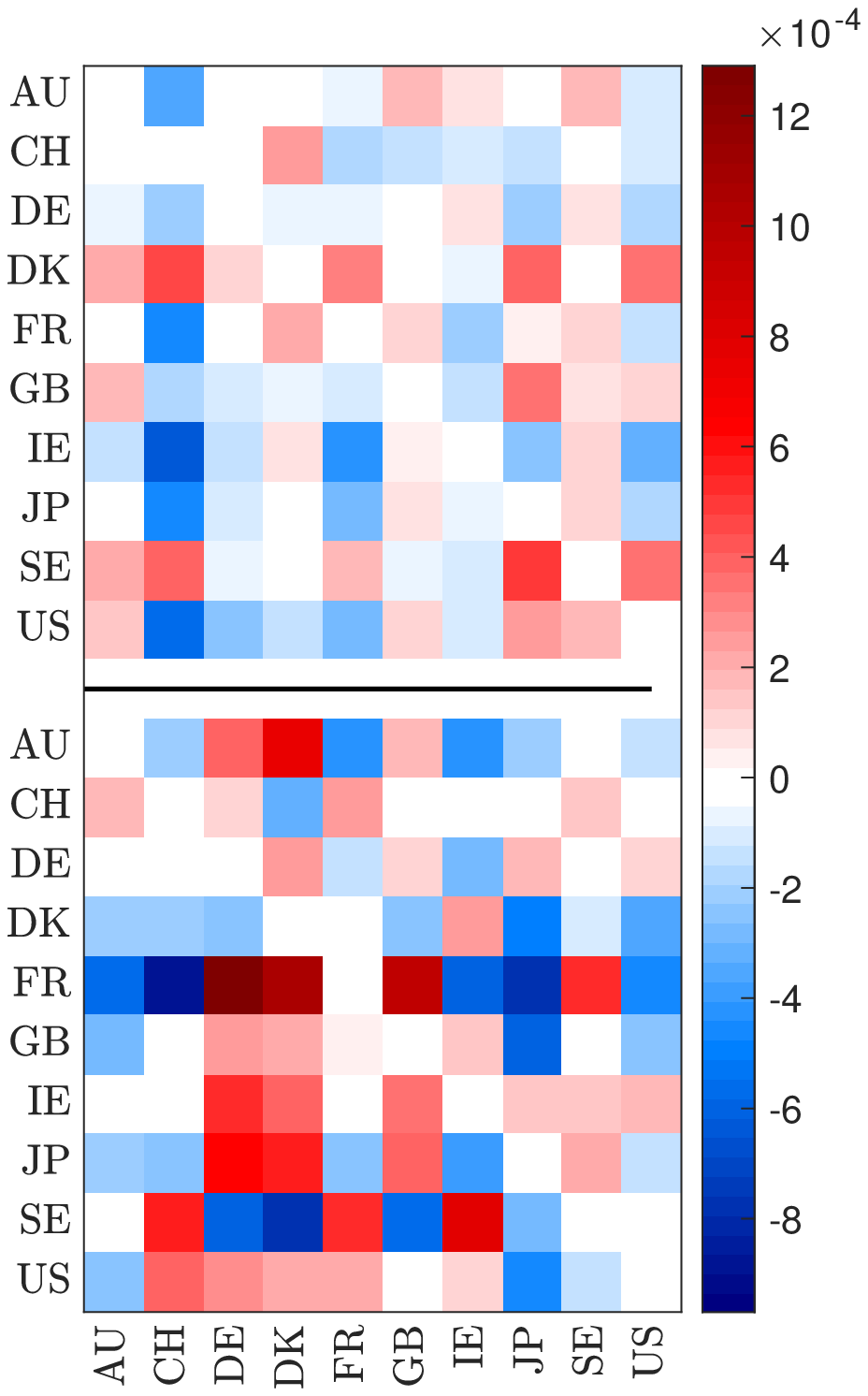}
\end{tabular}
\begin{tabular}{cc}
\multicolumn{2}{c}{(b) IRF for Germany's edges at $h=1$} \\[10pt]
\begin{rotate}{90} \hspace*{-95pt} {\scriptsize Financial (layer 2) \hspace{45pt} Trade (layer 1)} \end{rotate} &
\begin{tabular}{ccc}
{\footnotesize exports} & & {\footnotesize outstanding credit} \\
\includegraphics[trim= 0mm 30mm 0mm 30mm,clip,height= 3.0cm, width= 2.5cm]{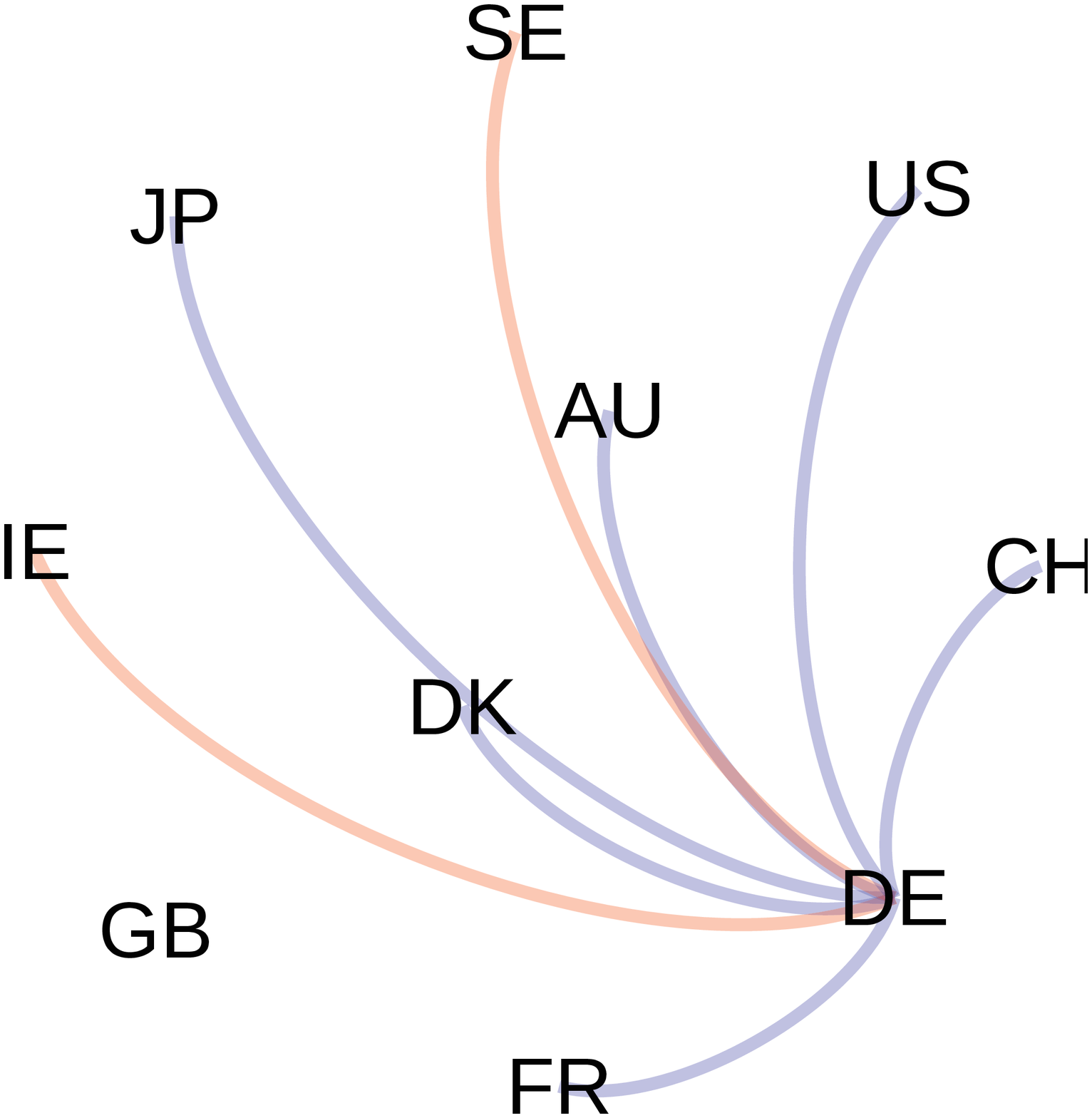} & & \includegraphics[trim= 0mm 30mm 0mm 30mm,clip,height= 3.0cm, width= 2.5cm]{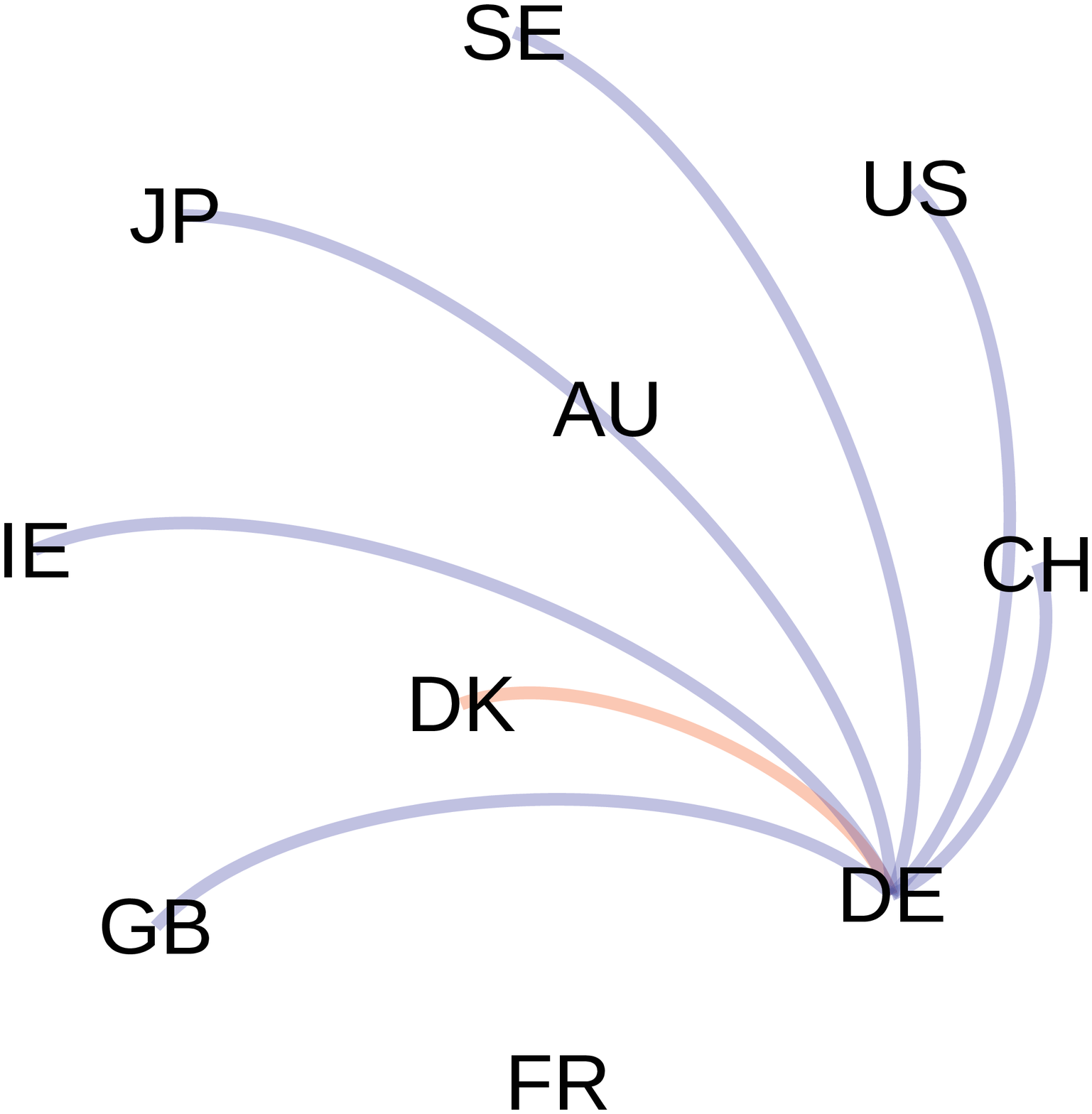} \\[-0ex]
{\footnotesize outflows} & & {\footnotesize outstanding debt} \\
\includegraphics[trim= 0mm 30mm 0mm 30mm,clip,height= 3.0cm, width= 2.5cm]{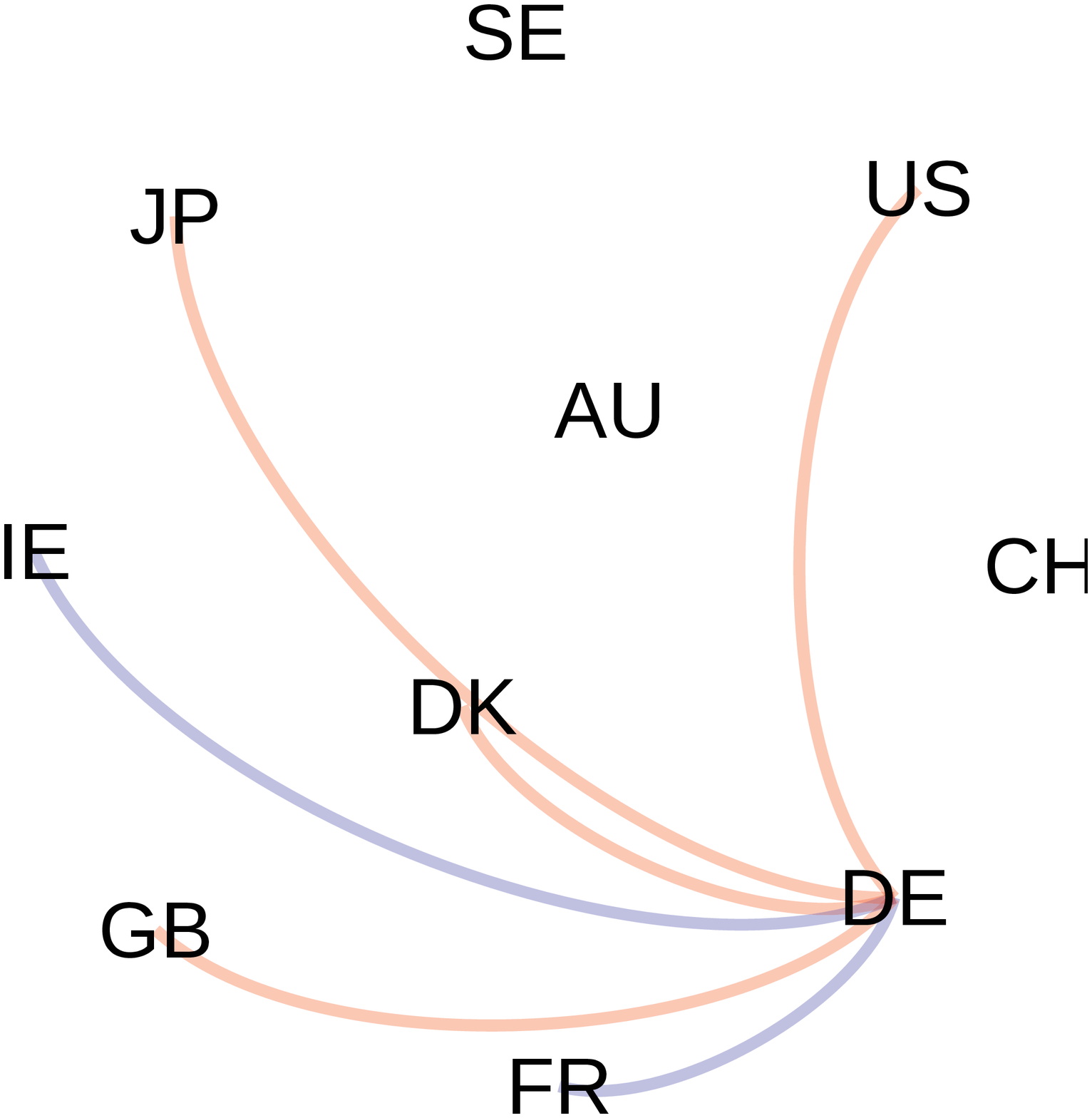} & & \includegraphics[trim= 0mm 30mm 0mm 30mm,clip,height= 3.0cm, width= 2.5cm]{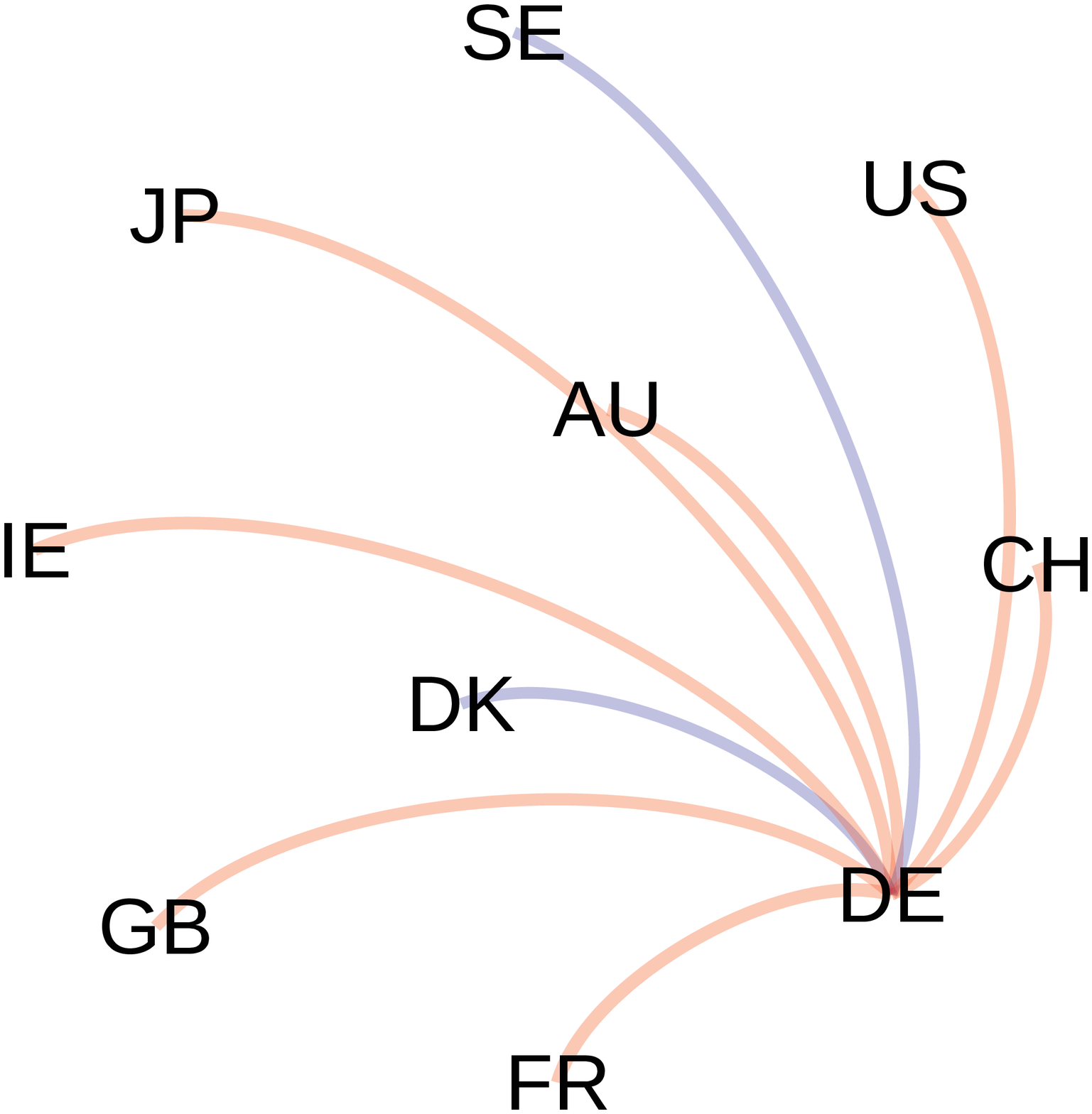} \\
\end{tabular}
\end{tabular}

\begin{tabular}{cc}
\multicolumn{2}{c}{(c) Network IRF at $h=2$} \\
\begin{rotate}{90} \hspace*{20pt} {\scriptsize Financial (layer 2) \hspace{40pt} Trade (layer 1)} \end{rotate} &
\includegraphics[trim= 0mm 0mm 0mm 0mm,clip,height= 8.0cm, width= 4.5cm]{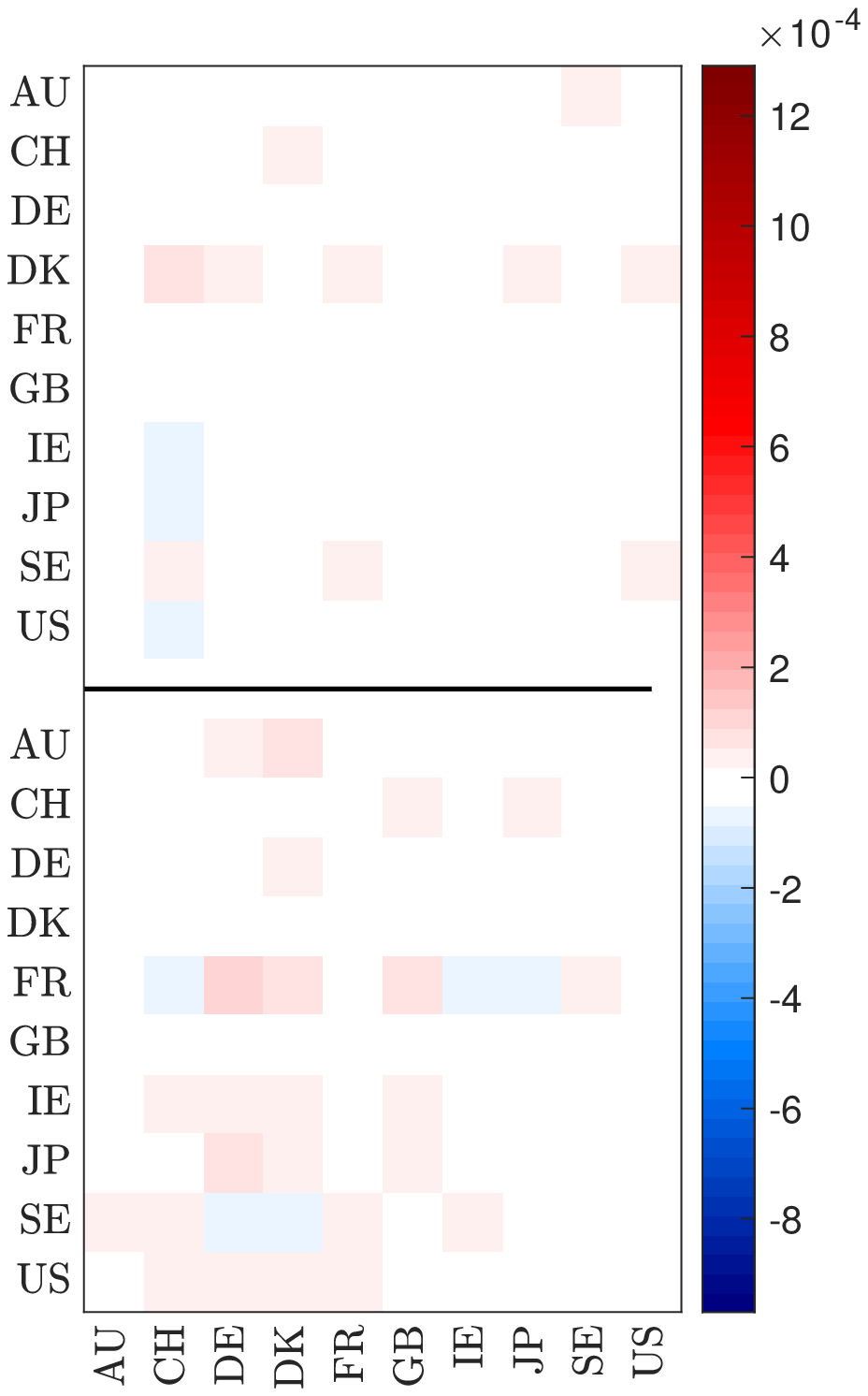}
\end{tabular}
\begin{tabular}{cc}
\multicolumn{2}{c}{(d) IRF for Germany's edges at $h=2$} \\[10pt]
\begin{rotate}{90} \hspace*{-95pt} {\scriptsize Financial (layer 2) \hspace{45pt} Trade (layer 1)} \end{rotate} &
\begin{tabular}{ccc}
{\footnotesize exports} & & {\footnotesize outstanding credit} \\
\includegraphics[trim= 0mm 30mm 0mm 30mm,clip,height= 3.0cm, width= 2.5cm]{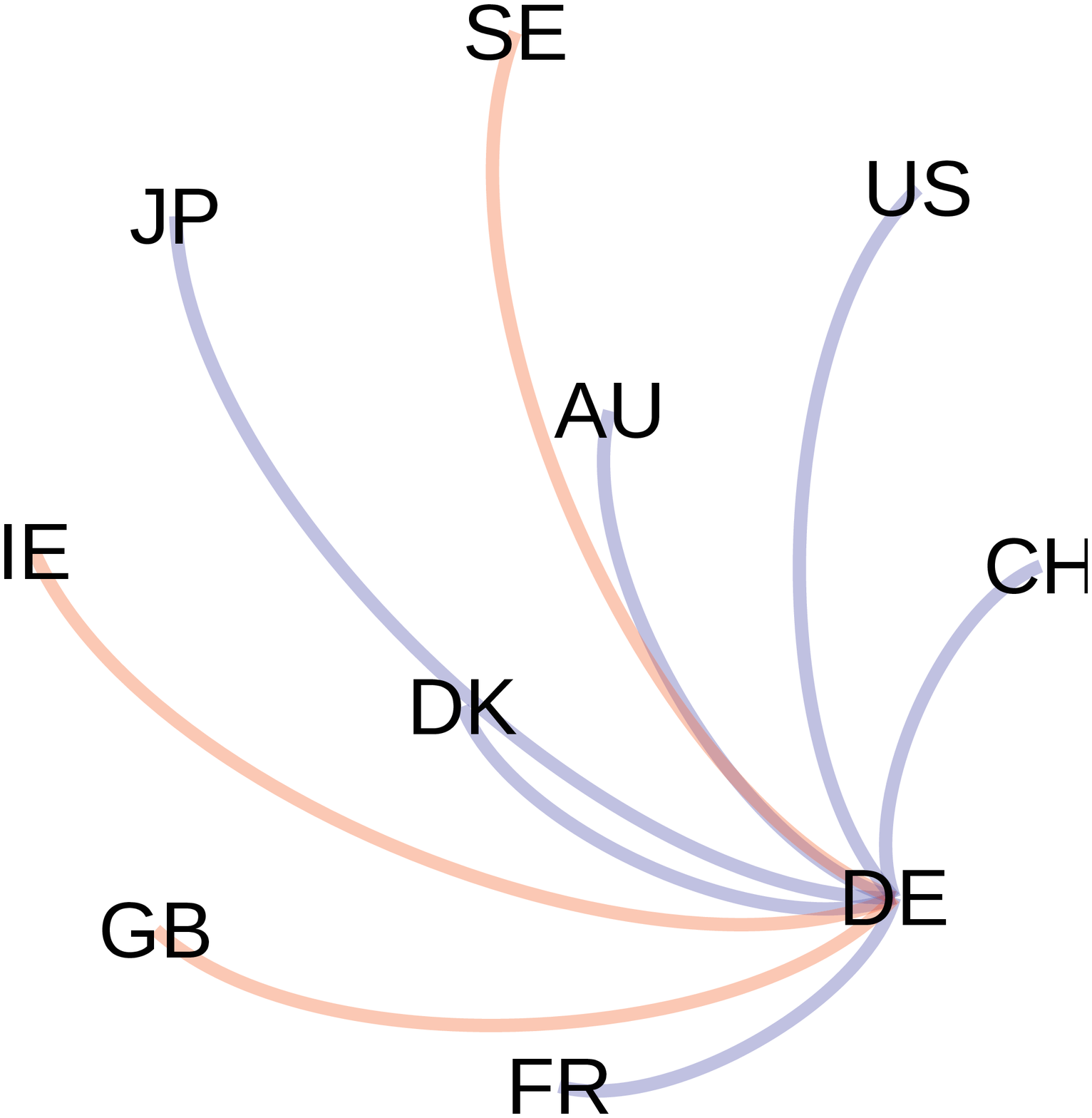} & & \includegraphics[trim= 0mm 30mm 0mm 30mm,clip,height= 3.0cm, width= 2.5cm]{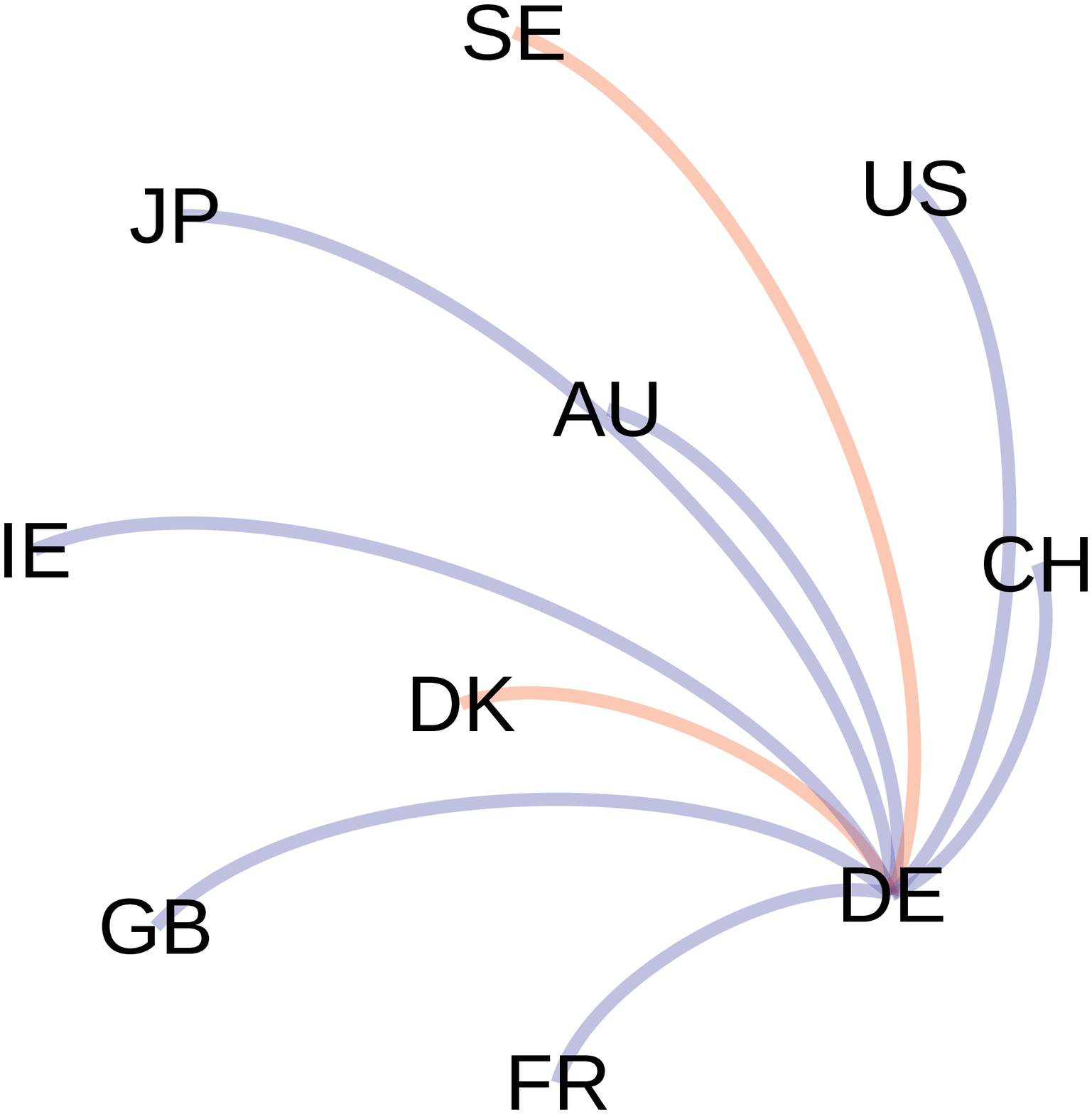} \\[-0ex]
{\footnotesize outflows} & & {\footnotesize outstanding debt} \\
\includegraphics[trim= 0mm 30mm 0mm 30mm,clip,height= 3.0cm, width= 2.5cm]{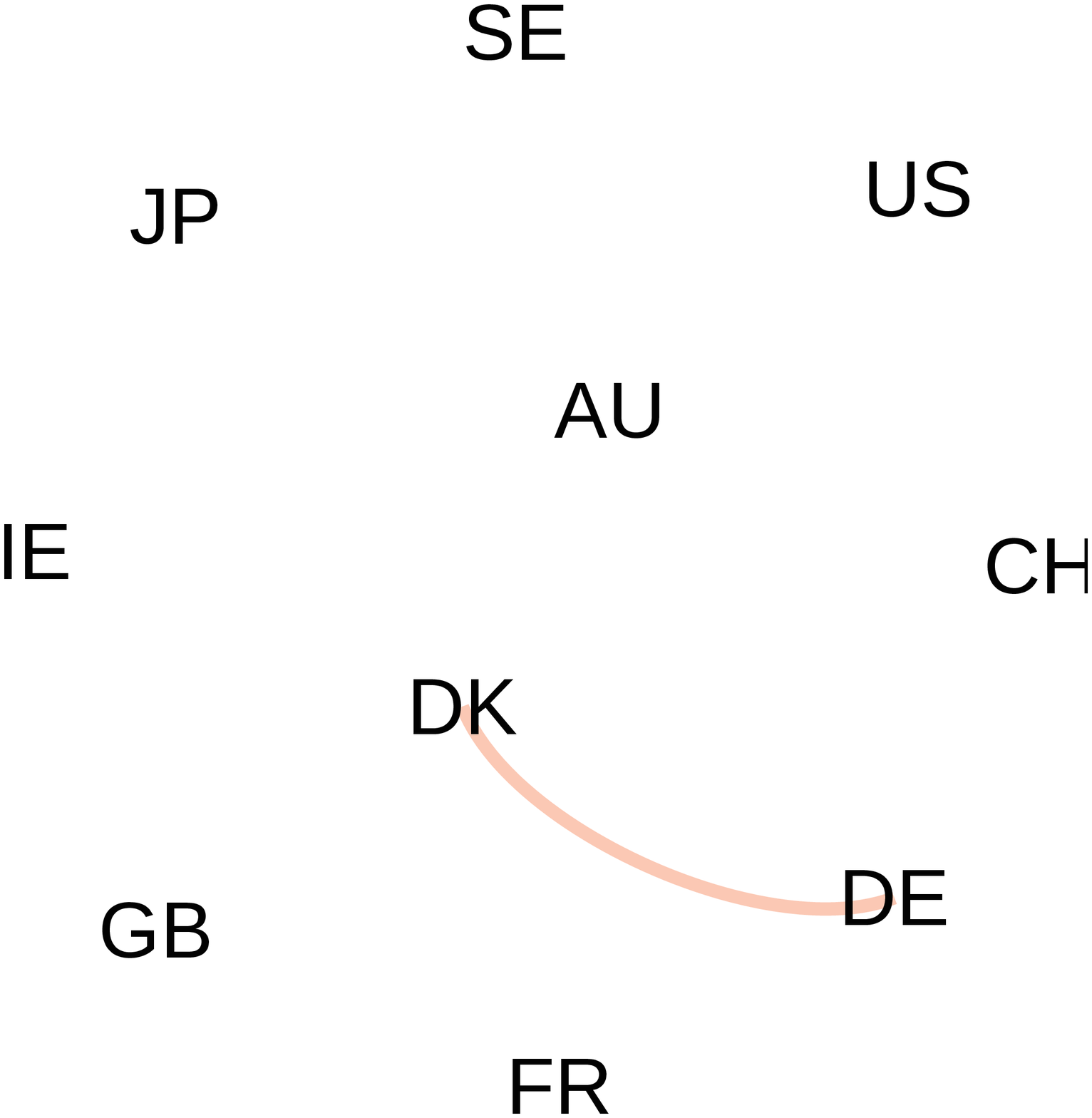} & & \includegraphics[trim= 0mm 30mm 0mm 30mm,clip,height= 3.0cm, width= 2.5cm]{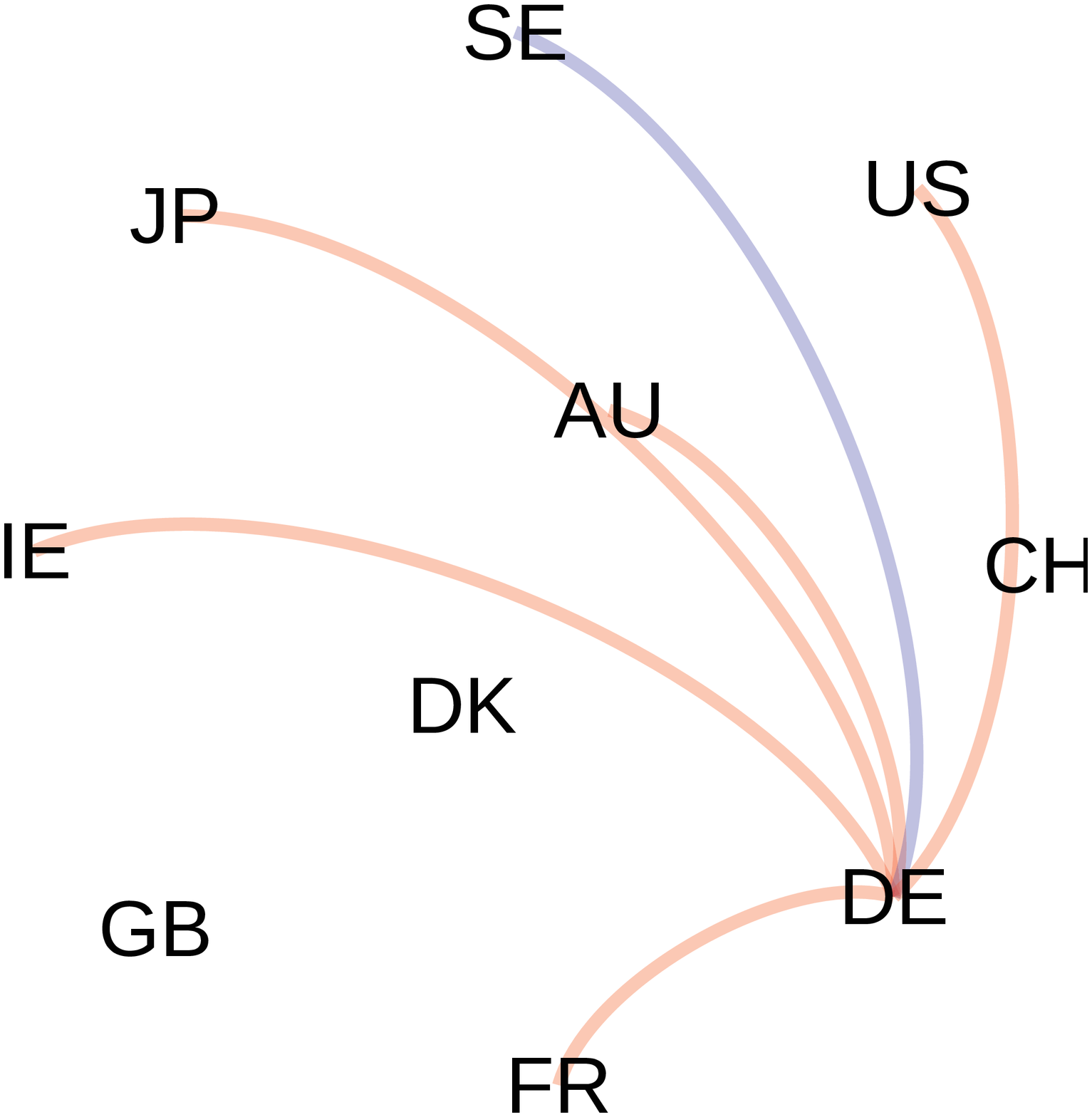} \\
\end{tabular}
\end{tabular}
\captionsetup{type=figure}
\captionof{figure}{Shock to GB capital inflows by -1\%. IRF at horizon $h=1$ for all (\textit{panel a}) and Germany (\textit{panel b}) financial and trade transactions. IRF at horizon $h=2$ for all (\textit{panel c}) and Germany (\textit{panel d}) financial and trade transactions. In each plot negative coefficients are in blue and positive in red.}
\label{fig:application_ART_IRF_UKimp1}
\end{figure}

\begin{figure}[h!t]
\setlength{\abovecaptionskip}{2pt}
\captionsetup{width=.90\linewidth}
\centering
\begin{tabular}{cc}
\multicolumn{2}{c}{(a) Network IRF at $h=1$} \\
\begin{rotate}{90} \hspace*{20pt} {\scriptsize Financial (layer 2) \hspace{40pt} Trade (layer 1)} \end{rotate} &
\includegraphics[trim= 0mm 0mm 0mm 0mm,clip,height= 8.0cm, width= 4.5cm]{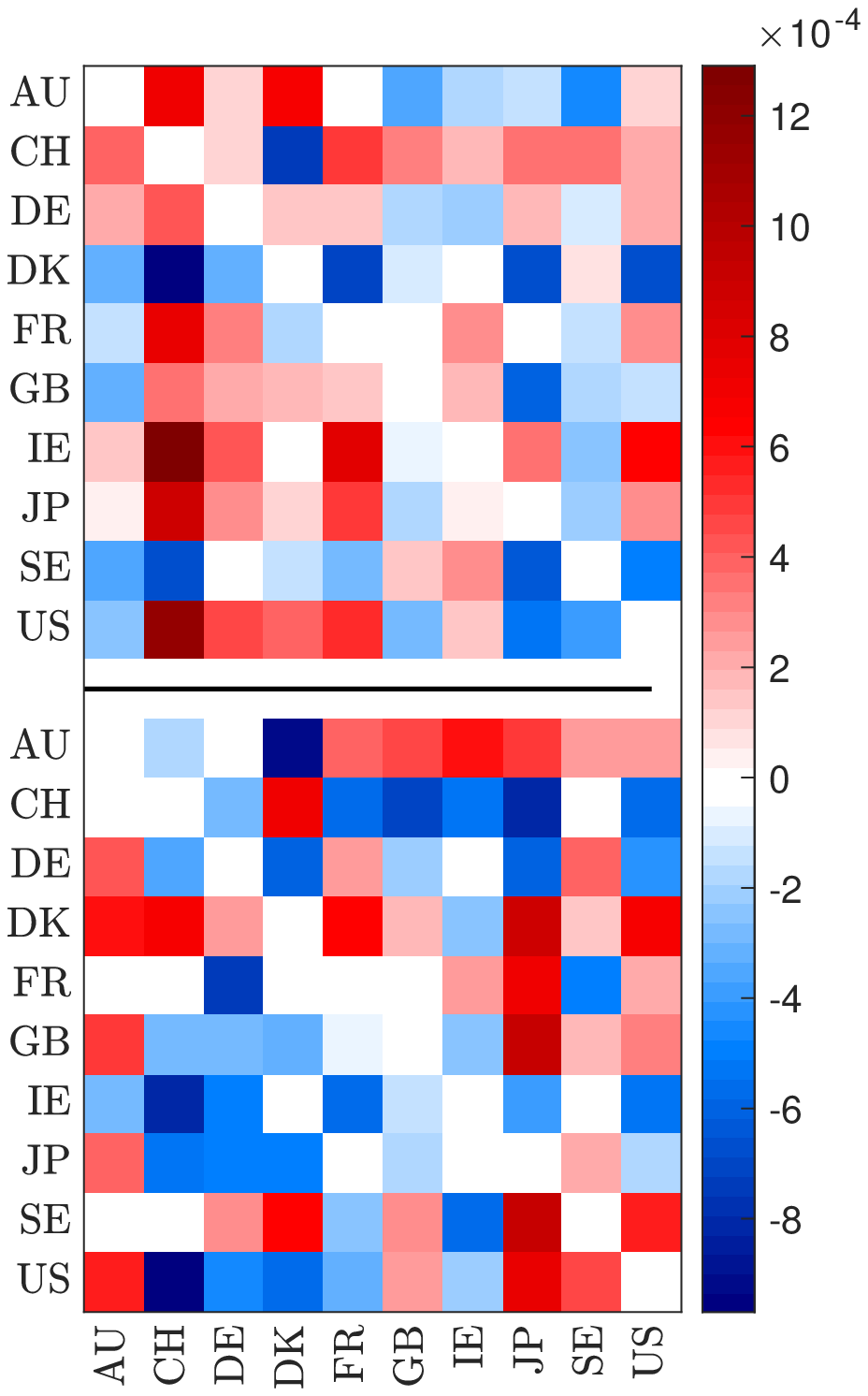}
\end{tabular}
\begin{tabular}{cc}
\multicolumn{2}{c}{(b) IRF for Germany's edges at $h=1$} \\[10pt]
\begin{rotate}{90} \hspace*{-95pt} {\scriptsize Financial (layer 2) \hspace{45pt} Trade (layer 1)} \end{rotate} &
\begin{tabular}{ccc}
{\footnotesize exports} & & {\footnotesize outstanding credit} \\
\includegraphics[trim= 0mm 30mm 0mm 30mm,clip,height= 3.0cm, width= 2.5cm]{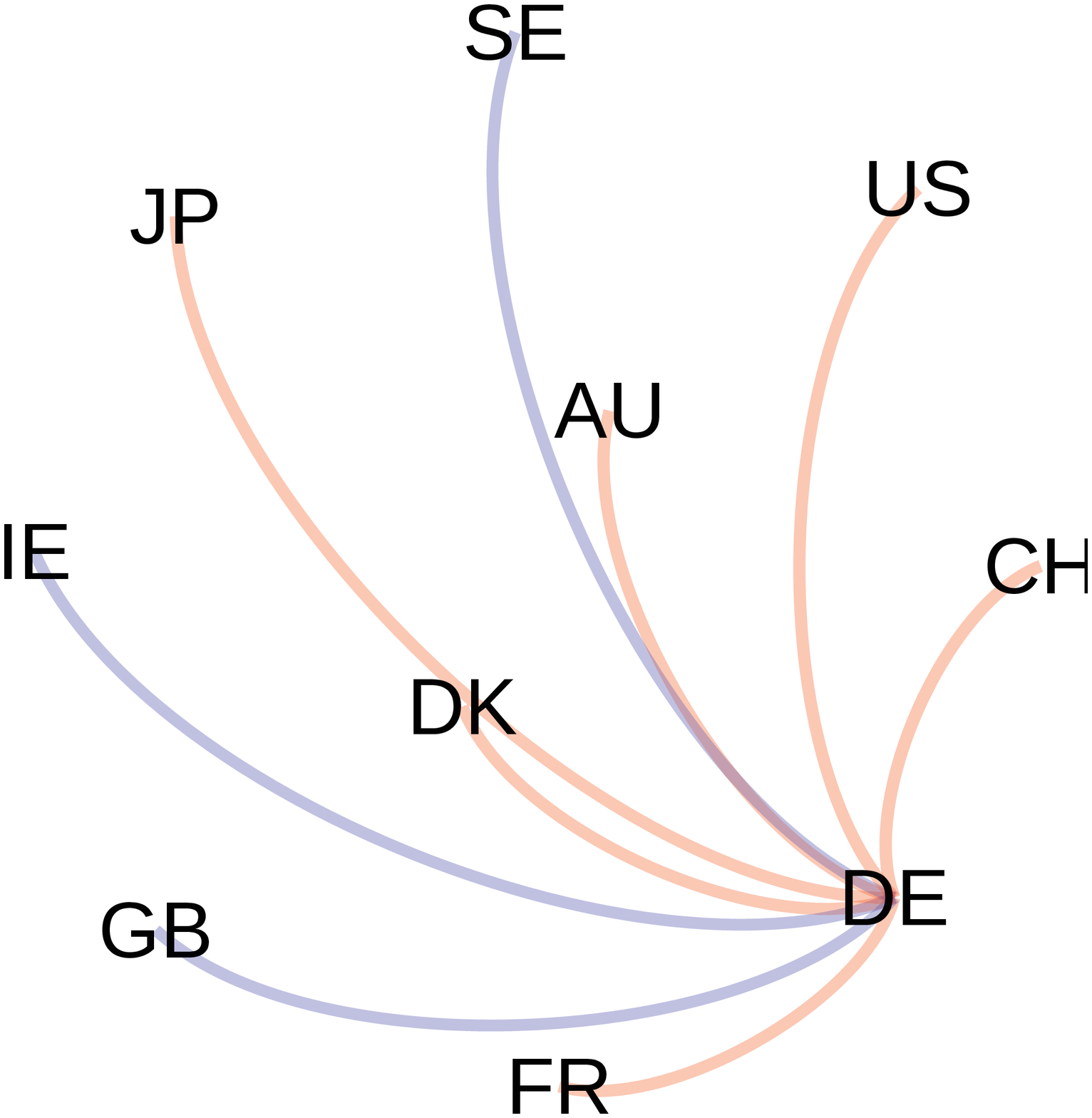} & & \includegraphics[trim= 0mm 30mm 0mm 30mm,clip,height= 3.0cm, width= 2.5cm]{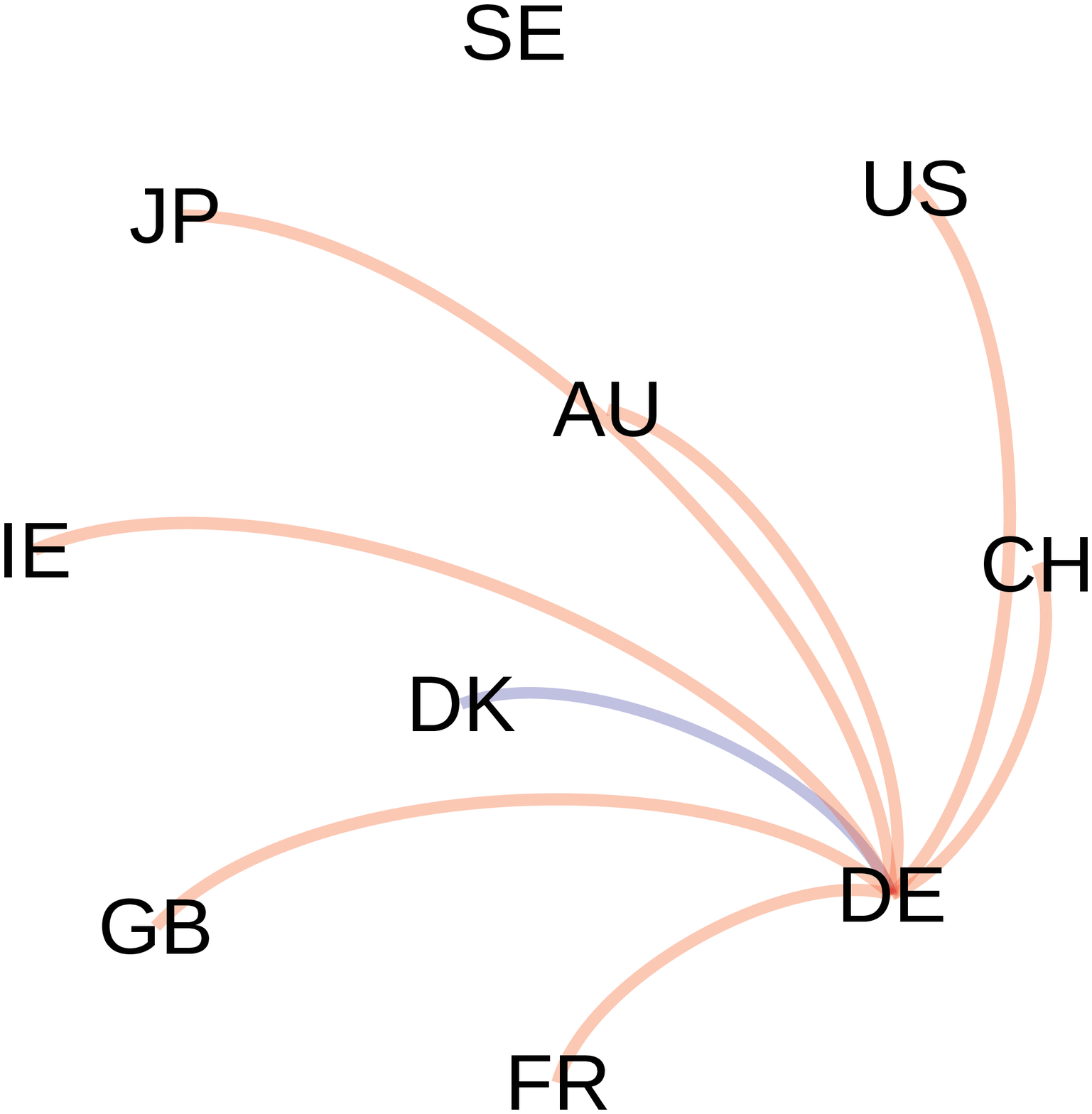} \\[-0ex]
{\footnotesize outflows} & & {\footnotesize outstanding debt} \\
\includegraphics[trim= 0mm 30mm 0mm 30mm,clip,height= 3.0cm, width= 2.5cm]{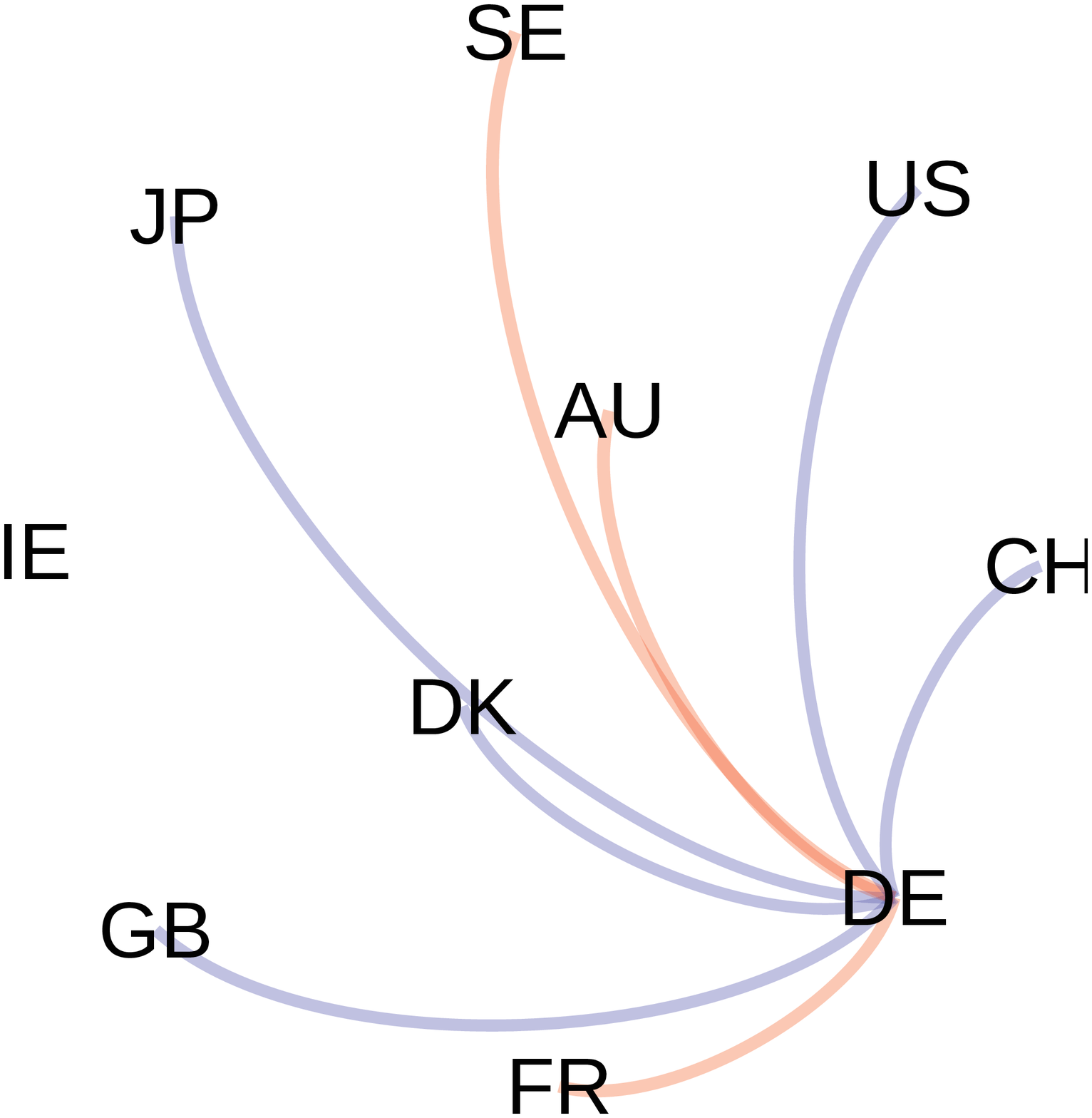} & & \includegraphics[trim= 0mm 30mm 0mm 30mm,clip,height= 3.0cm, width= 2.5cm]{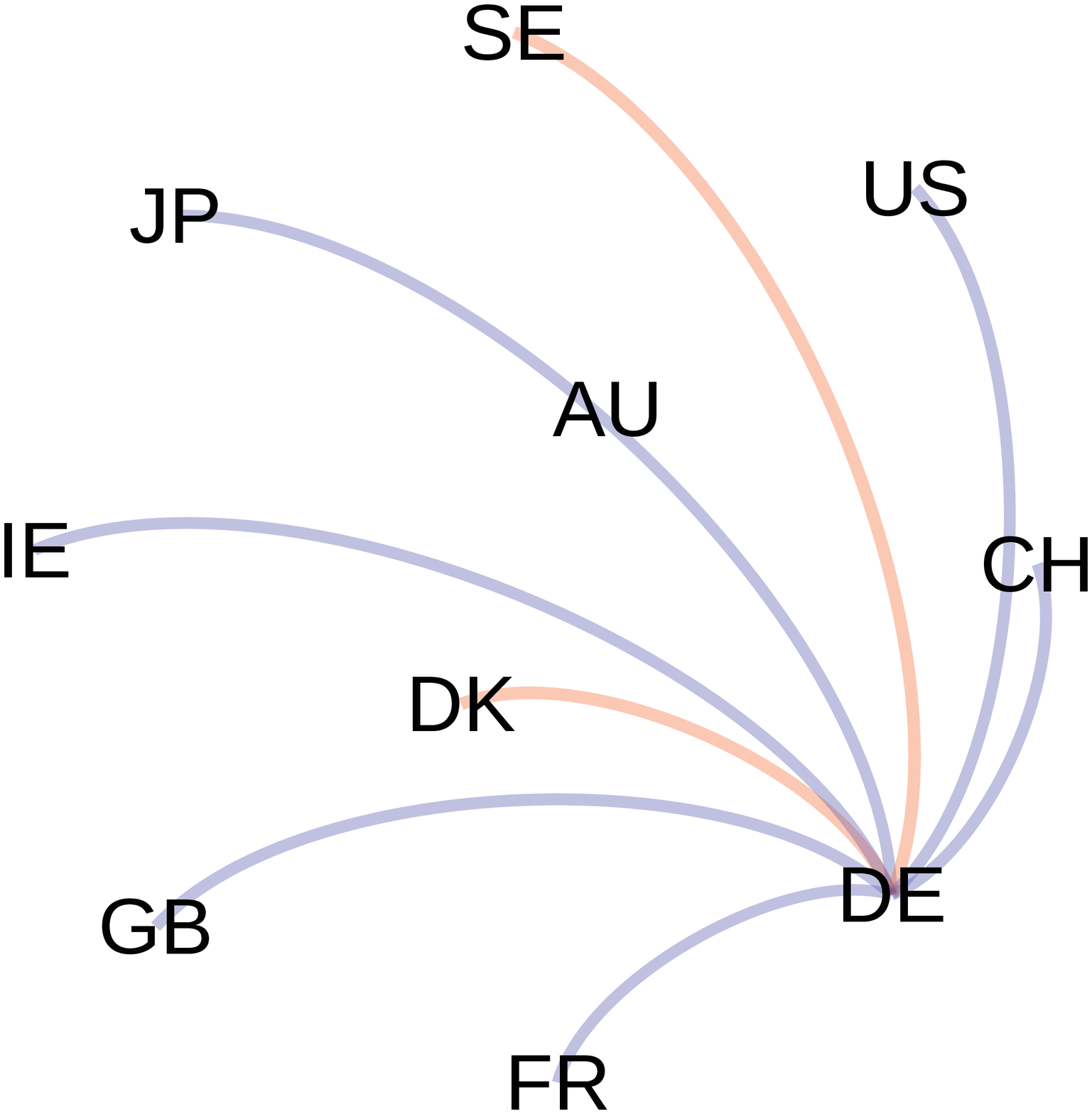} \\
\end{tabular}
\end{tabular}

\begin{tabular}{cc}
\multicolumn{2}{c}{(c) Network IRF at $h=2$} \\
\begin{rotate}{90} \hspace*{20pt} {\scriptsize Financial (layer 2) \hspace{40pt} Trade (layer 1)} \end{rotate} &
\includegraphics[trim= 0mm 0mm 0mm 0mm,clip,height= 8.0cm, width= 4.5cm]{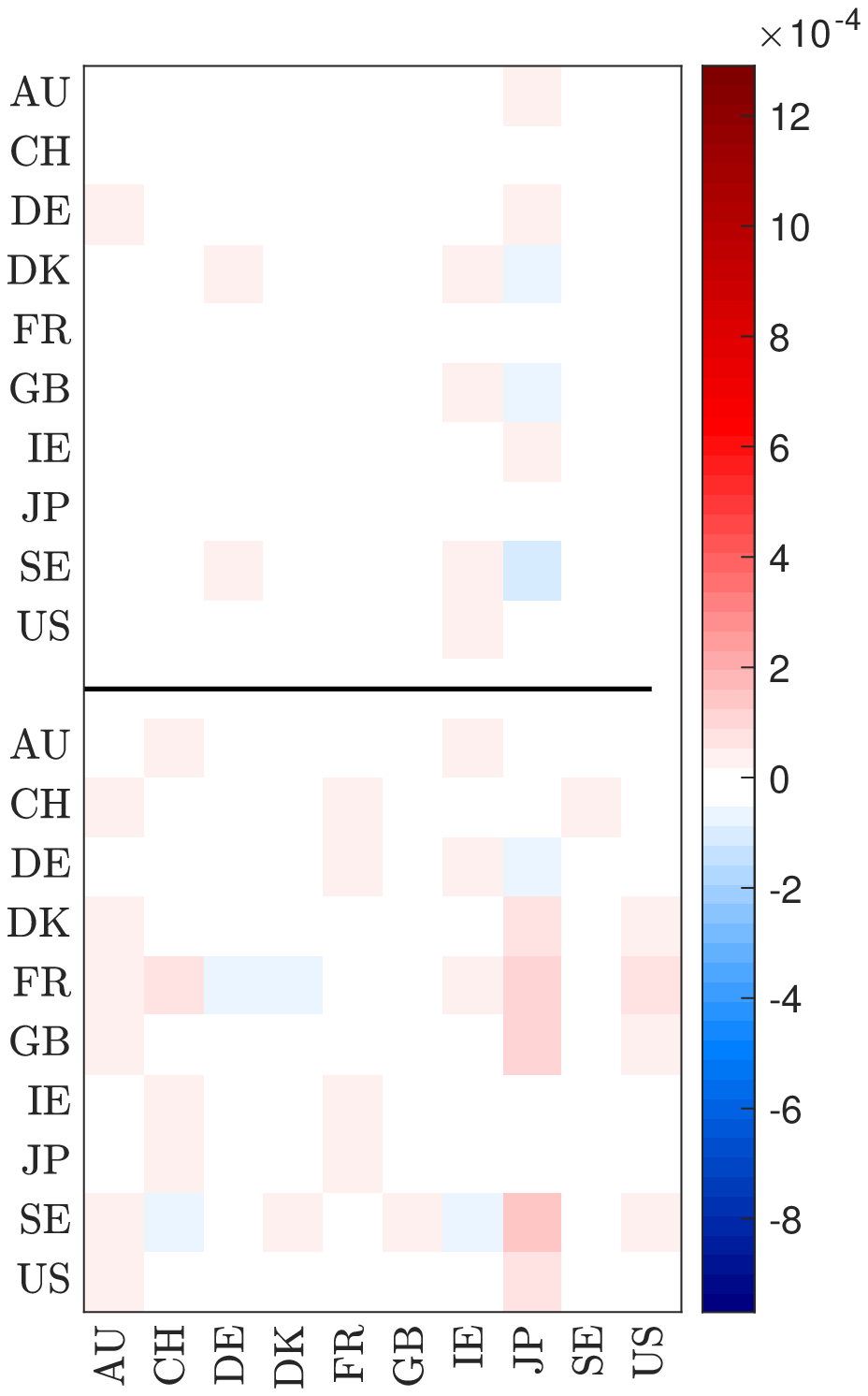}
\end{tabular}
\begin{tabular}{cc}
\multicolumn{2}{c}{(d) IRF for Germany's edges at $h=2$} \\[10pt]
\begin{rotate}{90} \hspace*{-95pt} {\scriptsize Financial (layer 2) \hspace{45pt} Trade (layer 1)} \end{rotate} &
\begin{tabular}{ccc}
{\footnotesize exports} & & {\footnotesize outstanding credit} \\
\includegraphics[trim= 0mm 30mm 0mm 30mm,clip,height= 3.0cm, width= 2.5cm]{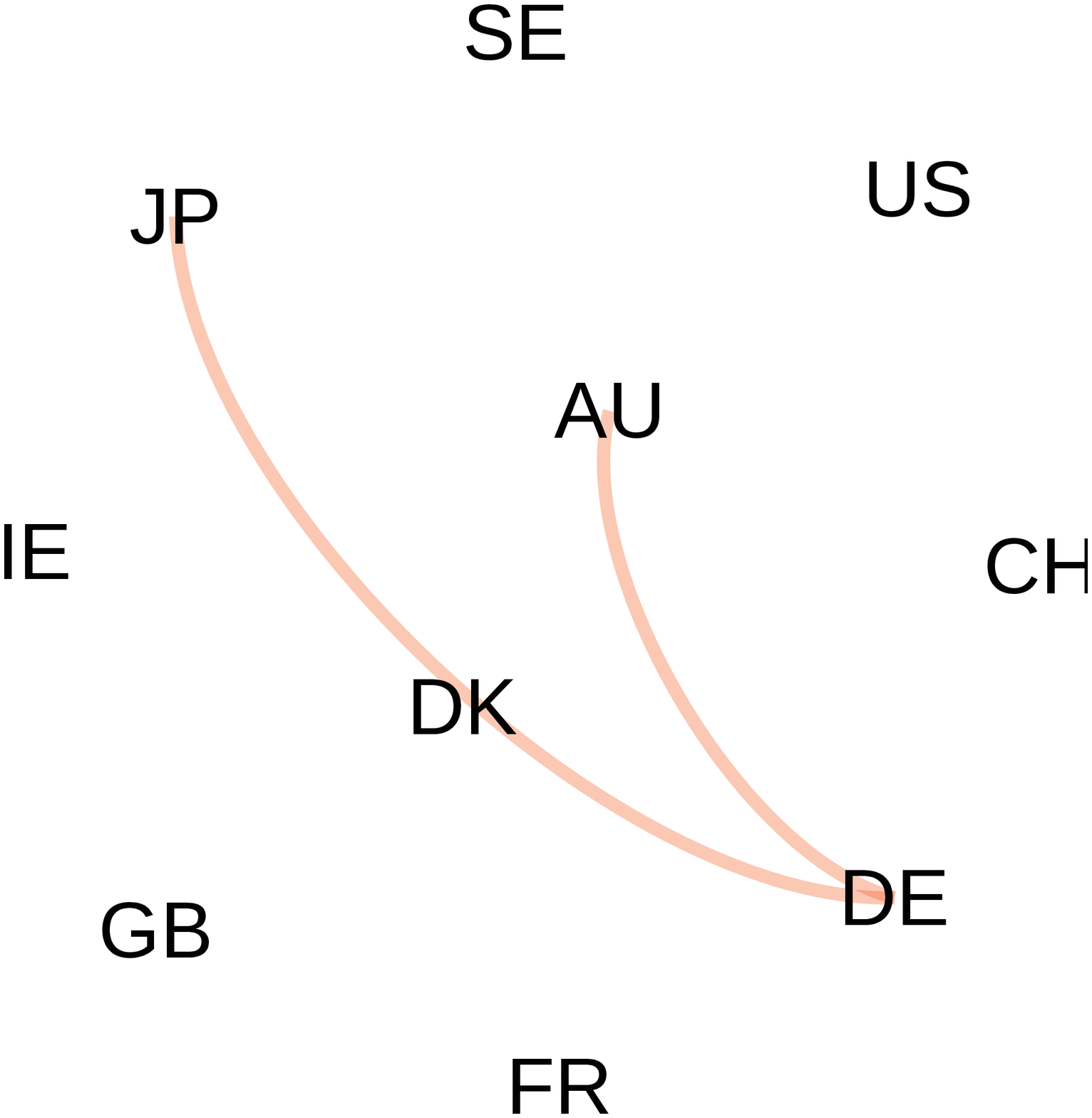} & & \includegraphics[trim= 0mm 30mm 0mm 30mm,clip,height= 3.0cm, width= 2.5cm]{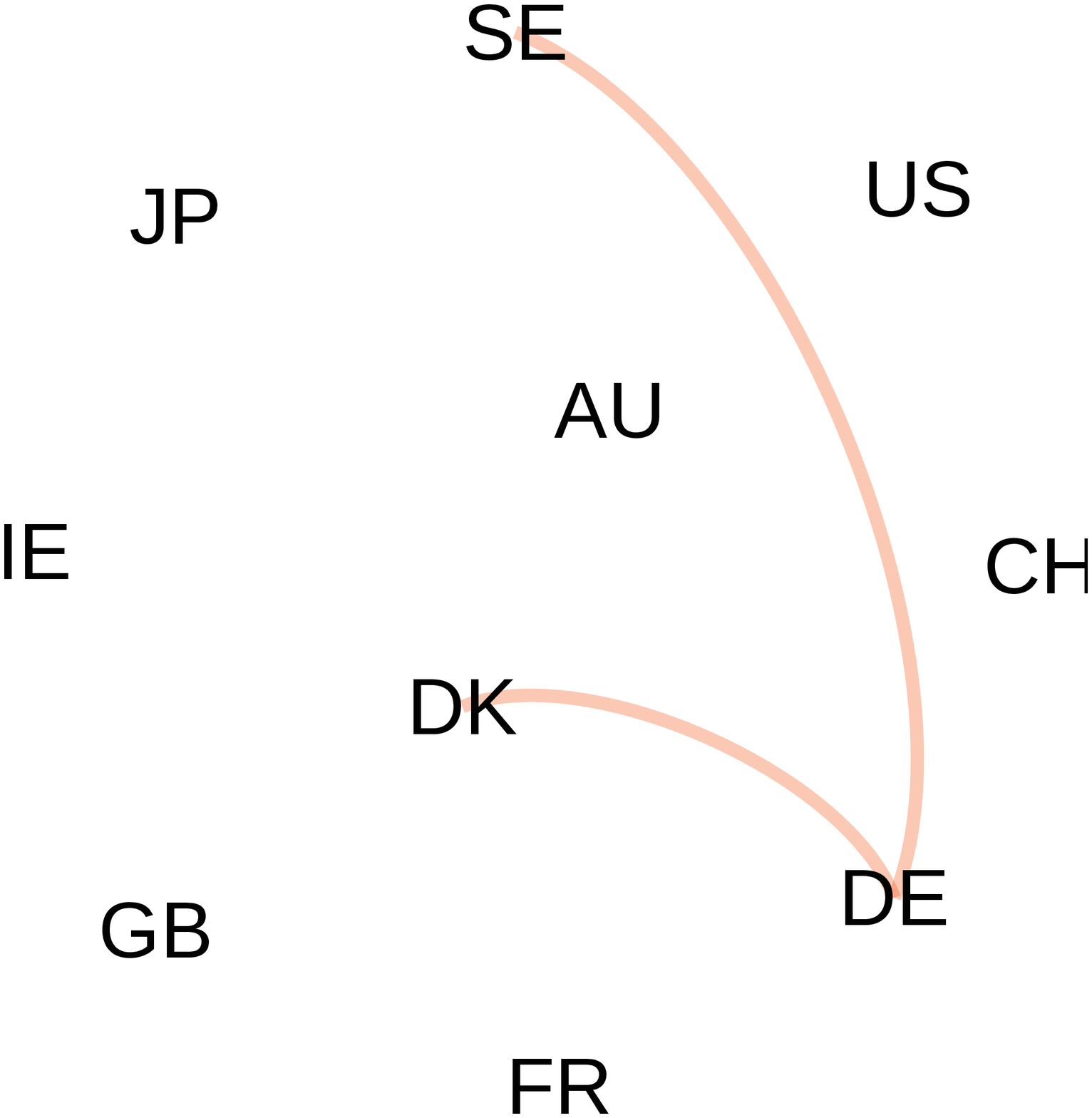} \\[-0ex]
{\footnotesize outflows} & & {\footnotesize outstanding debt} \\
\includegraphics[trim= 0mm 30mm 0mm 30mm,clip,height= 3.0cm, width= 2.5cm]{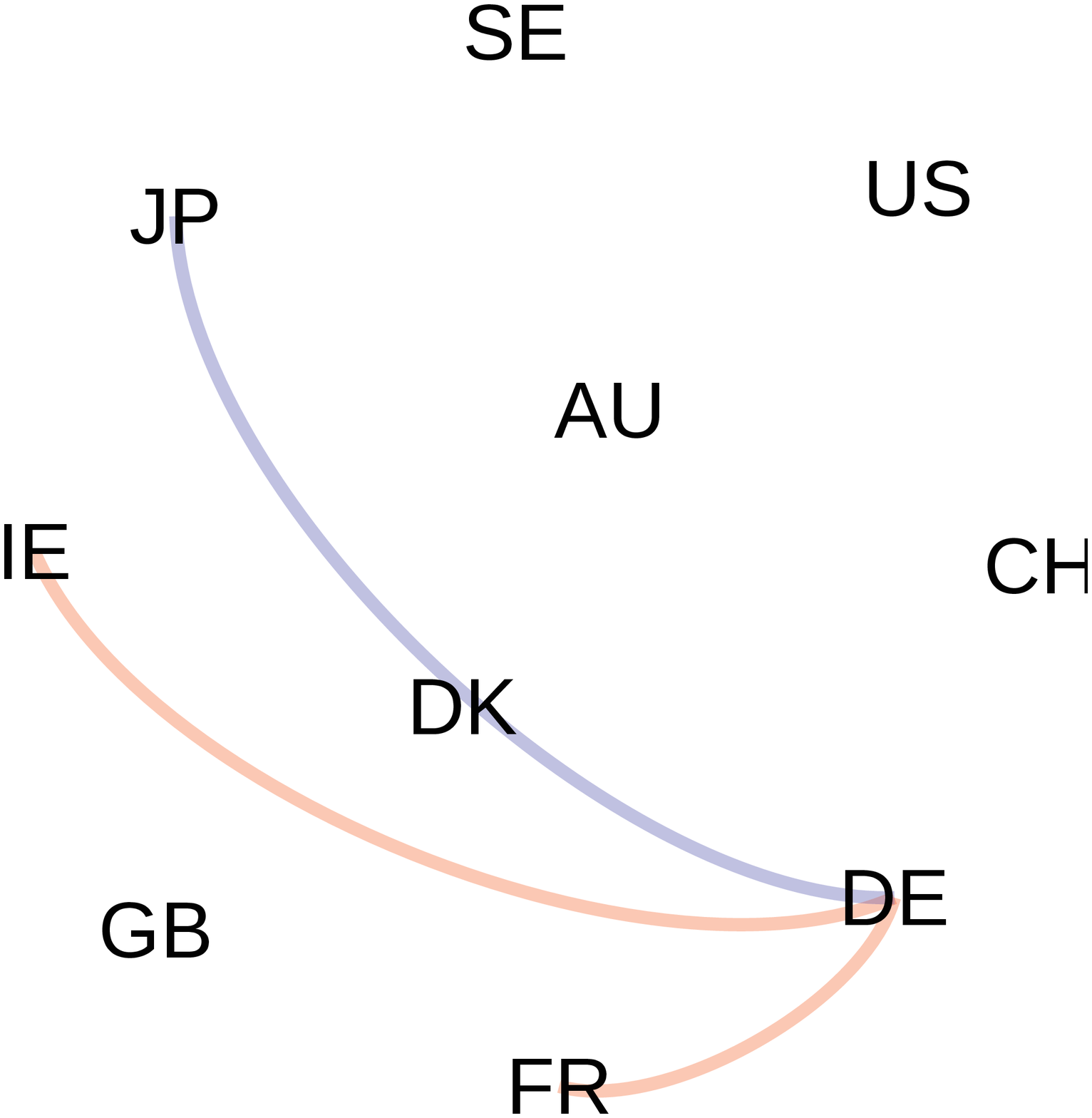} & & \includegraphics[trim= 0mm 30mm 0mm 30mm,clip,height= 3.0cm, width= 2.5cm]{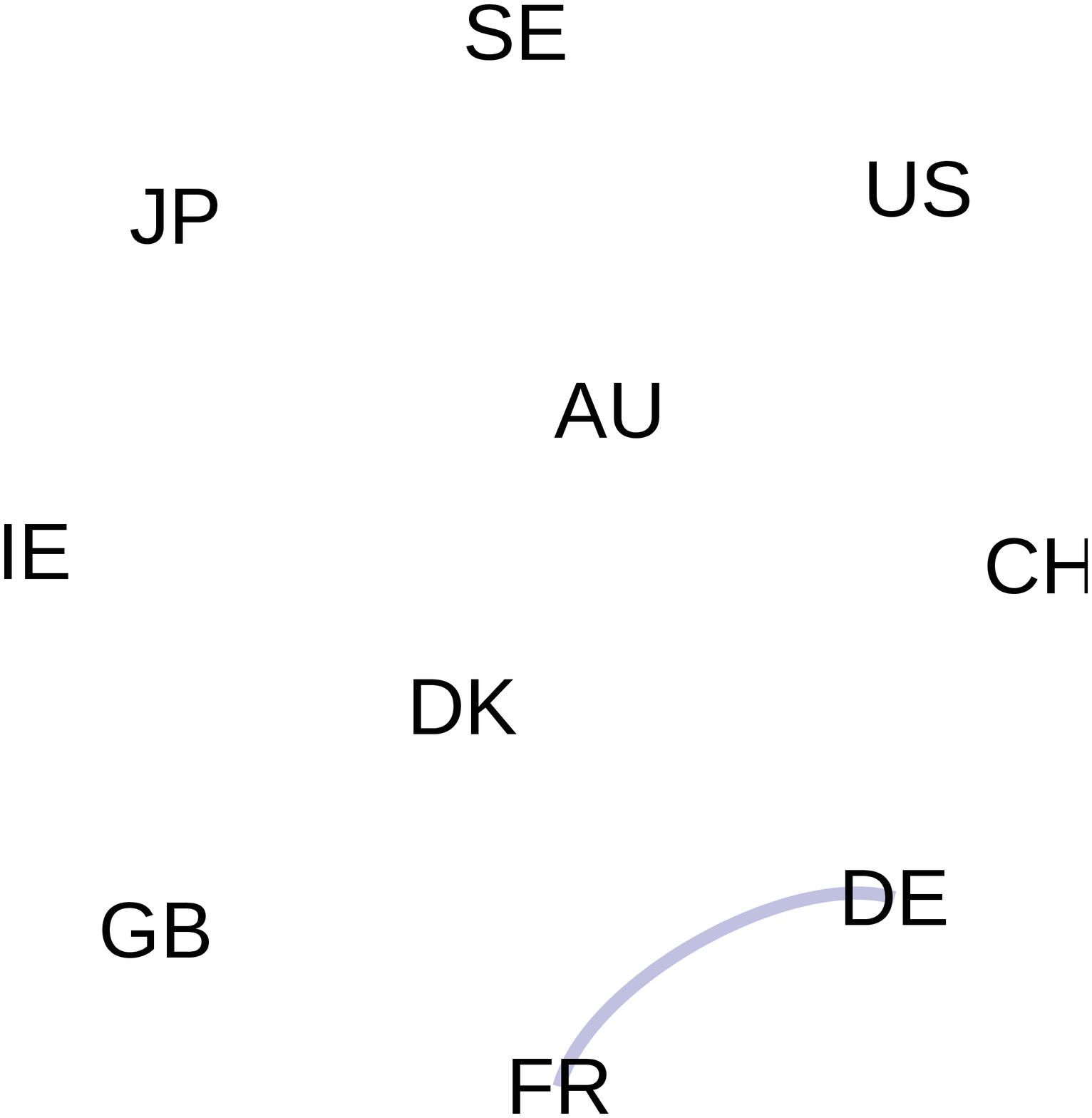}
\end{tabular}
\end{tabular}
\captionsetup{type=figure}
\captionof{figure}{Shock to GB capital inflows by -1\% and outflows by +1\%. IRF at horizon $h=1$ for all (\textit{panel a}) and Germany (\textit{panel b}) financial and trade transactions. IRF at horizon $h=2$ for all (\textit{panel c}) and Germany (\textit{panel d}) financial and trade transactions. In each plot negative coefficients are in blue and positive in red.}
\label{fig:application_ART_IRF_UKimpexp1}
\end{figure}

Fig. \ref{fig:application_ART_IRF_UKimp1} shows the block Cholesky IRF at horizon $h=1,2$, resulting from a negative $1\%$ shock to GB's outstanding debt\footnote{Again, the shock is allocated across countries to reflect country-specific shares of the last period in the sample.}.
The main findings follow.

\textit{Global effect} on the network.
We observe heterogeneous effects across countries. Effects on the trade layer at horizon $1$ are equally heterogeneous, but smaller in magnitude compared with the financial layer.

\textit{Local effect on Germany}.
Compared with other countries, the shock has smaller effects on Germany's trade. The negative shock to GB's outstanding debt has a negative impact on Germany's exports and imports to all countries but Ireland and Sweden for exports and Denmark for imports. Germany's outstanding credit increases vis-\'a-vis Denmark, GB, Japan and US. Germany's outstanding debt increases against all countries but Denmark and Sweden, in particular against France, Japan and Ireland.
At horizon $2$ responses are not reverted, but  nearly all effects turn insignificant, providing evidence of monotone and fast decay of the IRFs.

\textit{Local effect} on other countries.
On the trade layer at horizon $1$, we observe a positive response in Denmark's exports and on average a negative response of Switzerland's, Ireland's and Japan's exports.
France and Sweden are the most affected countries on the financial layer: The increase in outstanding credit of France towards Germany, Denmark and GB is counterbalanced by a reduction in Sweden's outstanding credit towards the same countries. We observe reverse effects concerning France's and Sweden's outstanding credit towards Switzerland and Ireland.
Finally, Ireland's outstanding credit reacts positively towards most other countries.

Compared with responses to the shock to US imports, the persistence of a negative shock to GB's outstanding debt is slightly stronger, see impulse responses at horizon $2$ in Fig. \ \ref{fig:application_ART_IRF_UKimp1}. The decay is monotonic. However, the speed of decay is heterogeneous across countries. For some countries, there are small effects at horizon 2, while for others the effects are completely wiped already.
Overall, we do not find evidence of a relation between the size of a country in terms of exports or outstanding credit and the persistence in the impulse response. At the most, persistence seems determined by the origin of the shock, the effects of a financial shock being more persistent than those of a trade shock.

Finally, in Fig. \ref{fig:application_ART_IRF_UKimpexp1} we plot the block Cholesky IRF, respectively, at horizon $h=1,2$, resulting from a $1\%$ negative shock to GB's outstanding debt coupled with a $1\%$ positive shock to GB's outstanding credit. The main findings follow.

\textit{Global effect} on the network.
The results remarkably differ from the previous ones (see Fig. \ref{fig:application_ART_IRF_UKimp1}). The responses to this simultaneous shock in GB's outstanding debt and credit are larger, in particular in the trade layer.
However, already at horizon $2$ responses are nearly fully decayed.
The results in Fig. \ref{fig:application_ART_IRF_UKimp1} and Fig. \ref{fig:application_ART_IRF_UKimpexp1} suggest that an increase in GB's outstanding credit has an overall positive effect on trade, stimulating export/import activities of most other countries.

\textit{Local effect on Germany}.
One period after the shock, we observe an overall positive effect on German exports, the exception being towards GB, Ireland and Sweden. Imports react mostly positively. Imports from US and Ireland react most, while those from Denmark react negatively.
The responses of Germany's outstanding debt vis-\'a-vis most countries but Denmark and Sweden are negative, especially against France.
At horizon $2$ Germany's responses have nearly faded away, suggesting a rapid monotone decay of the shock's effect.

\textit{Local effect} on other countries.
In particular, the reactions of Switzerland's imports and outstanding debt are strikingly different from the previous case, compare with Fig. \ref{fig:application_ART_IRF_UKimp1}. Imports from US and Ireland, and to a lesser extent from France and Austria, are strongly boosted, while those from Denmark and Sweden decrease strongly.
Moreover, we note that Japan's outstanding debt increases significantly against most countries. We interpret this as a signal for Japan's attractiveness for foreign capital. Compared with the previous exercise, France's financial responses are now mostly insignificant, or of opposite sign.
Finally, the reactions of GB's exports and outstanding credit are heterogeneous, the latter ones being larger in absolute magnitude.

\section{Conclusions} \label{sec:conclusions}
We defined a new statistical framework for dynamic tensor regression. It is a generalisation of many models frequently used in time series analysis, such as VAR, panel VAR, SUR and matrix regression models. The PARAFAC decomposition of the tensor of regression coefficients allows to reduce the dimension of the parameter space but also permits to choose flexible multivariate prior distributions, instead of multidimensional ones. Overall, this allows to encompass sparsity beliefs and to design efficient algorithm for posterior inference.


The proposed methodology has been used for analysing the temporal evolution of the international trade and financial network, and the investigation has been complemented with an impulse response analysis. We have found evidence of (i) wide heterogeneity in the sign and magnitude of the estimated coefficients; (ii) stationarity of the network process.
The impulse response analysis has highlighted the role of network topology in shock propagation across countries and over time. Irrespective of its origin, any shock is found to propagate between layers, but financial shocks are more persistent than those on international trade. Moreover, we we do not find evidence of a relation between the size of a country, expressed by the total trade or capital exports, and the persistence its response to a shock.
Finally, we have found evidence of substitution effects in response to the shocks, meaning that pairs of countries experience opposite effects from a shock to another country. In conclusion, our dynamic model can be used for predicting possible trade creation and diversion effects.


\section*{Supplementary Material}
Supplementary material including background results on tensors, the derivation of the posterior, simulation experiments and the description of the data is available online\footnote{\url{https://matteoiacopini.github.io/docs/BiCaIaKa_Supplement.pdf}}.

\bibliographystyle{plain}


\appendix
\renewcommand{\theequation}{\Alph{section}.\arabic{equation}} 

\setcounter{equation}{0}
\section{Background Material on Tensor Calculus} \label{sec:apdx_tensor_calculus}
This appendix provides the main tools used in the paper. See the supplement for further results and details.
A $N$-order tensor is an element of the tensor product of $N$ vector spaces. Since there exists a isomorphism between two vector spaces of dimensions $N$ and $M<N$, it is possible to define a one-to-one map between their elements, that is, between a $N$-order tensor and a $M$-order tensor. 

\begin{definition}[Tensor reshaping]
Let $V_1,\ldots,V_N$ and $U_1,\ldots,U_M$ be vector subspaces $V_n, U_m \subseteq \R$ and $\mathcal{X} \in \R^{I_1\times \ldots \times I_N} = V_1 \otimes \ldots \otimes V_N$ be a $N$-order real tensor of dimensions $I_1,\ldots,I_N$. Let $(\mathbf{v}_1,\ldots,\mathbf{v}_N)$ be a canonical basis of $\R^{I_1\times\ldots\times I_N}$ and let $\Pi_S$ be the projection defined as
\begin{align*}
\Pi_S : & V_1 \otimes \ldots \otimes V_N \rightarrow V_{s_1} \otimes \ldots \otimes V_{s_k} \\
 & \mathbf{v}_1 \otimes \ldots \otimes \mathbf{v}_N \mapsto \mathbf{v}_{s_1} \otimes \ldots \otimes \mathbf{v}_{s_k}
\end{align*}
with $S = \lbrace s_1,\ldots,s_k \rbrace \subset \lbrace 1,\ldots,N \rbrace$. Let $(S_1,\ldots,S_M)$ be a partition of $\lbrace 1,\ldots,N \rbrace$. The $(S_1,\ldots,S_M)$ tensor reshaping of $\mathcal{X}$ is defined as $\mathcal{X}_{(S_1,\ldots,S_M)} = (\Pi_{S_1}\mathcal{X}) \otimes \ldots \otimes (\Pi_{S_M}\mathcal{X}) = U_1 \otimes \ldots \otimes U_M$.
The mapping is an isomorphism between $V_1 \otimes \ldots \otimes V_N$ and $U_1 \otimes \ldots \otimes U_M$.
\end{definition}

The matricization is a particular case of reshaping a $N$-order tensor into a $2$-order tensor, by choosing a mapping between the tensor modes and the rows and columns of the resulting matrix, then permuting the tensor and reshaping it, accordingly.
\begin{definition}[Matricization]
Let $\mathcal{X}$ be a $N$-order tensor with dimensions $I_1,\ldots,I_N$. Let the ordered sets $\mathscr{R} = \lbrace r_1,\ldots,r_L \rbrace$ and $\mathscr{C} = \lbrace c_1,\ldots,c_M \rbrace$ be a partition of $\mathbf{N} = \lbrace 1,\ldots,N \rbrace$.
The matricized tensor is defined by
\begin{equation*}
\operatorname{mat}_{\mathscr{R}, \mathscr{C}}(\mathcal{X}) = \mathbf{X}_{(\mathscr{R}, \mathscr{C})} \in \R^{J\times K}, \qquad J= \prod_{n\in \mathscr{R}} I_n, \quad K= \prod_{n\in \mathscr{C}} I_n \, .
\end{equation*}
Indices of $\mathscr{R},\mathscr{C}$ are mapped to the rows and the columns, respectively, and
\begin{equation*}
\big( \mathbf{X}_{(\mathscr{R} \times \mathscr{C})} \big)_{j,k} = \mathcal{X}_{i_1,i_2,\ldots,i_N},  \quad  j= 1+\sum_{l=1}^L \Big( (i_{r_l}-1) \prod_{l'=1}^{l-1} I_{r_l'} \Big),  \quad  k= 1+\sum_{m=1}^M \Big( (i_{c_m}-1) \prod_{m'=1}^{m-1} I_{c_m'} \Big).
\end{equation*}
\end{definition}

\noindent The \textit{inner product} between two $(I_1\times\ldots\times I_N)$-dimensional tensors $\mathcal{X},\mathcal{Y}$ is defined as
\begin{equation*}
\langle \mathcal{X}, \mathcal{Y} \rangle = \sum_{i_1=1}^{I_1}\ldots\sum_{i_N=1}^{I_N} \mathcal{X}_{i_1,\ldots,i_N} \mathcal{Y}_{i_1,\ldots,i_N} 
\label{eq:apdx_tensor_scal_prod}
\end{equation*}
The PARAFAC($R$) decomposition (e.g., see \cite{KoldaBader09}), is rank-$R$ decomposition which represents a tensor $\mathcal{B}\in\R^{I_1\times\ldots\times I_N}$ as a finite sum of $R$ rank-$1$ tensors obtained as the outer products of $N$ vectors (called marginals) $\boldsymbol{\beta}_j^{(r)} \in\R^{I_j}$
\begin{equation*}
\mathcal{B} = \sum_{r=1}^R \mathcal{B}_r = \sum_{r=1}^R \boldsymbol{\beta}_1^{(r)} \circ \ldots \circ \boldsymbol{\beta}_J^{(r)}.
\end{equation*}

\begin{lemma}[Contracted product -- some properties] \label{lemma:contracted_properties}
Let $\mathcal{X} \in \R^{I_1\times \ldots\times I_N}$ and $\mathcal{Y} \in \R^{J_1\times \ldots\times J_N \times J_{N+1}\times\ldots\times J_{N+P}}$. Let $(\mathscr{S}_1,\mathscr{S}_2)$ be a partition of $\lbrace 1,\ldots,N+P \rbrace$, where $\mathscr{S}_1 = \lbrace 1,\ldots,N \rbrace$, $\mathscr{S}_2 = \lbrace N+1,\ldots,N+P \rbrace$. It holds:
\begin{enumerate}[label=(\roman*)]
\item if $P=0$ and $I_n = J_n$, $n=1,\ldots,N$, then $\mathcal{X} \bar{\times}_N \mathcal{Y} = \langle \mathcal{X}, \mathcal{Y} \rangle = \vecc{\mathcal{X}}' \cdot \vecc{\mathcal{Y}}$.

\item if $P>0$ and $I_n = J_n$ for $n=1,\ldots,N$, then
\begin{align*}
\mathcal{X} \bar{\times}_N \mathcal{Y} & = \vecc{\mathcal{X}} \times_1
\mathcal{Y}_{(\mathscr{S}_1,\mathscr{S}_2)} \; \; \in\R^{j_1\times\ldots\times j_P} \\
\mathcal{Y} \bar{\times}_N \mathcal{X} & = \mathcal{Y}_{(\mathscr{S}_1,\mathscr{S}_2)} \times_1 \vecc{\mathcal{X}} \; \; \in\R^{j_1\times\ldots\times j_P}.
\end{align*}

\item let $\mathscr{R}=\lbrace 1,\ldots,N \rbrace$ and $\mathscr{C}=\lbrace N+1,\ldots,2N \rbrace$. If $P=N$ and $I_n = J_n = J_{N+n}$, $n=1,\ldots,N$, then
\begin{equation*}
\mathcal{X} \bar{\times}_N \mathcal{Y} \bar{\times}_N \mathcal{X} = \vecc{\mathcal{X}}' \mathbf{Y}_{(\mathscr{R}, \mathscr{C})} \vecc{\mathcal{X}}.
\end{equation*}

\item let $M=N+P$, then $\mathcal{X} \circ \mathcal{Y} = \underline{\mathcal{X}} \bar{\times}_1 \underline{\mathcal{Y}}^T$, where $\underline{\mathcal{X}},\underline{\mathcal{Y}}$ are $(I_1\times\ldots\times I_N\times 1)$- and $(J_1\times\ldots\times J_M\times 1)$-dimensional tensors, respectively, given by $\underline{\mathcal{X}}_{:,\ldots,:,1} = \mathcal{X}$, $\underline{\mathcal{Y}}_{:,\ldots,:,1} = \mathcal{Y}$ and $\underline{\mathcal{Y}}^T_{j_1,\ldots,j_M,j_{M+1}} = \underline{\mathcal{Y}}_{j_{M+1},j_M,\ldots,j_1}$.
\end{enumerate}
\end{lemma}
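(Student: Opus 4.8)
The plan is to prove each of the four identities by expanding both sides entrywise from the definition of the contracted product and matching the resulting index sums, relying on the explicit multi-index-to-linear-index maps attached to $\vecc{\cdot}$ and to the matricization. The unifying observation is that, under the stated dimension-matching hypotheses, $\bar{\times}_N$ collapses to a sum over $N$ paired indices, so the only genuine work is checking that the lexicographic ordering underlying $\vecc{\cdot}$ agrees with the row/column orderings fixed by the matricization definition.

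For part (i), with $P=0$ and $I_n=J_n$ the first tensor carries no free modes, so the defining formula for $\bar{\times}_N$ reduces to $\sum_{i_1,\ldots,i_N}\mathcal{X}_{i_1,\ldots,i_N}\mathcal{Y}_{i_1,\ldots,i_N}$, which is exactly $\langle\mathcal{X},\mathcal{Y}\rangle$. Since $\vecc{\cdot}$ enumerates all entries of both factors in one fixed lexicographic order, this sum coincides with $\vecc{\mathcal{X}}'\vecc{\mathcal{Y}}$; I would present this as an immediate consequence of the vectorization ordering being consistent across the two factors. For part (ii) I would write $(\mathcal{X}\bar{\times}_N\mathcal{Y})_{h_1,\ldots,h_P}=\sum_{i_1,\ldots,i_N}\mathcal{X}_{i_1,\ldots,i_N}\mathcal{Y}_{i_1,\ldots,i_N,h_1,\ldots,h_P}$ and then expand the right-hand side: by the matricization definition the modes in $\mathscr{S}_1$ index the rows and those in $\mathscr{S}_2$ the columns of $\mathbf{Y}_{(\mathscr{S}_1,\mathscr{S}_2)}$, and the mode-$1$ product contracts $\vecc{\mathcal{X}}$ against the row index. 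The identity follows once I verify that the row index map of $\mathbf{Y}_{(\mathscr{S}_1,\mathscr{S}_2)}$ matches the order defining $\vecc{\mathcal{X}}$ (equivalently $\vecc{\mathcal{X}}=\vecc{\mathbf{X}_{(1)}}$), while the column map matches the natural indexing of the output. The companion identity $\mathcal{Y}\bar{\times}_N\mathcal{X}=\mathbf{Y}_{(\mathscr{S}_1,\mathscr{S}_2)}\times_1\vecc{\mathcal{X}}$ is the same computation with the factors interchanged.

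For part (iii) I would not re-expand from scratch but instead chain (ii) and (i): with $\mathscr{R},\mathscr{C}$ as stated, $\mathcal{X}\bar{\times}_N\mathcal{Y}=\vecc{\mathcal{X}}\times_1\mathbf{Y}_{(\mathscr{R},\mathscr{C})}$ is the length-$I^*$ vector with entries $\sum_a\vecc{\mathcal{X}}_a(\mathbf{Y}_{(\mathscr{R},\mathscr{C})})_{a,b}$, and contracting this against $\mathcal{X}$ by (i) gives $\sum_{a,b}\vecc{\mathcal{X}}_a(\mathbf{Y}_{(\mathscr{R},\mathscr{C})})_{a,b}\vecc{\mathcal{X}}_b=\vecc{\mathcal{X}}'\mathbf{Y}_{(\mathscr{R},\mathscr{C})}\vecc{\mathcal{X}}$. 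For part (iv) I would use the singleton-padding trick: adjoining a trailing mode of size $1$ to $\mathcal{X}$ and $\mathcal{Y}$ makes the single contracted index range over $\{1\}$, so $\bar{\times}_1$ degenerates to the bare product $\mathcal{X}_{i_1,\ldots,i_N}\mathcal{Y}_{\cdot}$ that defines $\circ$, and the transpose $(\cdot)^T$ is precisely what moves the padded singleton to the leading mode of $\underline{\mathcal{Y}}^T$ (making $\bar{\times}_1$ well defined) and reorders the remaining free modes into the outer-product pattern; verifying that this reordering reproduces the modes of $\mathcal{Y}$ in the correct order is the one place that needs explicit attention.

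The genuinely delicate point — and where a mode-order or off-by-one slip is most likely — is the index bookkeeping shared by parts (ii) and (iv): one must confirm that the linear index $j=1+\sum_l(i_{r_l}-1)\prod_{l'<l}I_{r_{l'}}$ from the matricization definition coincides with the lexicographic index used by $\vecc{\cdot}$, and, in (iv), that the transpose yields exactly the free-mode ordering demanded by $\circ$. Once that compatibility is pinned down, parts (i) and (iii) are short corollaries and the remaining verifications are routine index manipulations.
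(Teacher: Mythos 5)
Your proposal is correct and follows essentially the same route as the paper's proof: entrywise expansion of $\bar{\times}_N$ with explicit verification that the linear index $k=1+\sum_j (i_j-1)\prod_{m<j} I_m$ from the matricization agrees with the vectorization ordering for parts (i)--(iii), and the singleton-padding construction with the mode-reversing transpose for part (iv), where you rightly flag the reversed free-mode ordering that the paper itself handles only up to a declared abuse of notation. Your derivation of (iii) by chaining (ii) and (i) instead of re-expanding the double sum is a cosmetic variation of the same computation, not a different method.
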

\begin{proof}
Case (i). By definition of contracted product and tensor scalar product
\begin{align*}
\mathcal{X} \bar{\times}_N \mathcal{Y} & = \sum_{i_1=1}^{I_1} \ldots \sum_{i_N=1}^{I_N} \mathcal{X}_{i_1,\ldots,i_N} \mathcal{Y}_{i_1,\ldots,i_N} = \langle \mathcal{X}, \mathcal{Y} \rangle = \vecc{\mathcal{X}}' \cdot \vecc{\mathcal{Y}}.
\end{align*}
Case (ii). Define $I^* = \prod_{n=1}^N I_n$ and $k=1+\sum_{j=1}^N (i_j-1) \prod_{m=1}^{j-1} I_m$. By definition of contracted product and tensor scalar product
\begin{align*}
\mathcal{X} \bar{\times}_N \mathcal{Y} & = \sum_{i_1=1}^{I_1} \ldots \sum_{i_N=1}^{I_N} \mathcal{X}_{i_1,\ldots,i_N} \mathcal{Y}_{i_1,\ldots,i_N,j_{N+1},\ldots,j_{N+P}} = \sum_{k=1}^{I^*} \mathcal{X}_{k} \mathcal{Y}_{k,j_{N+1},\ldots,j_{N+P}} \, .
\end{align*}
Note that the one-to-one correspondence established by the mapping between $k$ and $(i_1,\ldots,i_N)$ corresponds to that of the vectorization of a $(I_1\times\ldots\times I_N)$-dimensional tensor. It also corresponds to the mapping established by the tensor reshaping of a $(N+P)$-order tensor with dimensions $I_1,\ldots,I_N,J_{N+1},\ldots,J_{N+P}$ into a $(P+1)$-order tensor with dimensions $I^*,J_{N+1},\ldots,J_{N+P}$. Let $\mathscr{S}_1 = \lbrace 1,\ldots,N \rbrace$, then
\begin{align*}
\mathcal{X} \bar{\times}_N \mathcal{Y} = \sum_{i_1=1}^{I_1} \ldots \sum_{i_N=1}^{I_N} \mathcal{X}_{i_1,\ldots,i_N} \mathcal{Y}_{i_1,\ldots,i_N,:,\ldots,:} = \sum_{s_1=1}^{|\mathscr{S}_1|} \mathbf{x}_{s_1} \bar{\mathcal{Y}}_{s_1,:,\ldots,:}
\end{align*}
where $\bar{\mathcal{Y}} = \operatorname{reshape}_{(\mathscr{S}_1,N+1,\ldots,N+P)}(\mathcal{Y})$. Following the same approach, and defining $\mathscr{S}_2 = \lbrace N+1,\ldots,N+P \rbrace$, we obtain the second part of the result.\\
Case (iii). We follow the same strategy adopted in case b). Let $\mathbf{x}=\vecc{\mathcal{X}}$, $S_1 = \lbrace 1,\ldots,N\rbrace$ and $S_2 = \lbrace N+1,\ldots,N+P \rbrace$, such that $(S-1,S_2)$ is a partition of $\lbrace 1,\ldots,N+P \rbrace$. Let $k,k'$ be defined as in case b). Then
\begin{align*}
\hspace*{-30pt}  & \mathcal{X} \bar{\times}_N \mathcal{Y} \bar{\times}_N \mathcal{X} = \sum_{i_1=1}^{I_1} \ldots \sum_{i_N=1}^{I_N} \sum_{i_1'=1}^{I_1} \ldots \sum_{i_N'=1}^{I_N} \mathcal{X}_{i_1,\ldots,i_N} \mathcal{Y}_{i_1,\ldots,i_N,i_1',\ldots,i_N'} \mathcal{X}_{i_1',\ldots,i_N'} \\
\hspace*{-30pt}  & = \sum_{k=1}^{I^*} \sum_{i_1'=1}^{I_1} \ldots \sum_{i_N'=1}^{I_N} \mathbf{x}_{k} \mathcal{Y}_{k,i_1',\ldots,i_N'} \mathcal{X}_{i_1',\ldots,i_N'}  = \sum_{k=1}^{I^*} \sum_{k'=1}^{I^*} \mathbf{x}_{k} \mathcal{Y}_{k,k'} \mathbf{x}_{k'} = \vecc{\mathcal{X}}' \mathcal{Y}_{(S_1,S_2)} \vecc{\mathcal{X}}.
\end{align*}
Case (iv).
Let $\mathbf{i} = (i_1,\ldots,i_N)$ and $\mathbf{j} = (j_1,\ldots,j_M)$ be two multi-indexes. By the definition of outer and contracted product we get $(\mathcal{X} \circ \mathcal{Y})_{\mathbf{i},\mathbf{j}} = \underline{\mathcal{X}}_{\mathbf{i},1} \underline{\mathcal{Y}}_{1,\mathbf{j}} = (\underline{\mathcal{X}} \bar{\times}_1 \underline{\mathcal{Y}}^T)_{\mathbf{i},\mathbf{j}}$.
Therefore, with a slight abuse of notation, we use $\underline{\mathcal{Y}} = \mathcal{Y}$ and write $\mathcal{Y} \circ \mathcal{Y} = \mathcal{Y} \bar{\times}_1 \mathcal{Y}^T$, when the meaning of the products is clear form the context.
\end{proof}

\begin{lemma}[Kronecker - matricization] \label{lemma:matricize_outer_kronecker}
Let $X_n$ be a $I_n\times I_n$ matrix, for $n=1,\ldots,N$, and let $\mathcal{X} = X_1 \circ \ldots \circ X_N$ be the $(I_1\times\ldots\times I_N\times I_1\times\ldots\times I_N)$-dimensional tensor obtained as the outer product of the matrices $X_1,\ldots,X_N$. Let $(\mathscr{S}_1,\mathscr{S}_2)$ be a partition of $I_\mathbf{N} = \lbrace 1,\ldots,2N \rbrace$, where $\mathscr{S}_1 = \lbrace 1,\ldots,N \rbrace$ and $\mathscr{S}_2 = \lbrace N+1,\ldots,N \rbrace$. Then $\mathcal{X}_{(\mathscr{S}_1,\mathscr{S}_2)} = \mathbf{X}_{(\mathscr{R}, \mathscr{C})} = (X_N \otimes \ldots \otimes X_1)$.
\end{lemma}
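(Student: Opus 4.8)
The plan is to prove the identity by a direct comparison of entries, reducing both sides to the same mixed-radix index formula. First I would write out the entries of the outer-product tensor. Adopting the grouped-mode convention used throughout the paper for elements of $\T$ (the row modes $1,\ldots,N$ listed before the column modes $N+1,\ldots,2N$, consistent with the identity tensor $\mathcal{I}$), the tensor $\mathcal{X} = X_1 \circ \ldots \circ X_N$ has entries
\[
\mathcal{X}_{i_1,\ldots,i_N,j_1,\ldots,j_N} = \prod_{n=1}^N (X_n)_{i_n,j_n},
\]
where $(i_1,\ldots,i_N)$ collect the row indices of the factor matrices and $(j_1,\ldots,j_N)$ their column indices. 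Flagging this ordering explicitly is worthwhile, since the literal outer product interleaves the modes and a (dimension-preserving) permutation to the grouped layout is implicit in the statement.

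Next I would apply the matricization formula from the Matricization definition with $\mathscr{R} = \lbrace 1,\ldots,N \rbrace$ and $\mathscr{C}=\lbrace N+1,\ldots,2N \rbrace$, both taken in increasing order. This sends the entry above to position $(r,c)$ of $\mathbf{X}_{(\mathscr{R},\mathscr{C})}$ with
\[
r = 1 + \sum_{l=1}^N (i_l-1)\prod_{l'=1}^{l-1} I_{l'}, \qquad c = 1 + \sum_{m=1}^N (j_m-1)\prod_{m'=1}^{m-1} I_{m'},
\]
so that mode $1$ is the least significant coordinate in both the row and the column enumeration.

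Then I would recall the Kronecker-product index rule: the $(r,c)$ entry of $X_N \otimes \ldots \otimes X_1$ equals $\prod_{n=1}^N (X_n)_{i_n,j_n}$, where $r$ and $c$ are exactly the mixed-radix encodings written above, because in a left-to-right Kronecker product the leftmost factor $X_N$ is the most significant block and the rightmost factor $X_1$ the least significant. Matching the two expressions term by term gives $(\mathbf{X}_{(\mathscr{R},\mathscr{C})})_{r,c} = (X_N \otimes \ldots \otimes X_1)_{r,c}$ for all $(r,c)$, which is the claim; the equality $\mathcal{X}_{(\mathscr{S}_1,\mathscr{S}_2)} = \mathbf{X}_{(\mathscr{R},\mathscr{C})}$ is then just a restatement of the same matricization under the two labellings of the partition.

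The only delicate point, and the step I would check most carefully, is the reconciliation of the two ordering conventions: the matricization formula makes mode $1$ the least significant index, whereas the Kronecker product is written in the reversed order $X_N \otimes \ldots \otimes X_1$. It is precisely this reversal that aligns the place-values $\prod_{l'<l} I_{l'}$ on both sides; writing the product in the natural order $X_1 \otimes \ldots \otimes X_N$ would instead swap the roles of the most and least significant coordinates and break the identity. I would therefore fix the radix conventions before equating the two formulas, after which the verification is routine.
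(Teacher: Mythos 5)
Your proof is correct and follows essentially the same route as the paper's: both arguments compare entries directly, matching the mixed-radix row and column encodings of the grouped matricization against the index rule for the reversed Kronecker product $X_N \otimes \ldots \otimes X_1$. Your explicit flagging of the two convention issues — that the literal outer product interleaves the modes so a grouped relabelling is implicit, and that mode $1$ is least significant on both sides, which is exactly why the Kronecker factors appear in reversed order — is a point the paper's proof passes over silently, but it changes nothing in substance.
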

\begin{proof}
Use the pair of indices $(i_n,i_n')$ for the entries of the matrix $X_n$, $n=1,\ldots,N$. By definition of outer product $(X_1 \circ \ldots \circ X_N)_{i_1,\ldots,i_N,i_1',\ldots,i_N'} = (X_1)_{i_1,i_1'} \cdot \ldots \cdot (X_N)_{i_N,i_N'}$.
By definition of matricization, $\mathcal{X}_{(\mathscr{S}_1,\mathscr{S}_2)} = \mathbf{X}_{(\mathscr{R}, \mathscr{C})}$. Moreover $(\mathcal{X}_{(\mathscr{S}_1,\mathscr{S}_2)})_{h,k} = \mathcal{X}_{i_1,\ldots,i_{2N}}$ with $h = \sum_{p=1}^N (i_{S_{1,p}} -1) \prod_{q=1}^{p-1} J_{S_{1,p}}$ and $k = \sum_{p=1}^N (i_{S_{2,p}} -1) \prod_{q=1}^{p-1} J_{S_{2,p}}$.
By definition of the Kronecker product, the entry $(h',k')$ of $(X_N \otimes \ldots \otimes X_1)$ is $(X_N \otimes \ldots \otimes X_1)_{h',k'} = (X_N)_{i_N',i_N'} \cdot \ldots \cdot (X_1)_{i_1,i_1'}$, where $h' = \sum_{p=1}^N (i_{S_{1,p}} -1) \prod_{q=1}^{p-1} J_{S_{1,p}}$ and $k' = \sum_{p=1}^N (i_{S_{2,p}} -1) \prod_{q=1}^{p-1} J_{S_{2,p}}$.
Since $h=h'$ and $k=k'$ and the associated elements of $\mathcal{X}_{(\mathscr{S}_1,\mathscr{S}_2)}$ and $(X_N \otimes \ldots \otimes X_1)$ are the same, the result follows.
\end{proof}

\begin{lemma}[Outer product and vectorization] \label{lemma:outer_product_vec}
Let $\boldsymbol{\alpha}_1,\ldots,\boldsymbol{\alpha}_n$ be vectors such that $\boldsymbol{\alpha}_i$ has length $d_i$, for $i=1,\ldots,n$. Then, for each $j=1,\ldots,n$, it holds
\begin{align*}
\vecc{\mathop{\circ}_{i=1}^n \boldsymbol{\alpha}_i} & = \mathop{\otimes}_{i=1}^n \boldsymbol{\alpha}_{n-i+1} = \bigl( \boldsymbol{\alpha}_n \otimes \ldots \otimes \boldsymbol{\alpha}_{j+1} \otimes \mathbf{I}_{d_j} \otimes \boldsymbol{\alpha}_{j-1} \otimes \ldots \otimes \boldsymbol{\alpha}_1 \bigr) \boldsymbol{\alpha}_j.
\end{align*}
\end{lemma}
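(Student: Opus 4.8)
The plan is to prove the two equalities in turn. For the first, $\vecc{\mathop{\circ}_{i=1}^n \boldsymbol{\alpha}_i} = \mathop{\otimes}_{i=1}^n \boldsymbol{\alpha}_{n-i+1}$, I would argue by comparing entries. The outer product has entry $\big(\mathop{\circ}_{i=1}^n \boldsymbol{\alpha}_i\big)_{i_1,\ldots,i_n} = \prod_{m=1}^n (\boldsymbol{\alpha}_m)_{i_m}$. The vectorization operator, taken in the direct lexicographic order used in the paper (first mode fastest), sends the multi-index $(i_1,\ldots,i_n)$ to the linear position $\ell = 1 + \sum_{m=1}^n (i_m-1)\prod_{p=1}^{m-1} d_p$. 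On the other side, the reversed Kronecker product $\boldsymbol{\alpha}_n \otimes \cdots \otimes \boldsymbol{\alpha}_1$ places the factor $\boldsymbol{\alpha}_1$ so that it runs fastest with stride $1$ and $\boldsymbol{\alpha}_m$ with stride $\prod_{p<m} d_p$, so its entry at position $\ell$ is exactly $\prod_{m=1}^n (\boldsymbol{\alpha}_m)_{i_m}$. Matching the two index maps term by term yields the first equality.

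For the second equality I would invoke the mixed-product property $(\mathbf{A}\otimes \mathbf{B})(\mathbf{C}\otimes \mathbf{D}) = (\mathbf{A}\mathbf{C}) \otimes (\mathbf{B}\mathbf{D})$ for conformable matrices, applied iteratively. Write $\boldsymbol{\alpha}_j = 1 \otimes \cdots \otimes 1 \otimes \boldsymbol{\alpha}_j \otimes 1 \otimes \cdots \otimes 1$, with $\boldsymbol{\alpha}_j$ in slot $j$ and every other slot the $1\times 1$ scalar $1$. Then
\[
\big( \boldsymbol{\alpha}_n \otimes \cdots \otimes \mathbf{I}_{d_j} \otimes \cdots \otimes \boldsymbol{\alpha}_1 \big)\big( 1 \otimes \cdots \otimes \boldsymbol{\alpha}_j \otimes \cdots \otimes 1 \big) = (\boldsymbol{\alpha}_n \cdot 1) \otimes \cdots \otimes (\mathbf{I}_{d_j}\boldsymbol{\alpha}_j) \otimes \cdots \otimes (\boldsymbol{\alpha}_1 \cdot 1),
\]
where the inner dimensions agree because in each slot $i\neq j$ the $d_i\times 1$ block $\boldsymbol{\alpha}_i$ meets the $1\times 1$ block $1$, while in slot $j$ the $d_j\times d_j$ identity meets $\boldsymbol{\alpha}_j \in \R^{d_j}$. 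Simplifying $\boldsymbol{\alpha}_i \cdot 1 = \boldsymbol{\alpha}_i$ and $\mathbf{I}_{d_j}\boldsymbol{\alpha}_j = \boldsymbol{\alpha}_j$ recovers $\boldsymbol{\alpha}_n \otimes \cdots \otimes \boldsymbol{\alpha}_1$, which by the first equality equals $\vecc{\mathop{\circ}_{i=1}^n \boldsymbol{\alpha}_i}$. Since $j$ was arbitrary, this holds for each $j=1,\ldots,n$.

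The main care point is the bookkeeping of ordering conventions in the first equality: the paper vectorizes in direct lexicographic order (first mode fastest), which is precisely why the Kronecker product appears in \emph{reversed} order $\boldsymbol{\alpha}_n \otimes \cdots \otimes \boldsymbol{\alpha}_1$ rather than the natural order, and keeping the stride computation consistent between the two sides is the only place an error could slip in. The second equality is then a routine application of the mixed-product rule once the block dimensions are checked. An alternative is induction on $n$, peeling off the outermost factor via the identity $\vecc{\mathcal{X} \circ \boldsymbol{\alpha}_n} = \boldsymbol{\alpha}_n \otimes \vecc{\mathcal{X}}$, but the direct entrywise comparison is cleaner and avoids repeated reshaping arguments.
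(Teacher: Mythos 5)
Your proof is correct, and it reaches the same identities by the same underlying idea as the paper — direct verification from the definitions — but your execution is uniform in $n$ where the paper's is example-based. The paper proves the case $n=2$ via $\vecc{\boldsymbol{\alpha}_1 \circ \boldsymbol{\alpha}_2} = \vecc{\boldsymbol{\alpha}_1\boldsymbol{\alpha}_2'} = \boldsymbol{\alpha}_2 \otimes \boldsymbol{\alpha}_1$, then writes out the $n=3$ case entry by entry and asserts the factorizations $(\boldsymbol{\alpha}_3 \otimes \boldsymbol{\alpha}_2 \otimes \mathbf{I}_{d_1})\boldsymbol{\alpha}_1$, $(\boldsymbol{\alpha}_3 \otimes \mathbf{I}_{d_2} \otimes \boldsymbol{\alpha}_1)\boldsymbol{\alpha}_2$, $(\mathbf{I}_{d_3} \otimes \boldsymbol{\alpha}_2 \otimes \boldsymbol{\alpha}_1)\boldsymbol{\alpha}_3$ directly, remarking only that ``an analogous proof holds for $n>3$.'' You instead give the general-$n$ argument in one pass: the explicit stride bijection $\ell = 1 + \sum_{m=1}^n (i_m-1)\prod_{p=1}^{m-1} d_p$ matching the paper's first-mode-fastest vectorization convention against the reversed Kronecker ordering (which is exactly the bookkeeping the paper leaves implicit in its coordinate expansion), and then the mixed-product property with $1\times 1$ slots, $\boldsymbol{\alpha}_j = 1 \otimes \cdots \otimes \boldsymbol{\alpha}_j \otimes \cdots \otimes 1$, to obtain the second equality for every $j$ simultaneously rather than checking each placement of $\mathbf{I}_{d_j}$ by hand. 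What each approach buys: the paper's version is concrete and verifiable at a glance for the low-order cases it actually uses later (the $J=4$ posterior computations in Appendix B); yours eliminates the ``analogous for $n>3$'' handwave and isolates the two precise tools — the vectorization convention and the mixed-product rule — on which the identity rests, which also makes clear why the Kronecker factors appear in reversed order. Your conformability check in slot $j$ ($d_j\times d_j$ against $d_j\times 1$) and in slots $i\neq j$ ($d_i\times 1$ against $1\times 1$) is exactly what is needed for the iterated mixed-product step, so there is no gap.
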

\begin{proof}
The result follows from the definitions of vectorisation operator and outer product. For $n=2$, the result follows directly from
\begin{equation*}
\vecc{\boldsymbol{\alpha}_1 \circ \boldsymbol{\alpha}_2} = \vecc{\boldsymbol{\alpha}_1 \boldsymbol{\alpha}_2'} = \boldsymbol{\alpha}_2 \otimes \boldsymbol{\alpha}_1 = (\boldsymbol{\alpha}_2 \otimes \mathbf{I}_{d_1})\boldsymbol{\alpha}_1 = (\mathbf{I}_{d_2} \otimes \boldsymbol{\alpha}_1)\boldsymbol{\alpha}_2.
\end{equation*}
For $n > 2$ consider, without loss of generality, $n=3$ (an analogous proof holds for $n>3$). Then, from the definitions of outer product and Kronecker product we have
\begin{align*}
 & \hspace*{-30pt}   \vecc{\boldsymbol{\alpha}_1 \circ \boldsymbol{\alpha}_2 \circ \boldsymbol{\alpha}_3} = \\
 & \hspace*{-30pt}   = (\boldsymbol{\alpha}_1' \cdot \alpha_{2,1} \alpha_{3,1}, \ldots, \boldsymbol{\alpha}_1' \cdot \alpha_{2,d_2} \alpha_{3,1}, \boldsymbol{\alpha}_1' \cdot \alpha_{2,1} \alpha_{3,2}, \ldots, \boldsymbol{\alpha}_1' \cdot \alpha_{2,d_2} \alpha_{3,2}, \ldots, \boldsymbol{\alpha}_1' \cdot \alpha_{2,d_2} \alpha_{3,d_3})' \\
 & \hspace*{-30pt}   = \boldsymbol{\alpha}_3 \otimes \boldsymbol{\alpha}_2 \otimes \boldsymbol{\alpha}_1 = (\boldsymbol{\alpha}_3 \otimes \boldsymbol{\alpha}_2 \otimes \mathbf{I}_{d_1})\boldsymbol{\alpha}_1 = (\boldsymbol{\alpha}_3 \otimes \mathbf{I}_{d_2} \otimes \boldsymbol{\alpha}_1)\boldsymbol{\alpha}_2 = (\mathbf{I}_{d_3} \otimes \boldsymbol{\alpha}_2 \otimes \boldsymbol{\alpha}_1)\boldsymbol{\alpha}_3.
\end{align*}
\end{proof}

\begin{proof}[Proof of Proposition \ref{proposition:ART_stationatity}]
Denote with $L$ the lag operator, s.t. $L \mathcal{Y}_t = \mathcal{Y}_{t-1}$, by properties of the contracted product in Lemma \ref{lemma:contracted_properties}, case (iv), we get $(\mathcal{I} -\widetilde{\mathcal{A}}_1 L) \bar{\times}_N \mathcal{Y}_t = \widetilde{\mathcal{A}}_0 + \widetilde{\mathcal{B}} \bar{\times}_M \mathcal{X}_t + \mathcal{E}_t$.
We apply to both sides the operator $(\mathcal{I} + \widetilde{\mathcal{A}}_1 L + \widetilde{\mathcal{A}}_1^2 L^2 + \ldots + \widetilde{\mathcal{A}}_1^{t-1} L^{t-1})$, take $t\to \infty$, and get
\begin{align*}
\lim_{t\to \infty} (\mathcal{I}-\widetilde{\mathcal{A}}_1^{t} L^{t}) \bar{\times}_N \mathcal{Y}_t = \Big( \sum_{k=0}^\infty \widetilde{\mathcal{A}}_1^k L^k \Big) \bar{\times}_N (\widetilde{\mathcal{A}}_0 + \widetilde{\mathcal{B}} \bar{\times}_M \mathcal{X}_t + \mathcal{E}_t).
\end{align*}
From \cite{Behera19DrazinInverse_tensor_even}, if $\rho(\widetilde{\mathcal{A}}_1) < 1$ and $\mathcal{Y}_0$ is finite a.s., then $\lim_{t\to \infty} \widetilde{\mathcal{A}}_1^{t} \bar{\times}_N \mathcal{Y}_{0} = \mathcal{O}$ and the operator $\sum_{k=0}^\infty \widetilde{\mathcal{A}}_1^k L^k$ applied to a sequence $\mathcal{Y}_t$ s.t. $|\mathcal{Y}_{\mathbf{i},t}| < c$ a.s. $\forall \, \mathbf{i}$ converges to the inverse operator $(\mathcal{I} -\widetilde{\mathcal{A}}_1 L)^{-1}$. By the properties of the contracted product we get
\begin{align*}
\mathcal{Y}_t & = \sum_{k=0}^\infty \widetilde{\mathcal{A}}_1^k \bar{\times}_N (L^k \widetilde{\mathcal{A}}_0) + \sum_{k=0}^\infty (\widetilde{\mathcal{A}}_1^k \bar{\times}_N \widetilde{\mathcal{B}}) \bar{\times}_M (L^k \mathcal{X}_t) + \sum_{k=0}^\infty \widetilde{\mathcal{A}}_1^k \bar{\times}_N (L^k \mathcal{E}_t) \\
 & = (\mathcal{I} - \widetilde{\mathcal{A}}_1 L)^{-1} \bar{\times}_N \widetilde{\mathcal{A}}_0 + \sum_{k=0}^\infty \widetilde{\mathcal{A}}_1^k \bar{\times}_N \widetilde{\mathcal{B}} \bar{\times}_M \mathcal{X}_{t-k} + \sum_{k=0}^\infty \widetilde{\mathcal{A}}_1^k \bar{\times}_N \mathcal{E}_{t-k} \, .
\end{align*}
From the assumption $\mathcal{E}_t \distas{iid} \mathcal{N}_{I_1,\ldots,I_N}(\mathcal{O},\Sigma_1,\ldots,\Sigma_N)$, we know that $\E(\mathcal{Y}_t) = \mathcal{Y}_0$, which is finite. Consider the auto-covariance at lag $h \geq 1$. From Lemma \ref{lemma:contracted_properties}, we have $\E\big( \big( \mathcal{Y}_t -\E(\mathcal{Y}_t) \big) \circ \big( \mathcal{Y}_{t-h} -\E(\mathcal{Y}_{t-h}) \big) \big) = \E\big( \mathcal{Y}_t \circ \mathcal{Y}_{t-h} \big) = \E\big( \mathcal{Y}_t \bar{\times}_1 \mathcal{Y}_{t-h}^T \big)$.
Using the infinite moving average representation for $\mathcal{Y}_t$, we get
\begin{align*}
\hspace*{-30pt}
\E\Big( \mathcal{Y}_t \bar{\times}_1 \mathcal{Y}_{t-h}^T \Big)  
 & = \E\Big( \big( \sum_{k=0}^{h-1} \mathcal{A}^k \bar{\times}_N \mathcal{E}_{t-k} + \sum_{k=0}^\infty \mathcal{A}^{k+h} \bar{\times}_N \mathcal{E}_{t-k-h} \big) \bar{\times}_1 \big( \sum_{k=0}^\infty \mathcal{A}^k \bar{\times}_N \mathcal{E}_{t-k-h} \big)^T \Big) \\
 & = \E\Big( \big( \sum_{k=0}^\infty\mathcal{A}^{k+h} \bar{\times}_N \mathcal{E}_{t-k-h} \big) \bar{\times}_1 \big( \sum_{k=0}^\infty \mathcal{E}_{t-k-h}^T \bar{\times}_N (\mathcal{A}^T)^k \big) \Big),
\end{align*}
where we used the assumption of independence of $\mathcal{E}_t, \mathcal{E}_{t-h}$, for any $h \geq 0$, and the fact that $(\mathcal{X} \bar{\times}_N \mathcal{Y})^T = (\mathcal{Y}^T \bar{\times}_N \mathcal{X}^T)$. Using $\E(\mathcal{E}_t) = \mathcal{O}$ and linearity of expectation and of the contracted product we get
\begin{align*}
\E\Big( \mathcal{Y}_t \bar{\times}_1 \mathcal{Y}_{t-h}^T \Big)
 & = \sum_{k=0}^\infty \mathcal{A}^{k+h} \bar{\times}_N \E\Big( \mathcal{E}_{t-k-h} \bar{\times}_1 \mathcal{E}_{t-k-h}^T \Big) \bar{\times}_N (\mathcal{A}^T)^k \\
 & = \sum_{k=0}^\infty \mathcal{A}^{k+h} \bar{\times}_N \boldsymbol{\Sigma} \bar{\times}_N (\mathcal{A}^T)^k 
 = \mathcal{A}^{h} \bar{\times}_N (\mathcal{I}-\mathcal{A} \bar{\times}_N \boldsymbol{\Sigma} \bar{\times}_N \mathcal{A}^T)^{-1},
\end{align*}
where $\E( \mathcal{E}_{t-k-h} \bar{\times}_1 \mathcal{E}_{t-k-h}^T) = \E( \mathcal{E}_{t-k-h} \circ \mathcal{E}_{t-k-h}) = \boldsymbol{\Sigma} = \Sigma_1 \circ \ldots \circ \Sigma_N$.
From the assumption $\rho(\mathcal{A}) < 1$ it follows that the above series converges to a finite limit, which is independent from $t$, thus proving that the process is weakly stationary.
\end{proof}

\begin{proof}[Proof of Proposition \ref{proposition:ART_stationatity_from_VAR}]
From Theorem 3.2, Corollary 3.3 of \cite{Brazell13Solving_MultilinearSystem}, we know that $\T$ is a group (called tensor group) and that the matricization operator $\operatorname{mat}_{1:N,1:N}$ is an isomorphism between $\T$ and the linear group of square matrices of size $I^* = \prod_{n=1}^N I_n$.
Therefore, there exists a one-to-one relationship between the two eigenvalue problems $\mathcal{A} \bar{\times}_N \mathcal{X} = \lambda \mathcal{X}$ and $A\mathbf{x} = \widetilde{\lambda} \mathbf{x}$, where $A = \operatorname{mat}_{1:N,1:N}(\mathcal{A})$. In particular, $\lambda = \widetilde{\lambda}$ and $\mathbf{x} = \vecc{\mathcal{X}}$.
Consequently, $\rho(A) = \rho(\mathcal{A})$ and the result follows for $p=1$ from the fact that $\rho(A) < 1$ is a sufficient condition for the VAR(1) stationarity Proposition 2.1 of \cite{Lutkepohl05VAR_book}.
Since any VAR($p$) and ART($p$) processes can be rewritten as VAR(1) and ART(1), respectively, on an augmented state space, the result follows for any $p \geq 1$.
\end{proof}

\begin{proof}[Proof of Lemma \ref{lemma:ARTp_ART1}]
Consider a ART($p$) process with $\mathcal{Y}_t \in \R^{I_1\times\ldots\times I_N}$ and $p \geq 1$. We define the $(pI_1 \times I_2 \times \ldots \times I_N)$-dimensional tensors $\underline{\mathcal{Y}}_t$ and $\underline{\mathcal{E}}_t$ as $\underline{\mathcal{Y}}_{(k-1)I_1+1:kI_1,:,\ldots,:,t} = \mathcal{Y}_{t-k}$ and $\underline{\mathcal{E}}_{(k-1)I_1+1:kI_1,:,\ldots,:,t} = \mathcal{E}_{t-k}$, for $k=0,\ldots,p$, respectively.
Define the $(pI_1 \times I_2 \times \ldots \times I_N \times pI_1 \times I_2 \times \ldots \times I_N)$-dimensional tensor $\underline{\mathcal{A}}$ as $\underline{\mathcal{A}}_{(1:I_1,:,\ldots,:,(k-1)I_1+1:kI_1,:,\ldots,:} = \mathcal{A}_{k}$, for $k=1,\ldots,p$, $\underline{\mathcal{A}}_{(kI_1+1:(k+1)I_1,:,\ldots,:,(k-1)I_1+1:kI_1,:,\ldots,:} = \mathcal{I}$, for $k=1,\ldots,p-1$
and $0$ elsewhere.
Using this notation, we can rewrite the $(I_1 \times I_2 \times \ldots \times I_N)$-dimensional ART($p$) process $\mathcal{Y}_t = \sum_{k=1}^p \mathcal{A}_{k} \bar{\times}_N \mathcal{Y}_{t-j} + \mathcal{E}_t$ as the $(pI_1 \times I_2 \times \ldots \times I_N)$-dimensional ART(1) process $\underline{\mathcal{Y}}_t = \underline{\mathcal{A}} \bar{\times}_N \underline{\mathcal{Y}}_{t-1} + \underline{\mathcal{E}}_t$.
\end{proof}

\section{Computational Details} \label{sec:apdx_computational_tensor}
This appendix shows the derivation of the results. See the supplement for details.

\subsection{Full conditional distribution of $\phi_r$}
Define $C_r = \sum_{j=1}^J \boldsymbol{\beta}_j^{(r)'} W_{j,r}^{-1} \boldsymbol{\beta}_j^{(r)}$ and note that, since $\sum_{r=1}^R \phi_r =1$, it holds $\sum_{r=1}^R b_{\tau} \tau \phi_r = b_{\tau} \tau$. The posterior full conditional distribution of $\boldsymbol{\phi}$, integrating out $\tau$, is
\begin{align*}
 & p(\boldsymbol{\phi}|\mathcal{B},\mathbf{W}) \propto \pi(\boldsymbol{\phi}) \int_0^{+\infty} p(\mathcal{B}|\mathbf{W},\boldsymbol{\phi},\tau) \pi(\tau) \mathrm{d}\tau \\
 & \; \propto \prod_{r=1}^R \phi_r^{\alpha-1} \int_0^{+\infty} \Big( \prod_{r=1}^R \prod_{j=1}^J (\tau \phi_r)^{-I_j/2} \exp\Big( -\frac{1}{2\tau \phi_r} \boldsymbol{\beta}_j^{(r)'} W_{j,r}^{-1} \boldsymbol{\beta}_j^{(r)} \Big) \Big) \tau^{a_\tau-1} e^{-b_\tau \tau} \mathrm{d}\tau \\
 & \; \propto \int_0^{+\infty} \Big( \prod_{r=1}^R \phi_r^{\alpha-\frac{I_0}{2}-1} \Big) \tau^{\big( \alpha R -\frac{RI_0}{2} \big) -1} \exp\Big( -\sum_{r=1}^R \Big( \frac{C_r}{2\tau \phi_r} +b_\tau \tau \phi_r \Big) \Big) \mathrm{d}\tau
\end{align*}
where the integrand is the kernel of the GiG for $\psi_r=\tau \phi_r$ in eq. \eqref{eq:posterior_psi}. Then, by renormalizing, $\phi_r = \psi_r / \sum_{l=1}^R \psi_l$.

\subsection{Full conditional distribution of $\tau$}
The posterior full conditional distribution of $\tau$ is
\begin{align*}
p(\tau|\mathcal{B},\mathbf{W},\boldsymbol{\phi}) & \propto \tau^{a_\tau -1} e^{-b_\tau \tau} \Big( \prod_{r=1}^R (\tau \phi_r)^{-\frac{I_0}{2}} \exp\Big( -\dfrac{1}{2\tau \phi_r} \sum_{j=1}^4 \boldsymbol{\beta}_j^{(r)'} (W_{j,r})^{-1} \boldsymbol{\beta}_j^{(r)} \Big) \Big) \\
 & \propto \tau^{a_\tau -\frac{R I_0}{2} -1} \exp\Big( -b_\tau \tau - \tau^{-1} \sum_{r=1}^R \frac{C_r}{\phi_r} \Big),
\end{align*}
which is the kernel of the GiG in eq. \eqref{eq:posterior_tau}.

\subsection{Full conditional distribution of $\lambda_{j,r}$}
The full conditional distribution of $\lambda_{j,r}$, integrating out $W_{j,r}$, is
\begin{align*}
p(\lambda_{j,r}|\boldsymbol{\beta}_j^{(r)},\phi_r,\tau) & \propto \lambda_{j,r}^{a_\lambda -1} e^{-b_\lambda \lambda_{j,r}} \prod_{p=1}^{I_j} \frac{\lambda_{j,r}}{2\sqrt{\tau \phi_r}} \exp\Big( -\frac{\abs{\beta_{j,p}^{(r)}}}{ (\lambda_{j,r}/\sqrt{\tau \phi_r})^{-1} } \Big) \\
 & \propto \lambda_{j,r}^{(a_\lambda+I_j)-1} \exp\Big( -\Big( b_\lambda +\frac{\norm{\boldsymbol{\beta}_j^{(r)}}_1}{\sqrt{\tau \phi_r}} \Big) \lambda_{j,r} \Big),
\end{align*}
which is the kernel of the Gamma in eq. \eqref{eq:posterior_w}.

\subsection{Full conditional distribution of $w_{j,r,p}$}
The posterior full conditional distribution of $w_{j,r,p}$ is
\begin{align*}
p(w_{j,r,p}|\boldsymbol{\beta}_j^{(r)}, \lambda_{j,r},\phi_r,\tau) & \propto w_{j,r,p}^{-\frac{1}{2}} \exp\Big( -\frac{\beta_{j,p}^{(r)^2} w_{j,r,p}^{-1}}{2\tau \phi_r} \Big) \exp\Big( -\frac{\lambda_{j,r}^2 w_{j,r,p}}{2} \Big) \\
 & \propto w_{j,r,p}^{-\frac{1}{2}} \exp\Big( -\frac{\lambda_{j,r}^2}{2}w_{j,r,p} -\frac{\beta_{j,p}^{(r)^2}}{2\tau \phi_r} w_{j,r,p}^{-1} \Big),
\end{align*}
which is the kernel of the GiG in eq. \eqref{eq:posterior_w}.

\subsection{Full conditional distributions of PARAFAC marginals}
Consider the model in eq. \eqref{eq:model_final}, it holds
\begin{align*}
\vecc{\mathcal{Y}_t} & = \vecc{ \mathcal{B}_{-r} \times_4 \, \mathbf{x}_t } + \vecc{ \mathcal{B}_r \times_4 \, \mathbf{x}_t } + \vecc{\mathcal{E}_t},
\end{align*}
with $\vecc{\mathcal{B}_r \times_4 \, \mathbf{x}_t} = \vecc{ \boldsymbol{\beta}_1^{(r)} \circ \boldsymbol{\beta}_2^{(r)} \circ \boldsymbol{\beta}_3^{(r)} } \cdot \mathbf{x}_{t}' \boldsymbol{\beta}_4^{(r)}$.
From Lemma \ref{lemma:outer_product_vec}, we have
\begin{align}
\label{eq:apdx_beta4}
\vecc{ \boldsymbol{\beta}_1^{(r)} \circ \boldsymbol{\beta}_2^{(r)} \circ \boldsymbol{\beta}_3^{(r)} } \cdot \mathbf{x}_{t}' \boldsymbol{\beta}_4^{(r)} & = \vecc{ \boldsymbol{\beta}_1^{(r)} \circ \boldsymbol{\beta}_2^{(r)} \circ \boldsymbol{\beta}_3^{(r)} } \cdot \mathbf{x}_{t}' \boldsymbol{\beta}_4^{(r)} = \mathbf{b}_4 \boldsymbol{\beta}_4^{(r)} \\
\label{eq:apdx_beta1}
 & = \langle \boldsymbol{\beta}_4^{(r)}, \mathbf{x}_t \rangle \big( \boldsymbol{\beta}_3^{(r)} \otimes \boldsymbol{\beta}_2^{(r)} \otimes \mathbf{I}_I \big) \boldsymbol{\beta}_1^{(r)} = \mathbf{b}_1 \boldsymbol{\beta}_1^{(r)} \\
\label{eq:apdx_beta2}
 & = \langle \boldsymbol{\beta}_4^{(r)}, \mathbf{x}_t \rangle \big( \boldsymbol{\beta}_3^{(r)} \otimes \mathbf{I}_J \otimes \boldsymbol{\beta}_1^{(r)} \big) \boldsymbol{\beta}_2^{(r)} = \mathbf{b}_2 \boldsymbol{\beta}_2^{(r)} \\
\label{eq:apdx_beta3}
 & = \langle \boldsymbol{\beta}_4^{(r)}, \mathbf{x}_t \rangle \big( \mathbf{I}_K \otimes \boldsymbol{\beta}_2^{(r)} \otimes \boldsymbol{\beta}_1^{(r)} \big) \boldsymbol{\beta}_3^{(r)} = \mathbf{b}_3 \boldsymbol{\beta}_3^{(r)}.
\end{align}

Define with $\mathbf{y}_t = \vecc{\mathcal{Y}_t}$ and $\boldsymbol{\Sigma}^{-1} = \Sigma_3^{-1} \otimes \Sigma_2^{-1} \otimes \Sigma_1^{-1}$, we obtain
\begin{align}
\notag
 & L(\mathbf{Y} |\boldsymbol{\theta}) \propto \exp\Big( -\frac{1}{2} \sum_{t=1}^T \vecc{\tilde{\mathcal{E}}_t}' \big(\Sigma_3^{-1} \otimes \Sigma_2^{-1} \otimes \Sigma_1^{-1} \big) \vecc{\tilde{\mathcal{E}}_t} \Big) \\ \notag
 & \propto \exp\Bigg( -\frac{1}{2} \sum_{t=1}^T -2\big( \mathbf{y}_t' -\vecc{\mathcal{B}_{-r}\times_4 \mathbf{x}_t}' \big) \boldsymbol{\Sigma}^{-1}\vecc{\boldsymbol{\beta}_1^{(r)} \circ \boldsymbol{\beta}_2^{(r)} \circ \boldsymbol{\beta}_3^{(r)}} \langle \boldsymbol{\beta}_4^{(r)}, \mathbf{x}_t \rangle \\
 & \quad +\vecc{\boldsymbol{\beta}_1^{(r)} \circ \boldsymbol{\beta}_2^{(r)} \circ \boldsymbol{\beta}_3^{(r)}}' \langle \boldsymbol{\beta}_4^{(r)}, \mathbf{x}_t \rangle \boldsymbol{\Sigma}^{-1} \vecc{\boldsymbol{\beta}_1^{(r)} \circ \boldsymbol{\beta}_2^{(r)} \circ \boldsymbol{\beta}_3^{(r)}} \langle \boldsymbol{\beta}_4^{(r)}, \mathbf{x}_t \rangle \Bigg).
\label{eq:apdx_like_all}
\end{align}
Consider the case $j=1$. By exploiting eq. \eqref{eq:apdx_beta1} we get
\begin{align}
\notag
 & L(\mathbf{Y}|\boldsymbol{\theta}) \propto \exp\Bigg( -\frac{1}{2} \sum_{t=1}^T \boldsymbol{\beta}_1^{(r)'} \langle \boldsymbol{\beta}_4^{(r)}, \mathbf{x}_t \rangle^2 \big( \boldsymbol{\beta}_3^{(r)} \otimes \boldsymbol{\beta}_2^{(r)} \otimes \mathbf{I}_{I_1} \big)' \boldsymbol{\Sigma}^{-1} \big( \boldsymbol{\beta}_3^{(r)} \otimes \boldsymbol{\beta}_2^{(r)} \otimes \mathbf{I}_{I_1} \big)  \\ \notag
 & \quad \cdot \boldsymbol{\beta}_1^{(r)} -2\big( \mathbf{y}_t' -\vecc{\mathcal{B}_{-r}\times_4 \mathbf{x}_t}' \big) \boldsymbol{\Sigma}^{-1} \langle \boldsymbol{\beta}_4^{(r)}, \mathbf{x}_t \rangle \big( \boldsymbol{\beta}_3^{(r)} \otimes \boldsymbol{\beta}_2^{(r)} \otimes \mathbf{I}_{I_1} \big) \boldsymbol{\beta}_1^{(r)} \Bigg) \\
 & = \exp\Big( -\frac{1}{2} \boldsymbol{\beta}_1^{(r)'} \mathbf{S}_1^L \boldsymbol{\beta}_1^{(r)} -2\mathbf{m}_1^L \boldsymbol{\beta}_1^{(r)} \Big).
\label{eq:apdx_like_beta1}
\end{align}
Consider the case $j=2$. From eq. \eqref{eq:apdx_beta2} we get
\begin{align}
\notag
 & L(\mathbf{Y}|\boldsymbol{\theta}) \propto \exp\Bigg( -\frac{1}{2} \sum_{t=1}^T \boldsymbol{\beta}_2^{(r)'} \langle \boldsymbol{\beta}_4^{(r)}, \mathbf{x}_t \rangle^2 \big( \boldsymbol{\beta}_3^{(r)} \otimes \mathbf{I}_{I_2} \otimes \boldsymbol{\beta}_1^{(r)} \big) \boldsymbol{\Sigma}^{-1} \big( \boldsymbol{\beta}_3^{(r)} \otimes \mathbf{I}_{I_2} \otimes \boldsymbol{\beta}_1^{(r)} \big) \\ \notag
 & \quad \cdot \boldsymbol{\beta}_2^{(r)} -2\big( \mathbf{y}_t' -\vecc{\mathcal{B}_{-r}\times_4 \mathbf{x}_t}' \big) \boldsymbol{\Sigma}^{-1} \langle \boldsymbol{\beta}_4^{(r)}, \mathbf{x}_t \rangle \big( \boldsymbol{\beta}_3^{(r)} \otimes \mathbf{I}_{I_2} \otimes \boldsymbol{\beta}_1^{(r)} \big) \boldsymbol{\beta}_2^{(r)} \Bigg) \\
 & = \exp\Big( -\frac{1}{2} \boldsymbol{\beta}_2^{(r)'} \mathbf{S}_2^L \boldsymbol{\beta}_2^{(r)} -2\mathbf{m}_2^L \boldsymbol{\beta}_2^{(r)} \Big).
\label{eq:apdx_like_beta2}
\end{align}
Consider the case $j=3$, by exploiting eq. \eqref{eq:apdx_beta3} we get
\begin{align}
\notag
 & L(\mathbf{Y}|\boldsymbol{\theta}) \propto \exp\Bigg( -\frac{1}{2} \sum_{t=1}^T \boldsymbol{\beta}_3^{(r)'} \langle \boldsymbol{\beta}_4^{(r)}, \mathbf{x}_t \rangle^2 \big( \mathbf{I}_{I_3} \otimes \boldsymbol{\beta}_2^{(r)} \otimes \boldsymbol{\beta}_1^{(r)} \big) \boldsymbol{\Sigma}^{-1} \big( \mathbf{I}_{I_3} \otimes \boldsymbol{\beta}_2^{(r)} \otimes \boldsymbol{\beta}_1^{(r)} \big) \\ \notag
 & \quad \cdot \boldsymbol{\beta}_3^{(r)} -2\big( \mathbf{y}_t' -\vecc{\mathcal{B}_{-r}\times_4 \mathbf{x}_t}' \big) \boldsymbol{\Sigma}^{-1} \langle \boldsymbol{\beta}_4^{(r)}, \mathbf{x}_t \rangle \big( \mathbf{I}_{I_3} \otimes \boldsymbol{\beta}_2^{(r)} \otimes \boldsymbol{\beta}_1^{(r)} \big) \boldsymbol{\beta}_3^{(r)} \Bigg) \\
 & = \exp\Big( -\frac{1}{2} \boldsymbol{\beta}_3^{(r)'} \mathbf{S}_3^L \boldsymbol{\beta}_3^{(r)} -2\mathbf{m}_3^L \boldsymbol{\beta}_3^{(r)} \Big).
\label{eq:apdx_like_beta3}
\end{align}
Finally, in the case $j=4$. From eq. \eqref{eq:apdx_like_all} we get
\begin{align}
\notag
 & L(\mathbf{Y}|\boldsymbol{\theta}) \propto \exp\Bigg( -\frac{1}{2} \sum_{t=1}^T -2\big( \mathbf{y}_t' -\vecc{\mathcal{B}_{-r}\times_4 \mathbf{x}_t}' \big) \boldsymbol{\Sigma}^{-1}\vecc{\boldsymbol{\beta}_1^{(r)} \circ \boldsymbol{\beta}_2^{(r)} \circ \boldsymbol{\beta}_3^{(r)}} \\ \notag
 & \quad \cdot \mathbf{x}_t' \boldsymbol{\beta}_4^{(r)} +\boldsymbol{\beta}_4^{(r)'} \mathbf{x}_t \vecc{\boldsymbol{\beta}_1^{(r)} \circ \boldsymbol{\beta}_2^{(r)} \circ \boldsymbol{\beta}_3^{(r)}}' \boldsymbol{\Sigma}^{-1} \vecc{\boldsymbol{\beta}_1^{(r)} \circ \boldsymbol{\beta}_2^{(r)} \circ \boldsymbol{\beta}_3^{(r)}} \mathbf{x}_t' \boldsymbol{\beta}_4^{(r)} \Bigg) \\
 & = \exp\Big( -\frac{1}{2} \boldsymbol{\beta}_4^{(r)'} \mathbf{S}_4^L \boldsymbol{\beta}_4^{(r)} -2\mathbf{m}_4^L \boldsymbol{\beta}_4^{(r)} \Big).
\label{eq:apdx_like_beta4}
\end{align}

\subsubsection{Full conditional distribution of $\beta_1^{(r)}$}
From eq. \eqref{eq:prior_beta}-\eqref{eq:apdx_like_beta1}, the posterior full conditional distribution of $\boldsymbol{\beta}_1^{(r)}$ is
\begin{align*}
p(\boldsymbol{\beta}_1^{(r)} | -) & \propto \exp\Big( -\frac{1}{2} \boldsymbol{\beta}_1^{(r)'} \mathbf{S}_1^L \boldsymbol{\beta}_1^{(r)} -2\mathbf{m}_1^L \boldsymbol{\beta}_1^{(r)} \Big) \cdot \exp\Big( -\frac{1}{2} \boldsymbol{\beta}_1^{(r)'} (W_{1,r} \phi_r \tau)^{-1} \boldsymbol{\beta}_1^{(r)} \Big) \\
 & = \exp\Big( -\frac{1}{2} \big( \boldsymbol{\beta}_1^{(r)'} \big( \mathbf{S}_1^L + (W_{1,r} \phi_r \tau)^{-1} \big) \boldsymbol{\beta}_1^{(r)} -2\mathbf{m}_1^L \boldsymbol{\beta}_1^{(r)} \big) \Big),
\end{align*}
which is the kernel of the Normal in eq. \eqref{eq:posterior_betas}.

\subsubsection{Full conditional distribution of $\beta_2^{(r)}$}
From eq. \eqref{eq:prior_beta}-\eqref{eq:apdx_like_beta2}, the posterior full conditional distribution of $\boldsymbol{\beta}_2^{(r)}$ is
\begin{align*}
p(\boldsymbol{\beta}_2^{(r)} | -) & \propto \exp\Big( -\frac{1}{2} \boldsymbol{\beta}_2^{(r)'} \mathbf{S}_2^L \boldsymbol{\beta}_2^{(r)} -2\mathbf{m}_2^L \boldsymbol{\beta}_2^{(r)} \Big) \cdot \exp\Big( -\frac{1}{2} \boldsymbol{\beta}_2^{(r)'} (W_{2,r} \phi_r \tau)^{-1} \boldsymbol{\beta}_2^{(r)} \Big) \\
 & = \exp\Big( -\frac{1}{2} \big( \boldsymbol{\beta}_2^{(r)'} \big( \mathbf{S}_2^L + (W_{2,r} \phi_r \tau)^{-1} \big) \boldsymbol{\beta}_2^{(r)} -2\mathbf{m}_2^L \boldsymbol{\beta}_2^{(r)} \big) \Big),
\end{align*}
which is the kernel of the Normal in eq. \eqref{eq:posterior_betas}.

\subsubsection{Full conditional distribution of $\beta_3^{(r)}$}
From eq. \eqref{eq:prior_beta}-\eqref{eq:apdx_like_beta3}, the posterior full conditional distribution of $\boldsymbol{\beta}_3^{(r)}$ is
\begin{align*}
p(\boldsymbol{\beta}_3^{(r)} | -) & \propto \exp\Big( -\frac{1}{2} \boldsymbol{\beta}_3^{(r)'} \mathbf{S}_3^L \boldsymbol{\beta}_3^{(r)} -2\mathbf{m}_3^L \boldsymbol{\beta}_3^{(r)} \Big) \cdot \exp\Big( -\frac{1}{2} \boldsymbol{\beta}_3^{(r)'} (W_{3,r} \phi_r \tau)^{-1} \boldsymbol{\beta}_3^{(r)} \Big) \\
 & = \exp\Big( -\frac{1}{2} \big( \boldsymbol{\beta}_3^{(r)'} \big( \mathbf{S}_3^L + (W_{3,r} \phi_r \tau)^{-1} \big) \boldsymbol{\beta}_3^{(r)} -2\mathbf{m}_3^L \boldsymbol{\beta}_3^{(r)} \big) \Big),
\end{align*}
which is the kernel of the Normal in eq. \eqref{eq:posterior_betas}.

\subsubsection{Full conditional distribution of $\beta_4^{(r)}$}
From eq. \eqref{eq:prior_beta}-\eqref{eq:apdx_like_beta4}, the posterior full conditional distribution of $\boldsymbol{\beta}_4^{(r)}$ is
\begin{align*}
p(\boldsymbol{\beta}_4^{(r)} | -) & \propto \exp\Big( -\frac{1}{2} \boldsymbol{\beta}_4^{(r)'} \mathbf{S}_4^L \boldsymbol{\beta}_4^{(r)} -2\mathbf{m}_4^L \boldsymbol{\beta}_4^{(r)} \Big) \cdot \exp\Big( -\frac{1}{2} \boldsymbol{\beta}_4^{(r)'} (W_{4,r} \phi_r \tau)^{-1} \boldsymbol{\beta}_4^{(r)} \Big) \\
 & = \exp\Big( -\frac{1}{2} \big( \boldsymbol{\beta}_4^{(r)'} \big( \mathbf{S}_4^L + (W_{4,r} \phi_r \tau)^{-1} \big) \boldsymbol{\beta}_4^{(r)} -2\mathbf{m}_4^L \boldsymbol{\beta}_4^{(r)} \big) \Big),
\end{align*}
which is the kernel of the Normal in eq. \eqref{eq:posterior_betas}.

\subsection{Full conditional distribution of $\Sigma_1$}
Define $\tilde{\mathcal{E}}_t = \mathcal{Y}_t -\mathcal{B}\times_4 \mathbf{x}_t$, $\tilde{\mathbf{E}}_{(1),t} = \textnormal{mat}_{(3)}(\tilde{\mathcal{E}}_t)$, $\mathbf{Z}_1 = \Sigma_3^{-1} \otimes \Sigma_2^{-1}$ and $S_1 =\sum_{t=1}^T \tilde{\mathbf{E}}_{(1),t} \mathbf{Z}_1 \tilde{\mathbf{E}}_{(1),t}'$. The posterior full conditional distribution of $\Sigma_1$ is
\begin{align*}
p(\Sigma_1 | -) & \propto \frac{\exp\Big( -\frac{1}{2} \big( \tr{\gamma\Psi_1 \Sigma_1^{-1}} + \sum_{t=1}^T \tr{\tilde{\mathbf{E}}_{(1),t} \mathbf{Z}_1 \tilde{\mathbf{E}}_{(1),t}' \Sigma_1^{-1}} \big) \Big)}{\abs{\Sigma_1}^{\frac{\nu_1+I_1+T I_2 I_3+1}{2}}} \\
 & \propto \abs{\Sigma_1}^{-\frac{(\nu_1+T I_2 I_3)+I_1+1}{2}} \exp\Big( -\frac{1}{2} \tr{(\gamma\Psi_1 +S_1) \Sigma_1^{-1}} \Big),
\end{align*}
which is the kernel of the Inverse Wishart in eq. \eqref{eq:posterior_Sigmar}.

\subsection{Full conditional distribution of $\Sigma_2$}
Define $\tilde{\mathcal{E}}_t = \mathcal{Y}_t -\mathcal{B}\times_4 \mathbf{x}_t$, $\tilde{\mathbf{E}}_{(2),t} = \textnormal{mat}_{(2)}(\tilde{\mathcal{E}}_t)$ and $S_2 = \sum_{t=1}^T \tilde{\mathbf{E}}_{(2),t} (\Sigma_3^{-1} \otimes \Sigma_1^{-1}) \tilde{\mathbf{E}}_{(2),t}'$. The posterior full conditional distribution of $\Sigma_2$ is 
\begin{align*}
p(\Sigma_2 | -) & \propto \frac{\exp\Big( -\frac{1}{2} \big( \tr{\gamma\Psi_2 \Sigma_2^{-1}} + \tr{\sum_{t=1}^T \tilde{\mathbf{E}}_{(2),t} (\Sigma_3^{-1} \otimes \Sigma_1^{-1}) \tilde{\mathbf{E}}_{(2),t}' \Sigma_2^{-1}} \big) \Big)}{\abs{\Sigma_2}^{\frac{\nu_2+I_2+T I_1 I_3+1}{2}}} \\
 & \propto \abs{\Sigma_2}^{-\frac{\nu_2+I_2+T I_1 I_3+1}{2}} \exp\Big( -\frac{1}{2} \tr{\gamma\Psi_2 \Sigma_2^{-1} + S_2 \Sigma_2^{-1}} \Big),
\end{align*}
which is the kernel of the Inverse Wishart in eq. \eqref{eq:posterior_Sigmar}.

\subsection{Full conditional distribution of $\Sigma_3$}
Define $\tilde{\mathcal{E}}_t = \mathcal{Y}_t -\mathcal{B}\times_4 \mathbf{x}_t$, $\tilde{\mathbf{E}}_{(1),t} = \textnormal{mat}_{(1)}(\tilde{\mathcal{E}}_t)$, $\mathbf{Z}_3 = \Sigma_2^{-1} \otimes \Sigma_1^{-1}$ and $S_3 =\sum_{t=1}^T \tilde{\mathbf{E}}_{(1),t} \mathbf{Z}_3 \tilde{\mathbf{E}}_{(1),t}'$. The posterior full conditional distribution of $\Sigma_3$ is
\begin{align*}
p(\Sigma_3 | -) & \propto \frac{\exp\Big( -\frac{1}{2} \big( \tr{\gamma\Psi_3 \Sigma_3^{-1}} + \sum_{t=1}^T \vecc{\tilde{\mathcal{E}}_t}' (\Sigma_3^{-1} \otimes \mathbf{Z}_3) \vecc{\tilde{\mathcal{E}}_t} \big) \Big)}{\abs{\Sigma_3}^{\frac{\nu_3+I_3+T I_1 I_2+1}{2}}} \\
 & \propto \abs{\Sigma_3}^{-\frac{(\nu_3+T I_1 I_2)+I_3+1}{2}} \exp\Big( -\frac{1}{2} \tr{(\gamma\Psi_3 +S_3) \Sigma_3^{-1}} \Big),
\end{align*}
which is the kernel of the Inverse Wishart in eq. \eqref{eq:posterior_Sigmar}.

\subsection{Full conditional distribution of $\gamma$}
The posterior full conditional distribution is
\begin{align*}
p(\gamma | \Sigma_1, \Sigma_2, \Sigma_3) & \propto \prod_{i=1}^3 \abs{\gamma\Psi_i}^{-\frac{\nu_i}{2}} \exp\Big( -\frac{1}{2} \tr{\gamma \Psi_i \Sigma_i^{-1}} \Big) \gamma^{a_\gamma-1} e^{-b_\gamma \gamma} \\
 & \propto \gamma^{a_\gamma-\frac{\sum_{i=1}^3 \nu_i I_i}{2}-1} \exp\Big( -\frac{1}{2} \tr{\sum_{i=1}^3 \Psi_i \Sigma_i^{-1}} -b_\gamma \gamma \Big)
\end{align*}
which is the kernel of the Gamma in eq. \eqref{eq:posterior_gamma}.

\end{document}